\lstdefinelanguage{pseudo}{%
  morekeywords={if,then,else,endif,while,do,done,repeat,until,for,foreach,to,downto,do,end,break,and,or,fail,return,output,result,input,true,false},
  sensitive=false,
  morecomment=[l]{//},
  morecomment=[s]{/*}{*/},
  mathescape=true,
}
\newlength{\listingbelowcaptionskip}
\newcommand{\probitem}{\normalfont\normalcolor\mathversion{normal}}
\newenvironment{problemdef}[1]{%
  \begin{description}
   \item[{\probitem #1:}]
}{%
  \end{description}
}
\newcommand{\algo}[1]{\texttt{\upshape #1}}
\def\Ifstartswithcite#1#2\Endlist#3#4{%
  \ifstrequal{\cite}{#1}{#3}{#4}%
}
\patchcmd{\thmhead@plain}{(#3)}{%
  \expandafter\Ifstartswithcite\@firstofone#3\@empty\@empty\Endlist{#3}{(#3)}%
}{}{}%
\let\thmhead\thmhead@plain
\let\emptyset\varnothing
\let\phi\varphi
\newcommand{\cclass}[1]{\textsf{\textup{#1}}}
\newcommand{\FPT}{\cclass{FPT}\xspace}
\newcommand{\Wone}{\cclass{W[1]}\xspace}
\newcommand{\NP}{\cclass{NP}\xspace}
\theoremstyle{plain}
\newtheorem*{claim}{Claim}
\newtheorem{theorem}{Theorem}[section]
\newtheorem{lemma}[theorem]{Lemma}
\newtheorem{corollary}[theorem]{Corollary}
\theoremstyle{definition}
\newtheorem{definition}[theorem]{Definition}
\theoremstyle{remark}
\newtheorem{remark}[theorem]{Remark}
\theoremstyle{plain}
\newenvironment{proof*}[1]{\begin{proof}}{\end{proof}}
\newenvironment{acknowledgments}
  {\par\vskip 6pt\noindent\textbf{Acknowledgments.}}{\par}
\DeclarePairedDelimiter{\card}{\lvert}{\rvert}
\DeclarePairedDelimiter{\paren}{\lparen}{\rparen}
\DeclarePairedDelimiter{\SK@setone}{\lbrace}{\rbrace}
\DeclarePairedDelimiterX{\SK@settwo}[2]{\lbrace}{\rbrace}{#1\:\delimsize\vert\:#2}
\newcommand{\set}{\@ifstar{\SK@set{*}}{\SK@oset}}
\newcommand{\SK@oset}[1][]{\ifstrempty{#1}{\SK@set{}}{\SK@set{[#1]}}}
\newcommand{\SK@set}[2]{\@gifnextchar\bgroup{\SK@@set{#1}{#2}}{\SK@setone#1{#2}}}
\newcommand{\SK@@set}[3]{\ifstrempty{#1}{%
    \SK@settwo{#2}{#3}%
  }{%
    \SK@settwo#1{#2}{\begin{array}{@{}l@{}}#3\end{array}}%
  }}
\def\@gifnextchar#1#2#3{\let\@tempe#1\def\@tempa{#2}\def\@tempb{#3}%
  \futurelet\@tempc\@gifnch}
\def\@gifnch{\ifx\@tempc\@sptoken\let\@tempd\@tempb%
  \else\ifx\@tempc\@tempe\let\@tempd\@tempa\else\let\@tempd\@tempb\fi\fi\@tempd}
\DeclareMathOperator{\support}{supp}
\DeclareMathOperator{\compl}{compl}
\DeclareMathOperator{\Sym}{Sym}
\DeclareMathOperator{\Alt}{Alt}
\DeclareMathOperator{\id}{id}
\DeclareMathOperator{\Ker}{Ker}
\DeclareMathOperator{\poly}{poly}
\DeclareMathOperator{\Iso}{Iso}
\DeclareMathOperator{\Aut}{Aut}
\DeclareMathOperator{\CG}{CG}
\DeclareMathOperator{\blocks}{\mathcal{B}}
\DeclareMathOperator{\Var}{Var}
\newcommand{\prob}[1]{\textsc{\mdseries\rmfamily #1}}
\newcommand{\GI}{\prob{GI}\xspace}
\newcommand{\ExactCNFGA}{\prob{Exact-CNF-GA}\xspace}
\newcommand{\ExactCNFHGA}{\prob{Exact-CNF-HGA}\xspace}
\newcommand{\ExactCNFGI}{\prob{Exact-CNF-GI}\xspace}
\newcommand{\ExactCNFHGI}{\prob{Exact-CNF-HGI}\xspace}
\newcommand{\CNFHGI}{\prob{CNF-HGI}\xspace}
\newcommand{\BLUE}{\textsc{Blue}\xspace}
\newcommand{\RED}{\textsc{Red}\xspace}
\newcommand{\NEW}{\textsc{New}\xspace}
\newcommand{\colGA}{\prob{Col-GA}\xspace}
\newcommand{\Nset}{\ensuremath{\mathbb{N}}}
\title{Finding Small Weight Isomorphisms with Additional Constraints is
  Fixed-Parameter Tractable\footnote{An extended abstract of this article
    appears in the proceedings of IPEC~2017.}}
\author[1]{V.~Arvind}
\author[2]{Johannes~K\"obler}
\author[2]{Sebastian~Kuhnert}
\author[3]{Jacobo~Tor\'an}
\affil[1]{Institute of Mathematical Sciences (HBNI), Chennai, India\\
    \texttt{arvind@imsc.res.in}}  
\affil[2]{Institut f\"ur Informatik, Humboldt-Universit\"at zu Berlin, Germany\\
    \{\texttt{koebler},~\texttt{kuhnert}\}\texttt{@informatik.hu-berlin.de}}
\affil[3]{Institut f\"ur Theoretische Informatik, Universit\"at Ulm, Germany\\
    \texttt{toran@uni-ulm.de}}
\date{\vspace{-2.5em}}
\begin{document}

\maketitle

\begin{abstract}
  \noindent\textbf{Abstract.}
  Lubiw showed that several variants of Graph Isomorphism are \NP-complete,
  where the solutions are required to satisfy certain additional
  constraints~\cite{Lub81}. One of these, called \prob{Isomorphism With
    Restrictions}, is to decide for two given graphs $X_1=(V,E_1)$ and
  $X_2=(V,E_2)$ and a subset $R\subseteq V\times V$ of forbidden pairs whether
  there is an isomorphism~$\pi$ from~$X_1$ to~$X_2$ such that $i^\pi\ne j$ for
  all $(i,j)\in R$. We prove that this problem and several of its
  generalizations are in fact in~$\FPT$:
  \begin{itemize}
   \item The problem of deciding whether there is an isomorphism between two
    graphs that moves $k$~vertices and satisfies Lubiw-style constraints is
    in~\FPT, with $k$~and~$\card{R}$ as parameters. The problem remains in~\FPT
    even if a CNF of such constraints is allowed. As a
    consequence of the main result it follows that the problem to decide whether
    there is an isomorphism that moves exactly $k$~vertices is in~\FPT. This
    solves a question left open in~\cite{AKKT17}.

   \item When the weight and complexity are unrestricted, finding isomorphisms
    that satisfy a CNF of Lubiw-style constraints is in $\FPT^{\prob{GI}}$.
    
   \item Checking if there is an isomorphism between two graphs that has
    complexity~$t$ is also in~\FPT with~$t$ as parameter, where the complexity
    of a permutation~$\pi$ is the Cayley measure defined as the minimum
    number~$t$ such that $\pi$~can be expressed as a product of
    $t$~transpositions.

   \item We consider a more general problem in which the vertex set of a
    graph~$X$ is partitioned into \RED and \BLUE, and we are
    interested in an automorphism that stabilizes \RED and \BLUE
    and moves exactly~$k$ vertices in~\BLUE, where $k$ is the parameter.
    This problem was introduced in~\cite{DoFe}, and in~\cite{AKKT17} we showed
    that it is \Wone-hard even with color classes of size~$4$
    inside~\RED. Now, for color classes of size at most~$3$ 
    inside~\RED, we show the problem is in~\FPT.
  \end{itemize}
  In the non-parameterized setting, all these problems are \NP-complete. Also,
  they all generalize in several ways the problem to decide whether there is an
  isomorphism between two graphs that moves at most~$k$ vertices, shown to be in
  \FPT by Schweitzer~\cite{Schw}.
\end{abstract}

\section{Introduction}\label{sec:intro}

The Graph Isomorphism problem (\GI) consists in deciding whether two given input
graphs are isomorphic, i.e., whether there is a bijection between the vertex
sets of the two graphs that preserves the adjacency relation. It is an
intensively researched algorithmic problem for over four decades, culminating in
Babai's recent quasi-polynomial time algorithm~\cite{Ba16}.

There is also considerable work on the parameterized complexity of~\GI. For
example, already in~1980 it was shown~\cite{FHL80} that \GI, parameterized by
color class size, is fixed-parameter tractable (FPT). It is also known that~\GI,
parameterized by the eigenvalue multiplicity of the input graph, is
in~\FPT~\cite{BGM82}. More recently, \GI, parameterized by the
treewidth of the input graph, is shown to be in~\FPT~\cite{LPPS17}.

In a different line of research, Lubiw~\cite{Lub81} has considered the
complexity of~\GI with additional constraints on the isomorphism. Exploring the
connections between~\GI and the \NP-complete problems, Lubiw defined the
following version of~\GI.
\begin{problemdef}{\prob{Isomorphism With Restrictions}}
  Given two graphs $X_1=(V_1,E_1)$ and $X_2=(V_2,E_2)$ and a set of forbidden
  pairs $R\subseteq V_1\times V_2$, decide whether there is an isomorphism~$\pi$
  from~$X_1$ to~$X_2$ such that $i^\pi\ne j$ for all $(i,j)\in R$.
\end{problemdef}
When $X_1=X_2$, the problem is to check if there is an automorphism that
satisfies these restrictions. Lubiw showed that the special case of testing for
\emph{fixed-point-free automorphisms} is \NP-complete. Klav{\'i}k et~al.\
recently reexamined \prob{Isomorphism With Restrictions}~\cite{KKZ16}. They show
that it remains \NP-complete when restricted to graph classes for which \GI is
as hard as for general graphs. Conversely, they show that it can be solved in
polynomial time for several graph classes for which the isomorphism problem is
known to be solvable in polynomial time by combinatorial algorithms, e.g.~planar
graphs and bounded treewidth graphs. However, they also show that the problem
remains \NP-complete for bounded color class graphs, where an efficient group
theoretic isomorphism algorithm is known.

A different kind of constrained isomorphism problem was introduced
by Schweitzer~\cite{Schw}. The weight (or support size) of a permutation
$\pi\in\Sym(V)$ is $\card{\set{i\in V}{i^\pi\ne i}}$. Schweitzer showed that the problem of
testing if there is an isomorphism~$\pi$ of weight at most~$k$ between two
$n$-vertex input graphs in the same vertex set can be solved in time $k^{\mathcal{O}(k)}\poly(n)$. Hence, the
problem is in \FPT with~$k$ as parameter. Schweitzer's algorithm exploits
interesting properties of the structure of an isomorphism~$\pi$. Based on
Lubiw's reductions~\cite{Lub81}, it is not hard to see that the problem is
\NP-complete when $k$ is not treated as parameter.

In this paper we consider the problem of finding isomorphisms with additional
constraints in the parameterized setting.
In our main result we formulate a \emph{graph isomorphism/automorphism problem
  with additional constraints} that generalizes Lubiw's setting as follows.
For a graph $X=(V,E)$, let $\pi\in\Aut(X)$ be an automorphism of~$X$.
We say that a permutation $\pi\in\Sym(V)$ \emph{satisfies} a formula~$F$ over
the variables in~$\Var(V)=\set{x_{u,v}}{u,v\in V}$ if $F$~is satisfied by the
assignment that has $x_{u,v}=1$ if and only if $u^\pi=v$. For example, the
conjunction $\bigwedge_{u\in V}\neg x_{uu}$ expresses the condition that $\pi$~is
fixed-point-free. We define:

\begin{problemdef}{$\ExactCNFGI$}
  Given two graphs $X_1=(V,E_1)$ and $X_2=(V,E_2)$, a CNF~formula~$F$ over
  $\Var(V)$, and $k\in\Nset$, decide whether there is an isomorphism from~$X_1$
  to~$X_2$ that has weight exactly~$k$ and satisfies~$F$. The parameter is
  $\card{F}+k$, where $\card{F}$~is the number of variables used in~$F$.
\end{problemdef}

In Section~\ref{sec:exactgi}, we first give an FPT algorithm for~$\ExactCNFGA$, the
automorphism version of this problem. The algorithm uses an orbit shrinking
technique that allows us to transform the input graph into a graph with bounded
color classes, preserving the existence of an exact weight~$k$ automorphism that
satisfies the formula~$F\!$. The bounded color class version is easy to solve
using color coding; see Section~\ref{Sec:bcc} for details. Building on this, we
show that $\ExactCNFGI$ is also in~\FPT. In particular, this allows us to
efficiently find isomorphisms of weight exactly~$k$, a problem left open
in~\cite{AKKT17}, and extends Schweitzer's result mentioned
above to the \emph{exact} case. In our earlier paper~\cite{AKKT17} we have shown
that the problem of \emph{exact weight~$k$ automorphism} is in~\FPT using a
simpler orbit shrinking technique which does not work for exact weight~$k$
isomorphisms. In this paper, we use some extra group-theoretic machinery to
obtain a more versatile orbit shrinking.

In Section~\ref{sec:fptgi}, we turn from restrictions on weight and complexity
to restrictions given only by a CNF~formula over Lubiw-style constraints. We
show that hypergraph isomorphism constrained by a CNF~formula is in
$\FPT^{\prob{GI}}$. Note that the problem remains \prob{GI}-hard even when the
formula is constantly true, so an FPT algorithm without \prob{GI}~oracle would
imply $\prob{GI}\in\cclass{P}$.

In Section~\ref{sec:exact-complexity}, we consider the problem of computing
graph isomorphisms of \emph{complexity} exactly~$t$: The complexity of a
permutation $\pi\in\Sym(V)$ is the minimum number of transpositions whose
product is~$\pi$. Checking for automorphisms or isomorphisms of complexity
exactly~$t$ is \NP-complete in the non-parameterized setting. We show that the
problem is in~\FPT with $t$~as parameter. Again, the ``at most~$t$'' version of
this problem was already shown to be in~\FPT by Schweitzer~\cite{Schw} as part
of his algorithmic strategy to solve the weight at most~$k$ problem. Our results
in Sections~\ref{sec:exactgi}~and~\ref{sec:exact-complexity} also hold for
hypergraphs when the maximum hyperedge size 
is taken as additional parameter.

In Section~\ref{sec:red-blue}, we examine a different restriction on the
automorphisms being searched for. Consider graphs $X=(V,E)$ with vertex set
partitioned into \RED and \BLUE. The \emph{Colored Graph Automorphism} problem
(defined in~\cite{DoFe}; we denote it \colGA), is to check if $X$~has an
automorphism that respects the partition and moves exactly~$k$ \BLUE vertices.
We showed in~\cite{AKKT17} that this problem is \Wone-hard. In our hardness
proof the orbits of the vertices in the \RED part of the graph have size at
most~4, while the ones for the \BLUE vertices have size~2. We show here that
this cannot be restricted any further. If we force the size of the orbits
of~$\Aut(X)$ in the \RED part to be bounded by~3 (i.e., the input graph has \RED
further partitioned into color classes of size at most~3 each), then the problem
to test whether there is an automorphism moving exactly~$k$ \BLUE vertices can
be solved in \FPT (with parameter~$k$). The \BLUE part of the graph remains
unconstrained. Observe that Schweitzer's problem~\cite{Schw} coincides with the
special case of this problem where there are no \RED vertices. This implies that
the non-parameterized version of \colGA is \NP-complete (even when $X$~has only
\BLUE vertices). Similarly, finding weight~$k$ automorphisms of a hypergraph
reduces to \colGA by taking the incidence graph, where the original vertices
become \BLUE and the vertices for hyperedges are \RED; note that this yields
another special case, where both \RED and \BLUE induce the empty graph,
respectively.

\section{Preliminaries}\label{sec:prelim}

We use standard permutation group terminology, see e.g.~\cite{DM96}.
Given a permutation $\sigma\in\Sym(V)$, its \emph{support} is
$\support(\sigma)=\set{u\in V}{u^\sigma\neq u}$ and its \emph{(Hamming) weight}
is $\card{\support(\sigma)}$. The \emph{complexity} of~$\sigma$ (sometimes
called its \emph{Cayley weight}) is the minimum number~$t$ such that $\sigma$
can be written as the product of $t$~transpositions.

Let $G\le\Sym(V)$ and $\pi\in\Sym(V)$; this includes the case $\pi=\id$. A
permutation $\sigma\in G\pi\setminus\set{\id}$ has \emph{minimal complexity
  in~$G\pi$} if for every way to express~$\sigma$ as the product of a minimum
number of transpositions $\sigma=\tau_1\dotsm\tau_{\compl(\sigma)}$ and every
$i\in\set{2,\dotsc,\compl(\sigma)}$ it holds that
$\tau_i\dotsm\tau_{\compl(\sigma)}\notin G\pi$. The following lemma observes
that every element of~$G\pi$ can be decomposed into minimal-complexity factors.
\begin{lemma}[{\cite[Lemma~2.2]{AKKT17}}]\label{lem:decompose-compl}
  Let~$G\pi$ be a coset of a permutation group~$G\!$ and let $\sigma\in
  G\pi\setminus\set{\id}$. Then for some $\ell\ge 1$ there are
  $\sigma_1,\dotsc,\sigma_{\ell-1}\in G\!$ with minimal complexity in~$G\!$ and
  $\sigma_\ell\in G\pi$ with minimal complexity in~$G\pi$ such that
  $\sigma=\sigma_1\dotsm\sigma_\ell$ and
  $\support(\sigma_i)\subseteq\support(\sigma)$ for each
  $i\in\set{1,\dotsc,\ell}$.
\end{lemma}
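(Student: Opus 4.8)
The plan is to argue by induction on $\compl(\sigma)$, but first I would isolate the key sub-claim that in \emph{every} minimum-length transposition decomposition $\sigma=\tau_1\dotsm\tau_m$ (so $m=\compl(\sigma)$), each $\tau_j$ moves only points of $\support(\sigma)$. To prove this, let $S\supseteq\support(\sigma)$ be the set of points touched by the $\tau_j$ and form the multigraph $H$ on vertex set $S$ with edge set $\{\tau_1,\dotsc,\tau_m\}$; $H$ has no isolated vertices. Transpositions in different connected components of $H$ commute, so $\sigma=\prod_t\sigma_t$, where $\sigma_t$ is the product of the $m_t$ edges inside a component $C_t$, and the $\sigma_t$ have pairwise disjoint supports contained in the respective $C_t$. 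Since $\sigma_t$ is a product of $m_t$ transpositions whose graph is connected on $C_t$, we have $\compl(\sigma_t)\le\card{C_t}-1\le m_t$; moreover, if $C_t$ contains a point fixed by $\sigma$ (equivalently, fixed by $\sigma_t$) then $\sigma_t$ moves at most $\card{C_t}-1$ points, which forces $\compl(\sigma_t)<m_t$ (either $\sigma_t=\id$ with $m_t\ge 1$, or $\sigma_t$ has a nontrivial cycle, whence $\compl(\sigma_t)\le\card{C_t}-2<m_t$). So if $S\neq\support(\sigma)$, the component $C_t$ containing a point $p\in S\setminus\support(\sigma)$ has $\compl(\sigma_t)<m_t$, and summing over components yields $\compl(\sigma)=\sum_t\compl(\sigma_t)<\sum_t m_t=m$, a contradiction. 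Hence $S=\support(\sigma)$.

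With this in hand the induction is short. If $\compl(\sigma)=1$, then $\sigma$ is a single transposition, which vacuously has minimal complexity in $G\pi$ (the index set $\set{2,\dotsc,1}$ is empty), so $\ell=1$, $\sigma_1=\sigma$ works; likewise if $\compl(\sigma)>1$ and $\sigma$ already has minimal complexity in $G\pi$. Otherwise, by definition there is a minimum decomposition $\sigma=\tau_1\dotsm\tau_m$ and an index $i$ with $2\le i\le m$ such that $\beta:=\tau_i\dotsm\tau_m\in G\pi$. I would set $\alpha:=\tau_1\dotsm\tau_{i-1}=\sigma\beta^{-1}$; the coset computation $G\pi\,(G\pi)^{-1}=G$ gives $\alpha\in G$. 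Neither $\alpha$ nor $\beta$ is the identity, since otherwise $\sigma$ would be a product of fewer than $m$ transpositions. From $\compl(\alpha)\le i-1$, $\compl(\beta)\le m-i+1$ and subadditivity $\compl(\sigma)\le\compl(\alpha)+\compl(\beta)$, equality holds throughout, so $\compl(\alpha)=i-1<m$ and $\compl(\beta)=m-i+1<m$; and $\support(\alpha)\cup\support(\beta)\subseteq S=\support(\sigma)$ by the sub-claim.

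Now I would apply the induction hypothesis to $\alpha$ with the coset $G=G\cdot\id$ and to $\beta$ with the coset $G\pi$. This produces $\alpha=\alpha_1\dotsm\alpha_p$ with every $\alpha_j\in G$ of minimal complexity in $G$, and $\beta=\beta_1\dotsm\beta_q$ with $\beta_1,\dotsc,\beta_{q-1}\in G$ of minimal complexity in $G$ and $\beta_q\in G\pi$ of minimal complexity in $G\pi$, all of whose supports lie in $\support(\alpha)\cup\support(\beta)\subseteq\support(\sigma)$. Concatenating, $\sigma=\alpha_1\dotsm\alpha_p\beta_1\dotsm\beta_q$ is the desired decomposition with $\ell=p+q$.

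The step I expect to be the main obstacle is the sub-claim that a minimum transposition factorization never leaves the support: it is the only ingredient that is not pure coset bookkeeping, it needs the connectivity/cycle-counting argument sketched above, and it is precisely what makes the induction close — both the exact additivity $\compl(\alpha)+\compl(\beta)=\compl(\sigma)$ at the split point and the support containment of the recursively produced factors depend on it. Everything else uses only $G\pi\,(G\pi)^{-1}=G$ and subadditivity of $\compl$.
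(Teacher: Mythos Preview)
The paper does not give its own proof of this lemma; it merely cites it from~\cite{AKKT17}. So there is nothing in the present paper to compare against.

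That said, your argument is correct and is essentially the natural proof. The sub-claim is the heart of the matter and your connectivity/orbit-counting argument for it is sound: grouping transpositions by connected component is legitimate because disjoint-support transpositions commute; the inequality $m_t\ge\card{C_t}-1$ follows from connectivity; and the strict drop $\compl(\sigma_t)\le\card{C_t}-2<m_t$ when $C_t$ contains a $\sigma$-fixed point follows from $\compl=\card{C_t}-\#\text{orbits}$ with at least two orbits. The induction step is clean: the split $\sigma=\alpha\beta$ with $\alpha\in G$, $\beta\in G\pi$ uses only the coset identity $G\pi\,(G\pi)^{-1}=G$, the equalities $\compl(\alpha)=i-1$ and $\compl(\beta)=m-i+1$ are forced by subadditivity, and the sub-claim guarantees both strict complexity decrease and support containment so that the recursion terminates and the supports stay inside $\support(\sigma)$. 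Nothing is missing.
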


An action of a permutation group $G\le\Sym(V)$ on a set~$V'$ is a group
homomorphism $h\colon G\to\Sym(V')$; we denote the image of~$G$
under~$h$ by~$G(V')$. For $u\in V$, we denote its stabilizer
by~$G_u=\set{\pi\in G}{u^\pi=u}$. For $U\subseteq V$, we denote its pointwise
stabilizer by~$G_{[U]}=\set{\pi\in G}{\forall u\in U:u^\pi=u}$ and its setwise
stabilizer by~$G_{\set{U}}=\set{\pi\in G}{U^\pi=U}$. For
$S\subseteq\mathcal{P}(V)$, we let~$G_S=\set{\pi\in G}{\forall U\in S:
  U^\pi=U}$.

A \emph{hypergraph} $X=(V,E)$ consists of a vertex set~$V$ and a hyperedge
set~$E\subseteq\mathcal{P}(V)$. Graphs are the special case where $\card{e}=2$
for all $e\in E$. The \emph{degree} of a vertex~$v\in V$ is $\card{\set{e\in
    E}{v\in e}}$. A \emph{(vertex) coloring} of~$X$ is a partition of~$V$ into
color classes $\mathcal{C}=(C_1,\dotsc,C_m)$. The color classes~$\mathcal{C}$ are
\emph{$b$-bounded} if $\card{C_i}\le b$ for all $i\in[m]$. An \emph{isomorphism}
between two hypergraphs $X=(V,E)$~and~$X'=(V',E')$ (with color classes
$\mathcal{C}=(C_1,\dotsc,C_m)$~and~$\mathcal{C}'=(C'_1,\dotsc,C'_m)$) is a
bijection~$\pi\colon V\to V'$ such that $E'=\set[\big]{\set{\pi(v)}{v\in
    e}}{e\in E}$ (and $C'_i=\set[\big]{\pi(v)}{v\in C_i}$). The isomorphisms
from~$X$ to~$X'$ form a coset that we denote by $\Iso(X,X')$. The automorphisms
of a hypergraph~$X$ are the isomorphisms from~$X$ to itself; they form a group
which we denote by~$\Aut(X)$.

\section{Bounded color class size}\label{Sec:bcc}

To show that $\ExactCNFGA$ for hypergraphs with $b$-bounded color classes can be
solved in~\FPT, we recall our algorithm for exact weight~$k$ automorphisms of
bounded color class hypergraphs~\cite{AKKT17} and show how it can be adapted to
the additional constraints given by the input formula.

\begin{definition}\label{def:cc-min}
  Let $X=(V,E)$ be a hypergraph with color class set
  $\mathcal{C}=\set{C_1,\dotsc,C_m}$.
  \begin{enumerate}[nosep,label=(\alph*)]
   \item For a subset $\mathcal{C}'\subseteq\mathcal{C}$, we say that a
    color-preserving permutation $\pi\in\Sym(V)$
    \emph{$\mathcal{C}'$-satisfies} a CNF~formula~$F$ over~$\Var(V)$
    if every clause of~$F$ contains a literal $x_{u,v}$~or~$\lnot x_{u,v}$ with
    $u\in\bigcup\mathcal{C}'$ that is satisfied by~$\pi$.
   \item For a color-preserving permutation $\pi\in\Sym(V)$, let
    $\mathcal{C}[\pi] = \set{C_i\in\mathcal{C}}{ \exists v\in C_i: v^\pi\ne v}$
    be the subset of color classes that intersect~$\support(\pi)$. For a subset
    $\mathcal{C}'\subseteq \mathcal{C}[\pi]$, we define the permutation
    $\pi_{\mathcal{C}'}\in\Sym(V)$ as
    \[\pi_{\mathcal{C}'}(v)=\begin{cases}
        v^\pi,& \text{if }v\in\bigcup\mathcal{C}',\\
        v,& \text{if }v\not\in\bigcup\mathcal{C}'.
      \end{cases}\]
    Note that $\pi_{\mathcal{C}[\pi]}=\pi$.
   \item A color-preserving automorphism $\sigma\ne\id$ of~$X$ is said to be
    \emph{color-class-minimal}, if for every set~$\mathcal{C}'$ with
    $\emptyset\subsetneq\mathcal{C}'\subsetneq\mathcal{C}[\sigma]$, the
    permutation~$\sigma_{\mathcal{C}'}$ is not in~$\Aut(X)$.
  \end{enumerate}
\end{definition}

\begin{lemma}\label{lem:cc-min-decompose}
  Let $X=(V,E)$ be a hypergraph with color class set
  $\mathcal{C}=\{C_1,C_2,\ldots,C_m\}$. For
  $\emptyset\ne\mathcal{C}'\subseteq\mathcal{C}$ and a CNF~formula~$F\!$
  over~$\Var(V)$,  the following statements are equivalent:
  \begin{itemize}[nosep]
   \item There is a nontrivial automorphism~$\sigma$ of~$X\!$ with
    $\mathcal{C}[\sigma]=\mathcal{C}'\!$ that satisfies~$F\!$.
   \item $\mathcal{C}'\!$ can be partitioned into
    $\mathcal{C}_1,\dotsc,\mathcal{C}_\ell$ and $F\!$~(seen as a set of clauses) can be
    partitioned into CNF formulas $F_0,\dotsc,F_\ell$ such that $F_0$~is
    $(\mathcal{C}\setminus\mathcal{C}')$-satisfied by~$\id$ and for each
    $i\in\set{1,\dotsc,\ell}$ there is a color-class-minimal
    automorphism~$\sigma_i$ of~$X\!$ with $\mathcal{C}[\sigma_i]=\mathcal{C}_i$
    that $\mathcal{C}_i$-satisfies~$F_i$.
  \end{itemize}
  Moreover, the automorphisms $\sigma$~and~$\sigma_i$ can be chosen to satisfy
  $\sigma_i=\sigma_{\mathcal{C}_i}$ for $1\le i\le\ell$, respectively.
\end{lemma}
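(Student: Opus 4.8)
The plan is to prove the two implications separately: from right to left is a direct construction, and from left to right is an inductive decomposition based on a single structural observation.

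For the implication from right to left, suppose the partitions $\mathcal{C}_1,\dots,\mathcal{C}_\ell$ of $\mathcal{C}'$ and $F_0,\dots,F_\ell$ of $F$ are given, together with color-class-minimal automorphisms $\sigma_i$ with $\mathcal{C}[\sigma_i]=\mathcal{C}_i$ as in the statement. Since the $\mathcal{C}_i$ are pairwise disjoint and $\support(\sigma_i)\subseteq\bigcup\mathcal{C}_i$, the $\sigma_i$ have pairwise disjoint supports; hence they commute, $\sigma:=\sigma_1\dotsm\sigma_\ell\in\Aut(X)$, $\sigma_{\mathcal{C}_i}=\sigma_i$ (which already yields the ``moreover'' part), and $\mathcal{C}[\sigma]=\mathcal{C}_1\cup\dots\cup\mathcal{C}_\ell=\mathcal{C}'$. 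It remains to check $\sigma$ satisfies $F$ clause by clause: a clause of $F_0$ has a literal on some $x_{u,v}$ with $u\notin\bigcup\mathcal{C}'$ that is true under $\id$, and since $u^\sigma=u$ the same literal is true under $\sigma$; a clause of $F_i$ with $i\ge 1$ has a literal on some $x_{u,v}$ with $u\in\bigcup\mathcal{C}_i$ true under $\sigma_i$, and since $u^\sigma=u^{\sigma_i}$ it is true under $\sigma$.

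For the implication from left to right, let $\sigma$ be a nontrivial automorphism with $\mathcal{C}[\sigma]=\mathcal{C}'$ that satisfies $F$. For every clause fix one literal, say on variable $x_{u,v}$, that is true under $\sigma$; put the clause into $F_0$ if $u\notin\bigcup\mathcal{C}'$ and into an auxiliary CNF $F'$ otherwise. For a clause in $F_0$ we have $u^\sigma=u$, so the chosen literal (positive or negative) is also true under $\id$, and because $\mathcal{C}$ partitions $V$ with $\mathcal{C}'\subseteq\mathcal{C}$ we have $u\in\bigcup(\mathcal{C}\setminus\mathcal{C}')$; hence $F_0$ is $(\mathcal{C}\setminus\mathcal{C}')$-satisfied by $\id$. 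By construction $F'$ is $\mathcal{C}'$-satisfied by $\sigma$, so it remains to split $\mathcal{C}'$ and $F'$ into color-class-minimal pieces, which I carry out via the following claim, proved by induction on $\card{\mathcal{C}'}$: if $\sigma\ne\id$ is color-preserving with $\mathcal{C}[\sigma]=\mathcal{C}'$ and a CNF $G$ is $\mathcal{C}'$-satisfied by $\sigma$, then $\mathcal{C}'$ can be partitioned into $\mathcal{C}_1,\dots,\mathcal{C}_\ell$ and $G$ into $G_1,\dots,G_\ell$ so that each $\sigma_{\mathcal{C}_i}$ is a color-class-minimal automorphism with $\mathcal{C}[\sigma_{\mathcal{C}_i}]=\mathcal{C}_i$ that $\mathcal{C}_i$-satisfies $G_i$. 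If $\sigma$ is color-class-minimal, take $\ell=1$ (then $\sigma_{\mathcal{C}_1}=\sigma_{\mathcal{C}'}=\sigma$). Otherwise there is $\mathcal{D}$ with $\emptyset\subsetneq\mathcal{D}\subsetneq\mathcal{C}'$ and $\sigma_{\mathcal{D}}\in\Aut(X)$. The key observation is that then $\sigma_{\mathcal{C}'\setminus\mathcal{D}}\in\Aut(X)$ as well: since $\sigma$ is color-preserving one checks $\sigma=\sigma_{\mathcal{D}}\,\sigma_{\mathcal{C}'\setminus\mathcal{D}}$ (the two factors having disjoint supports), so $\sigma_{\mathcal{C}'\setminus\mathcal{D}}=\sigma_{\mathcal{D}}^{-1}\sigma\in\Aut(X)$. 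Now route each clause of $G$ to the $\mathcal{D}$-side or the $(\mathcal{C}'\setminus\mathcal{D})$-side according to whether its chosen witness literal has first index in $\bigcup\mathcal{D}$ or in $\bigcup(\mathcal{C}'\setminus\mathcal{D})$; since $\sigma_{\mathcal{D}}$ agrees with $\sigma$ on $\bigcup\mathcal{D}$ and $\sigma_{\mathcal{C}'\setminus\mathcal{D}}$ agrees with $\sigma$ on $\bigcup(\mathcal{C}'\setminus\mathcal{D})$, the $\mathcal{D}$-side is $\mathcal{D}$-satisfied by $\sigma_{\mathcal{D}}$ and the other side is $(\mathcal{C}'\setminus\mathcal{D})$-satisfied by $\sigma_{\mathcal{C}'\setminus\mathcal{D}}$. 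Apply the induction hypothesis to $(\sigma_{\mathcal{D}},\mathcal{D})$ and to $(\sigma_{\mathcal{C}'\setminus\mathcal{D}},\mathcal{C}'\setminus\mathcal{D})$ — both color class sets are nonempty proper subsets of $\mathcal{C}'$, with $\mathcal{C}[\sigma_{\mathcal{D}}]=\mathcal{D}$ and $\mathcal{C}[\sigma_{\mathcal{C}'\setminus\mathcal{D}}]=\mathcal{C}'\setminus\mathcal{D}$ — and concatenate the two decompositions. Each piece returned from the $\mathcal{D}$-side is $(\sigma_{\mathcal{D}})_{\mathcal{C}_j}$ with $\mathcal{C}_j\subseteq\mathcal{D}$, which equals $\sigma_{\mathcal{C}_j}$ because the two permutations agree on $\bigcup\mathcal{C}_j$ and fix everything else; likewise on the other side. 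Applying this claim to $(\sigma,\mathcal{C}')$ and $F'$ then yields the required $\mathcal{C}_1,\dots,\mathcal{C}_\ell$ and $F_1,\dots,F_\ell$, and $\sigma_i:=\sigma_{\mathcal{C}_i}$ gives the ``moreover'' statement.

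The conceptual content is concentrated in the one observation that a complementary restriction of a color-preserving automorphism to color classes is again an automorphism whenever the restriction is; everything else is bookkeeping, which is where I expect the main friction to lie: fixing exactly one witness literal per clause so the clause partition is well defined, checking that passing from $\sigma$ to a restriction $\sigma_{\mathcal{D}}$ preserves the truth of a chosen literal whose first index lies in $\bigcup\mathcal{D}$ (uniformly for positive and negative literals, via $u^{\sigma_{\mathcal{D}}}=u^\sigma$), and verifying that the recursion returns precisely the restrictions $\sigma_{\mathcal{C}_i}$ rather than merely some color-class-minimal automorphisms supported on the same color classes.
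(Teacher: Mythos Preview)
Your proposal is correct and follows essentially the same approach as the paper: the backward direction is the same product construction, and the forward direction is the same induction on $\card{\mathcal{C}'}$ using the observation that $\sigma_{\mathcal{C}'\setminus\mathcal{D}}=\sigma_{\mathcal{D}}^{-1}\sigma\in\Aut(X)$. The only cosmetic difference is that you fix one witness literal per clause upfront and route clauses by its first index, whereas the paper routes a clause into $F_0$ or $\hat{F}$ by asking directly whether it is $(\mathcal{C}\setminus\mathcal{C}')$-satisfied by $\id$ or $\mathcal{D}$-satisfied by $\sigma$; both yield valid partitions.
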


\begin{proof}
  To show the forward direction, let~$\sigma$ be a nontrivial
  automorphism of~$X$ with $\mathcal{C}[\sigma]=\mathcal{C}'$ that
  satisfies~$F\!$. We put those clauses of~$F$ into~$F_0$ that are
  $(\mathcal{C}\setminus\mathcal{C}')$-satisfied by~$\id$, and the
  remaining clauses of~$F$ into the CNF~formula~$F'$. Note that
  $F'$~is $\mathcal{C}'$-satisfied by~$\sigma$: Every clause of~$F$
  must contain a literal $x_{u,v}$~or~$\lnot x_{u,v}$ that is satisfied
  by~$\sigma$. If this literal is not $\mathcal{C}'$-satisfied
  by~$\sigma$, we have $u\notin\bigcup\mathcal{C}'$ and thus
  $u^\sigma=u$, so this clause is contained in~$F_0$.

  We show by induction on~$\card[\big]{\mathcal{C}[\sigma]}$ that for
  an automorphism~$\sigma$ of~$X$ which
  $\mathcal{C}[\sigma]$-satisfies a CNF~formula~$F'$ over~$\Var(V)$,
  we can partition $\mathcal{C}[\sigma]$ into
  $\mathcal{C}_1,\dotsc,\mathcal{C}_\ell$ and the clauses of~$F'$ into
  $F_1,\dotsc,F_\ell$ such that $\sigma_{\mathcal{C}_i}$~is a
  color-class-minimal automorphism of~$X\!$ that
  $\mathcal{C}_i$-satisfies~$F_i$, for $1\le i\le\ell$. If
  $\sigma$~itself is color-class-minimal, which always happens if
  $\card[\big]{\mathcal{C}[\sigma]}=1$, we are done: We can set
  $\ell=1$, $\mathcal{C}_1=\mathcal{C}[\sigma]$, and
  $F_1=F'$. Otherwise there is a non-empty
  $\mathcal{D}\subsetneq\mathcal{C}[\sigma]$ such that
  $\phi=\sigma_{\mathcal{D}}\in\Aut(X)$. This implies
  $\phi'=\sigma\phi^{-1}\in\Aut(X)$. Note that
  $\phi'=\sigma_{\mathcal{C}[\sigma]\setminus\mathcal{D}}$ and thus
  $\mathcal{C}[\phi']=\mathcal{C}[\sigma]\setminus\mathcal{D}$. Next,
  we partition the clauses of~$F'$ into two CNF~formulas
  $\hat{F}$~and~$\hat{F}'$: If a clause of~$F'\!$ is
  $\mathcal{D}$-satisfied by~$\sigma$, we include it in~$\hat{F}$;
  otherwise we include it in~$\hat{F}'\!$. In the former case, this
  implies that this clause is also $\mathcal{D}$-satisfied by~$\phi$,
  as $u^\phi=u^\sigma$ for all $u\in\bigcup\mathcal{D}$. In the latter
  case, the clause must be
  $(\mathcal{C}[\sigma]\setminus\mathcal{D})$-satisfied by~$\sigma$,
  and consequently also by~$\phi'$, as $u^\sigma=u^{\phi'}$ for all
  $u\in\bigcup(\mathcal{C}[\sigma]\setminus\mathcal{D})$. Thus
  $\hat{F}$ is $\mathcal{C}[\phi]$-satisfied by~$\phi$, and $\hat{F}'$
  is $\mathcal{C}[\phi']$-satisfied by~$\phi'$; so we can apply the
  inductive hypothesis to both $\phi$~and~$\phi'$. This yields
  partitions of $\mathcal{D}$~and~$\hat{F}$ as well as
  $\mathcal{C}[\sigma]\setminus\mathcal{D}$~and~$\hat{F}'$, which we
  can combine to obtain the desired partitions of
  $\mathcal{C}[\sigma]$~and~$F'$.

  To show the backward direction, let
  $\mathcal{C}_1,\dotsc,\mathcal{C}_\ell$ be a partition
  of~$\mathcal{C}'$, let $F_0,\dotsc,F_\ell$ be a partition of the
  clauses of~$F$, and let $\sigma_1,\dotsc,\sigma_\ell$ be
  color-preserving automorphisms of~$X\!$ with
  $\mathcal{C}[\sigma_i]=\mathcal{C}_i$ such that $F_i$~is
  $\mathcal{C}_i$-satisfied by~$\sigma_i$ for $1\le i\le\ell$, and
  $F_0$~is $(\mathcal{C}\setminus\mathcal{C}')$-satisfied
  by~$\id$. Consider the automorphism
  $\sigma=\sigma_1\dotsm\sigma_\ell$. As
  $\mathcal{C}_i\cap\mathcal{C}_j=\emptyset$ for $i\neq j$, the
  following definition of~$\sigma$ is equivalent and well-defined:
  \[v^\sigma=\begin{cases}
    v^{\sigma_i}&\text{if }\exists i\in\set{1,\dotsc,\ell}: v\in\bigcup\mathcal{C}_i\\
    v&\text{otherwise}
  \end{cases}\] Thus we have $\sigma_i=\sigma_{\mathcal{C}_i}$ and
  $\mathcal{C}[\sigma]=\mathcal{C}'$. Moreover, any clause of~$F$ is
  contained in some~$F_i$. If $i>0$, this clause is
  $\mathcal{C}_i$-satisfied by~$\sigma_i$ and thus also by~$\sigma$,
  as $u^{\sigma_i}=u^\sigma$ for all $u\in\bigcup\mathcal{C}_i$. It
  remains to consider the case $i=0$. Then the clause is
  $(\mathcal{C}\setminus\mathcal{C}')$-satisfied by~$\id$ and thus
  also by~$\sigma$, as $u^\sigma=u$ for all
  $u\in\bigcup(\mathcal{C}\setminus\mathcal{C}')$.
\end{proof}

In~\cite{AKKT17} an algorithm is presented that, when given a
hypergraph~$X$ on vertex set~$V$ with $b$-bounded color classes and $k\in\Nset$, computes
all color-class-minimal automorphisms of~$X$ that have weight
exactly~$k$ in
$\mathcal{O}\paren[\big]{(kb!)^{\mathcal{O}(k^2)}\poly(N)}$~time. We
use it as a building block for the following algorithm (see line~\ref{line:weight}).

\pagebreak

\begin{lstlisting}[caption={$\algo{ColorExactCNFGA}_b(X,\mathcal{C},k,F)$},label=alg:colorexactcnfga]
Input: A hypergraph $X=(V,E)$ with (*$b$-bounded*) color classes $\mathcal{C}=\set{C_1,\dotsc,C_m}$, (*a~parameter*) $k\in\Nset$, (*and*) a CNF formula $F$ over $\Var(V)$
Output: A (*color-preserving*) automorphism $\sigma$ of $X$ with $\card{\support(\sigma)}=k$ that satisfies $F\!$, (*or~$\bot$~if*) none exists
$A_0=\set{\id}$
for $i\in\set{1,\dotsc,k}$ do
    $A_i\assign\set{\sigma\in\Aut(X)}{\sigma\text{ is color-class-minimal and has weight }i}\label{line:weight}$ // see~\cite{AKKT17}
for $h\in\mathcal{H}_{\mathcal{C},k}$ do // $\mathcal{H}_{\mathcal{C},k}$ is the perfect family of hash functions $h\colon\mathcal{C}\to [k]$ from~\cite{FKS84}
    for $\ell\in\set{1,\dotsc,k}, h'\colon[k]\to[\ell]$ do
        for $(k_1,\dotsc,k_\ell)\in\set{0,\dotsc,k}\relax\cramped{{}^\ell}$ with $\cramped\sum_{i=1}^\ell k_i=k$ do
            for each partition of the clauses of $F$ into $F_0,\dotsc,F_\ell$ do
                if $\forall i\in\set{1,\dotsc,\ell}:\exists \sigma_i\in A_{k_i}:\support(\sigma_i)\subseteq \bigcup (h'\circ h)^{-1}(i)$ and $F_i$ is (*\mbox{$\mathcal{C}[\sigma_i]$-satisfied}*) by $\sigma_i$, and $F_0$ is (*$(\mathcal{C}\setminus\cramped\bigcup_{i=1}^\ell\mathcal{C}[\sigma_i])$-satisfied*) by $\id\label{line:cccond}$ then
                    return $\sigma=\sigma_1\dotsm\sigma_\ell$
return $\bot$
\end{lstlisting}
\begin{theorem}\label{thm:colorcnfga}
  Given
  a hypergraph $X=(V,E)$ with $b$-bounded color classes~$\mathcal{C}$, a
  CNF~formula $F\!$ over~$\Var(V)$, and $k\in\Nset$, the algorithm
  $\algo{ColorExactCNFGA}_b(X,\mathcal{C},k,F)$ computes a color-preserving
  automorphism~$\sigma$ of~$X\!$ with weight~$k$ that satisfies~$F\!$ in
  $(kb!)^{\mathcal{O}(k^2)}k^{\mathcal{O}(\card{F})}\poly(N)$ time (where
  $N\!$~is the size of~$X\!$), or determines that none exists.
\end{theorem}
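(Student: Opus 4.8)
The plan is to show that Algorithm~\ref{alg:colorexactcnfga} is correct — \emph{sound}, meaning every permutation it outputs is a color-preserving automorphism of weight exactly~$k$ satisfying~$F$, and \emph{complete}, meaning it returns $\bot$ only if no such automorphism exists — and then to bound its running time by multiplying the sizes of the nested loops. The combinatorial core is already contained in Lemma~\ref{lem:cc-min-decompose}; what remains is to turn it into an algorithm via color coding and the enumeration of color-class-minimal automorphisms from~\cite{AKKT17}.

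For soundness, suppose the algorithm returns $\sigma=\sigma_1\dotsm\sigma_\ell$ in line~\ref{line:cccond} with $\sigma_i\in A_{k_i}$. Each $\sigma_i$ is either $\id$ (when $k_i=0$) or a color-class-minimal automorphism of~$X$, hence color-preserving, so $\sigma\in\Aut(X)$ is color-preserving. As $\support(\sigma_i)\subseteq\bigcup(h'\circ h)^{-1}(i)$ and the fibres of $h'\circ h$ are pairwise disjoint, the $\sigma_i$ have pairwise disjoint supports; therefore $\card{\support(\sigma)}=\sum_i k_i=k$, and $u^\sigma=u^{\sigma_i}$ whenever $u\in\support(\sigma_i)$ while $u^\sigma=u$ otherwise. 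To see that $\sigma$ satisfies~$F$, take any clause~$c$, say $c\in F_i$. If $i\ge1$, the test in line~\ref{line:cccond} supplies a literal of~$c$ over some~$x_{u,v}$ with $u$ in a color class of $\mathcal{C}[\sigma_i]$, which lies in the $i$-th fibre of $h'\circ h$, that is satisfied by~$\sigma_i$; then $u$ is fixed by every $\sigma_j$ with $j\ne i$, so $u^\sigma=u^{\sigma_i}$ and the literal — hence~$c$ — is satisfied by~$\sigma$. If $i=0$, then $c$ has a literal over some~$x_{u,v}$ with $u$ in a color class outside $\bigcup_j\mathcal{C}[\sigma_j]$ that is satisfied by~$\id$; since $\support(\sigma)\subseteq\bigcup_j\mathcal{C}[\sigma_j]$, we have $u^\sigma=u=u^{\id}$, so the literal has the same truth value under~$\sigma$ as under~$\id$ and~$c$ is satisfied. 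Thus $\sigma$ is a valid answer.

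For completeness, let $\tau$ be a color-preserving automorphism of~$X$ of weight exactly~$k$ satisfying~$F$; we may assume $k\ge1$, the case $k=0$ amounting to a direct check whether $\id$ satisfies~$F$. Then $\mathcal{C}[\tau]\ne\emptyset$, and Lemma~\ref{lem:cc-min-decompose} applied with $\mathcal{C}'=\mathcal{C}[\tau]$ gives a partition of $\mathcal{C}[\tau]$ into nonempty parts $\mathcal{C}_1,\dotsc,\mathcal{C}_\ell$, a partition $F_0,\dotsc,F_\ell$ of the clauses of~$F$ with $F_0$ being $(\mathcal{C}\setminus\mathcal{C}[\tau])$-satisfied by~$\id$, and color-class-minimal automorphisms $\sigma_i:=\tau_{\mathcal{C}_i}$ with $\mathcal{C}[\sigma_i]=\mathcal{C}_i$ such that $F_i$ is $\mathcal{C}_i$-satisfied by~$\sigma_i$ for each~$i$. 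Each $\sigma_i\ne\id$ has weight $k_i:=\card{\support(\sigma_i)}\ge2$, and since the $\support(\sigma_i)$ partition $\support(\tau)$ we get $\sum_i k_i=k$ and $\ell\le k$; in particular $\sigma_i\in A_{k_i}$ is produced in line~\ref{line:weight}. Because $\card{\mathcal{C}[\tau]}\le k$, the family $\mathcal{H}_{\mathcal{C},k}$ contains an~$h$ that is injective on~$\mathcal{C}[\tau]$, so $h(\mathcal{C}_1),\dotsc,h(\mathcal{C}_\ell)$ are pairwise disjoint and there is an $h'\colon[k]\to[\ell]$ with $h'(h(C))=i$ for every $C\in\mathcal{C}_i$. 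For this choice of $h$, $\ell$, $h'$, $(k_1,\dotsc,k_\ell)$ and clause partition — each considered in the corresponding loop — we have $\support(\sigma_i)\subseteq\bigcup(h'\circ h)^{-1}(i)$, $F_i$ is $\mathcal{C}[\sigma_i]$-satisfied by~$\sigma_i$, and, since $\bigcup_i\mathcal{C}[\sigma_i]=\mathcal{C}[\tau]$, $F_0$ is $(\mathcal{C}\setminus\bigcup_i\mathcal{C}[\sigma_i])$-satisfied by~$\id$. Hence the test in line~\ref{line:cccond} succeeds and the algorithm returns some permutation, which is valid by the soundness argument (it need not be~$\tau$ itself).

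For the running time, I would multiply the per-loop bounds: computing $A_1,\dotsc,A_k$ costs $(kb!)^{\mathcal{O}(k^2)}\poly(N)$ by~\cite{AKKT17}; $\mathcal{H}_{\mathcal{C},k}$ has $2^{\mathcal{O}(k)}\log m$ members~\cite{FKS84}; there are $k^{\mathcal{O}(k)}$ choices of $(\ell,h')$ and of $(k_1,\dotsc,k_\ell)$; there are $k^{\mathcal{O}(\card{F})}$ relevant partitions of the clauses of~$F$ (after discarding trivially true clauses and merging duplicates); and evaluating line~\ref{line:cccond} for a fixed choice reduces to scanning the $A_{k_i}$ for suitable~$\sigma_i$ and then testing the $F_0$-condition. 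The delicate point is carrying out this last step within $(kb!)^{\mathcal{O}(k^2)}k^{\mathcal{O}(\card{F})}\poly(N)$: the $F_0$-condition couples the choices of the $\sigma_i$ through the set $\bigcup_i\mathcal{C}[\sigma_i]$, so one must use that $\mathcal{C}[\sigma_i]$ lies in the $i$-th fibre of $h'\circ h$ and that $(\mathcal{C}\setminus U)$-satisfiability of~$F_0$ is antimonotone in~$U$, together with the bound from~\cite{AKKT17} on the number of weight-$i$ color-class-minimal automorphisms, to avoid an $N^{\mathcal{O}(k)}$ blow-up. I expect this running-time bookkeeping, rather than any conceptual difficulty, to be the main obstacle, since Lemma~\ref{lem:cc-min-decompose} already packages the hard combinatorics.
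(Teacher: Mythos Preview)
Your proposal is correct and follows essentially the same approach as the paper: soundness and completeness are both derived from Lemma~\ref{lem:cc-min-decompose} together with color coding via the perfect hash family, exactly as the paper does (the paper even invokes the lemma directly for soundness where you unpack it by hand). On the running-time side, the paper's own proof is terser than yours --- it simply lists the loop sizes $k^k$, $k^k$, $(k+1)^{\card{F}}$ and the $(kb!)^{\mathcal{O}(k^2)}\poly(N)$ bound on the $A_i$ from~\cite{AKKT17} without addressing the $F_0$-coupling you flag --- so your caution there is well placed and, if anything, goes beyond what the paper supplies.
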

\begin{proof}
  If the algorithm returns $\sigma=\sigma_1\dotsm\sigma_\ell$, we know
  $\sigma_i\in A_{k_i}$ and $\support(\sigma_i)\subseteq \bigcup (h'\circ
  h)^{-1}(i)$. As these sets are disjoint, we have
  $\card{\support(\sigma)}=\sum_{i=1}^\ell\card{\support(\sigma_i)}=k$, and
  Lemma~\ref{lem:cc-min-decompose} implies that~$\sigma$ satisfies~$F\!$.

  We next show that the algorithm does not return~$\bot$ if there is an
  automorphism~$\pi$ of~$X$ that has weight~$k$ and satisfies~$F\!$. By
  Lemma~\ref{lem:cc-min-decompose}, we can partition $\mathcal{C}[\pi]$ into
  $\mathcal{C}_1,\dotsc,\mathcal{C}_\ell$ and the clauses of~$F$ into
  $F_0\dotsc,F_\ell$ such that $F_0$ is
  $(\mathcal{C}\setminus\mathcal{C}[\pi])$-satisfied by~$\id$ and, for $1\le i\le\ell$, the
  permutation~$\pi_i=\pi_{\mathcal{C}_i}$ is a color-class-minimal
  automorphism of~$X\!$ that $\mathcal{C}[\pi_i]$-satisfies~$F$. Now consider
  the iteration of the loop where $h$~is injective on~$\mathcal{C}[\pi]$;
  such an~$h$ must exist as it is chosen from a perfect hash family. Now let
  $h'\colon[k]\to[\ell]$ be a function with $h'\paren[\big]{h(C)}=i$ if
  $C\in\mathcal{C}[\pi_i]$; such an~$h'$ exists because $h$~is injective
  on~$\mathcal{C}[\pi]$. In the loop iterations where $h'$~and the partition
  of~$F$ into $F_0\dotsc,F_\ell$ is considered, the condition on
  line~\ref{line:cccond} is true (at least) with $\sigma_i=\pi_i$, so the
  algorithm does not return~$\bot$.

  Line~\ref{line:weight} can be implemented by using the
  algorithm~$\algo{ColoredAut}_{k,b}(X)$ from~\cite{AKKT17} which runs in
  $\mathcal{O}\paren[\big]{(kb!)^{\mathcal{O}(k^2)}\poly(N)}$~time, and this
  also bounds~$\card{A_i}$. As $\card{\mathcal{C}}\le n$, the perfect hash
  family~$\mathcal{H}_{\mathcal{C},k}$ has size $2^{\mathcal{O}(k)}\log^2n$, and
  can also be computed in this time. The inner
  loops take at most $k^k$, $k^k$ and $(k+1)^{\card{F}}$ iterations,
  respectively. Together, this yields a runtime of
  $(kb!)^{\mathcal{O}(k^2)}k^{\mathcal{O}(\card{F})}\poly(N)$.
\end{proof}

\section{Exact weight}\label{sec:exactgi}

In this section, we show that finding isomorphisms that have an exactly
prescribed weight and satisfy a CNF formula is fixed parameter tractable. In
fact, we show that this is true even for hypergraphs, when the maximum hyperedge
size~$d$ is taken as additional parameter.

\begin{problemdef}{$\ExactCNFHGI$}
  Given two hypergraphs $X_1=(V,E_1)$ and $X_2=(V,E_2)$ with hyperedge size
  bounded by~$d$, a CNF~formula~$F$ over $\Var(V)$, and $k\in\Nset$, decide
  whether there is an isomorphism from~$X_1$ to~$X_2$ of weight~$k$ that
  satisfies~$F$. The parameter is $\card{F}+k+d$.
\end{problemdef}

Our approach is to reduce $\ExactCNFHGI$ to $\ExactCNFHGA$ (the analogous problem for
automorphisms), which we solve first.

We require some permutation group theory definitions. Let $G\le\Sym(V)$ be
a permutation group. The group~$G$ partitions~$V$ into orbits:
$V=\Omega_1\cup\Omega_2\cdots\cup\Omega_r$. On each orbit~$\Omega_i$, the
group~$G$ acts transitively.
A subset $\Delta\subseteq \Omega_i$ is a \emph{block} of the group~$G$ if for
all $\pi\in G$ either $\Delta^\pi=\Delta$ or $\Delta^\pi\cap \Delta=\emptyset$.
Clearly, $\Omega_i$ is itself a block, and so are all singleton sets. These
are \emph{trivial} blocks. Other blocks are \emph{nontrivial}. If $G$ has no
nontrivial blocks it is \emph{primitive}. If $G$ is not primitive, we can
partition~$\Omega_i$ into blocks $\Omega_i = \Delta_1\cup \Delta_2\cup \cdots
\cup \Delta_s$, where each~$\Delta_j$ is a \emph{maximal nontrivial block}. Then
the group~$G$ acts primitively on the block system
$\{\Delta_1,\Delta_2,\ldots,\Delta_s\}$. In this action, a permutation $\pi\in
G$ maps~$\Delta_i$ to $\Delta_i^\pi=\set{u^\pi}{u\in\Delta_i}$.

The following two theorems are the main group-theoretic ingredients to our
algorithms; they imply that every primitive group on a sufficiently large set
contains the alternating group.

\begin{theorem}[{\cite[Theorem 3.3A]{DM96}}]\label{thm:primitive-small-support}
  Suppose $G\le \Sym(V)$ is a primitive subgroup of $\Sym(V)$. If $G$ contains
  an element~$\pi$ such that $\card{\support(\pi)}=3$ then $G$~contains the
  alternating group~$\Alt(V)$. If $G$ contains an element~$\pi$ such that
  $\card{\support(\pi)}=2$ then $G=\Sym(V)$.
\end{theorem}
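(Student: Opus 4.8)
The plan is to treat this as (a special case of) Jordan's classical theorem about primitive permutation groups containing an element of small support, and to prove both halves by the same device: exhibit a graph on~$V$ that is invariant under the conjugation action of~$G$, so that its partition into connected components is a $G$-congruence; by primitivity this partition is trivial, and since the graph will manifestly have an edge, it must be connected. Observe first the two elementary reductions: an element~$\pi$ with $\card{\support(\pi)}=2$ is forced to be a transposition $(a\,b)$, and one with $\card{\support(\pi)}=3$ is forced to be a $3$-cycle $(a\,b\,c)$, because the only fixed-point-free permutations of a $2$- (resp.\ $3$-) element set are the transposition (resp.\ the two $3$-cycles on it).

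Handle the transposition case first, since it is self-contained. Let $\Gamma_2$ be the graph on~$V$ with an edge $\{u,v\}$ exactly when $(u\,v)\in G$. For $\rho\in G$ we have $\rho^{-1}(u\,v)\rho=(u^\rho\,v^\rho)\in G$, so $G$~acts on~$\Gamma_2$ by automorphisms, its components form a $G$-invariant partition, and---as $\pi$ contributes an edge---primitivity makes $\Gamma_2$ connected. Now, for arbitrary $u,v\in V$ pick a path $u=w_0,w_1,\dots,w_m=v$ in~$\Gamma_2$; an easy induction on~$m$ using $(a\,b)(b\,c)(a\,b)=(a\,c)$ gives $(u\,v)=(w_0\,w_m)\in G$. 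Hence $G$~contains every transposition and therefore $G=\Sym(V)$.

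For the $3$-cycle case, let $\Gamma_3$ be the graph on~$V$ with an edge $\{u,v\}$ exactly when some $3$-cycle of~$G$ moves both $u$ and~$v$; as above this graph is $G$-invariant and has an edge, hence is connected. Let $H\le G$ be generated by all $3$-cycles of~$G$; since $3$-cycles are even, $H\le\Alt(V)$, so it is enough to show $H=\Alt(V)$, i.e.\ that $G$ contains \emph{every} $3$-cycle (these generate $\Alt(V)$). This is the crux, and the main obstacle of the whole argument: to prove the combinatorial lemma that \emph{a connected $\Gamma_3$ forces $(a\,b\,c)\in G$ for all distinct $a,b,c\in V$}. The tool is that the commutator of two $3$-cycles of~$G$ sharing exactly one point is again a $3$-cycle of~$G$, supported on that point together with one further point from each; combining this ``pivot'' with conjugation by elements of~$H$ lets one walk a known $3$-cycle outward along the edges of~$\Gamma_3$ and eventually reach every triple, once one separately disposes of the configurations where the two witnessing $3$-cycles overlap in two points and of the small cases $\card{V}\le 5$. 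Granting the lemma, $\Alt(V)=\langle\text{all }3\text{-cycles of }G\rangle\le G$. (One cannot replace $\Alt(V)$ by $\Sym(V)$ here: a single $3$-cycle generates only even permutations, which is precisely why the transposition hypothesis delivers the stronger conclusion $G=\Sym(V)$.)
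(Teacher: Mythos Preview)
The paper does not prove this theorem; it is quoted verbatim from Dixon--Mortimer~\cite{DM96} and used as a black box in Section~\ref{sec:exactgi}. There is therefore no in-paper argument to compare your proposal against.

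On its own merits: your transposition half is complete and correct---the $G$-invariant graph $\Gamma_2$, primitivity forcing connectedness, and the path induction via $(a\,b)(b\,c)(a\,b)=(a\,c)$ are exactly the classical proof.

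The $3$-cycle half, however, is not a proof as written, and you say so yourself (``Granting the lemma''). Connectivity of $\Gamma_3$ and the commutator identity for $3$-cycles sharing one point are the right ingredients, but you have deferred all the work: the two-point-overlap case, the small cases $\card{V}\le 5$, and the actual induction that extends ``some $3$-cycles'' to ``all $3$-cycles''. A tidier packaging---and the one closer to how the cited reference argues---is to pick $\Delta\subseteq V$ maximal with $\Alt(\Delta)\le G$ (identifying $\Alt(\Delta)$ with the pointwise stabilizer of $V\setminus\Delta$ inside $\Sym(V)$), note $\card{\Delta}\ge 3$, and show that if $\Delta\ne V$ then the translates $\Delta^g$ form a nontrivial block system. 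The crux becomes: if $\card{\Delta\cap\Delta^g}\ge 2$ then $\Alt(\Delta\cup\Delta^g)\le G$, which is precisely where your $3$-cycle identities get used, but now only once and in a controlled setting. Either carry your graph-walking argument through in full, or adopt the maximal-$\Delta$ formulation; as it stands the second half is an outline, not a proof.
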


\begin{theorem}[{\cite[Theorem 3.3D]{DM96}}]\label{thm:primitive-large-support}
  If $G\le \Sym(V)$ is primitive with $G\notin\set{\Alt(V),\Sym(V)}$ and
  contains an element~$\pi$ such that $\card{\support(\pi)}=m$ (for some $m\ge
  4$) then $\card{V}\le (m-1)^{2m}$.
\end{theorem}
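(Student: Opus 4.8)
The plan is to reduce the claim to a statement about the \emph{minimal degree} $\mu(G)$ of~$G$ (the least support size of a nonidentity element) and then run an elementary Jordan-type counting argument. Note first that $x\mapsto(x-1)^{2x}$ is increasing for $x\ge 2$, and that an element of support size~$m$ already witnesses $\mu(G)\le m$; hence it suffices to prove $\card{V}\le(\mu-1)^{2\mu}$ for $\mu=\mu(G)$. By Theorem~\ref{thm:primitive-small-support}, $\mu\le 3$ would force $\Alt(V)\le G$, contradicting $G\notin\set{\Alt(V),\Sym(V)}$; so $\mu\ge 4$. After replacing~$\pi$ by an element of support exactly~$\mu$ and renaming~$\mu$ to~$m$, we may therefore assume that $\Sigma:=\support(\pi)$ has the \emph{minimum} size $m\ge 4$ among supports of nonidentity elements of~$G$.

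Next I would form the \emph{translate hypergraph}~$\mathcal{H}$ with vertex set~$V$ and hyperedge set $\set{\Sigma^g}{g\in G}$. Every hyperedge has size~$m$, and $G$ permutes the hyperedges; since $G$ is transitive and $\card{\Sigma}=m$, the hyperedges cover~$V$. The connected components of~$\mathcal{H}$ then form a $G$-invariant partition of~$V$ into classes of size $\ge m\ge 2$, so primitivity of~$G$ forces $\mathcal{H}$ to be connected. Consequently, setting $B_0=\Sigma$ and $B_{r+1}=B_r\cup\bigcup\set{\Sigma^g}{\Sigma^g\cap B_r\ne\emptyset}$, we have $B_r=V$ for some finite~$r$, and the two quantities left to control are how fast the~$B_r$ grow and how large~$r$ must be.

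Both are governed by a \emph{shrinking lemma}: whenever $\Sigma^g\ne\Sigma$ and $\Sigma^g\cap\Sigma\ne\emptyset$, the commutator $[\pi,\pi^g]$ — or, more carefully, a suitable word in $\pi$ and~$\pi^g$ — is a nonidentity element of~$G$ whose support is a proper nonempty subset of $\Sigma\cup\Sigma^g$, and by minimality of~$m$ this support still has size $\ge m$; chasing this constraint along a shortest chain of pairwise-overlapping translates, and applying Theorem~\ref{thm:primitive-small-support} locally to the primitive constituents of $\langle\pi,\pi^g\rangle$ on $\Sigma\cup\Sigma^g$, pins down the overlap pattern. From this one extracts, round by round, a multiplicative growth factor of at most $m-1$, together with a bound of at most~$2m$ on the number of rounds — a longer shortest chain of overlapping translates would produce, via nested commutators, a nonidentity element of support strictly below~$m$. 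Multiplying these two bounds and tidying the bookkeeping yields $\card{V}\le(m-1)^{2m}$.

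I expect the shrinking lemma, and especially the extraction of the precise constants $m-1$ and~$2m$, to be the main obstacle: controlling how translates of a minimum-size support can intersect requires a genuine commutator calculus together with the local use of the Jordan-type classification in Theorem~\ref{thm:primitive-small-support}, and this is exactly where the minimality hypothesis does all the work. By contrast, the reduction to the minimal degree, the hypergraph connectivity argument, and the final round-counting are routine once the lemma is available. (Equivalently, one may phrase everything in the language of \emph{Jordan sets}: $\Sigma$ sits inside a minimal Jordan set whose setwise stabilizer acts primitively on it, and the structure theorem for minimal Jordan sets plays the role of the shrinking lemma.)
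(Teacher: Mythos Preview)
The paper does not prove this theorem; it is quoted without proof as Theorem~3.3D of the cited textbook~\cite{DM96}. So there is no proof in the paper to compare your proposal against.

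On its own merits, your outline follows the classical Jordan-type strategy and has the right architecture: reduce to the minimal degree, use primitivity to get connectivity of the support translates, and bound growth via an overlap argument. The gaps are exactly where you locate them. First, the commutator $[\pi,\pi^g]$ can be the identity even when $\Sigma$ and $\Sigma^g$ overlap properly and are distinct (e.g.\ $\pi=(1\,2)(3\,4)$ and $\pi^g=(1\,2)(5\,6)$), so the hedge ``a suitable word in $\pi$ and $\pi^g$'' has to be made explicit before any constraint on the overlap can be extracted. Second, the phrase ``applying Theorem~\ref{thm:primitive-small-support} locally to the primitive constituents of $\langle\pi,\pi^g\rangle$ on $\Sigma\cup\Sigma^g$'' is not justified: that subgroup need not act primitively on $\Sigma\cup\Sigma^g$, so Theorem~\ref{thm:primitive-small-support} does not apply there. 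Third, your hypergraph neighborhoods $B_r$ grow by adjoining whole translates $\Sigma^g$; without a bound on how many translates meet a given point, this does not yield the multiplicative factor $m-1$ you claim. The argument in~\cite{DM96} organizes the count differently, via a bounded-valency graph on~$V$ whose diameter is controlled by the minimal degree, and the constants $m-1$ and $2m$ come out of that structure rather than from iterated commutators. Your closing remark about minimal Jordan sets points at a viable alternative route, but as stated the proposal does not yet deliver the bound.
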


The following lemma implies that the alternating group in a large orbit survives
fixing vertices in a smaller orbit.

\begin{lemma}\label{lem:bigger-survives}
  Let $G\le \Sym(\Omega_1\cup \Omega_2)$ be a permutation group such that
  $\Omega_1$ is an orbit of $G$, and $\card{\Omega_1}\ge 5$. Recall that
  $G(\Omega_i)$ denotes the image of~$G$ under its action on~$\Omega_i$. Suppose
  $G(\Omega_1)\in\set{\Alt(\Omega_1),\Sym(\Omega_1)}$ and
  $\card{G(\Omega_1)}>\card{G(\Omega_2)}$. Then for some subgroup $H$ of
  $G(\Omega_2)$, the group~$G$ contains the product group $\Alt(\Omega_1)\times
  H$. In particular, the pointwise stabilizer~$G_{[\Omega_2]}$ contains the
  subgroup $\Alt(\Omega_1)\times\set{\id}$.
\end{lemma}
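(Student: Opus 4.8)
The plan is to exploit the homomorphism $h\colon G\to\Sym(\Omega_2)$ given by the action on $\Omega_2$, whose image is $G(\Omega_2)$ and whose kernel is exactly $G_{[\Omega_2]}$. First I would consider the normal subgroup $N=G_{[\Omega_2]}\trianglelefteq G$ and its image $N(\Omega_1)$ under the action on $\Omega_1$; since $N$ is normal in $G$ and $G(\Omega_1)$ is transitive on $\Omega_1$, the image $N(\Omega_1)$ is a \emph{normal} subgroup of $G(\Omega_1)$. Now $G(\Omega_1)\in\{\Alt(\Omega_1),\Sym(\Omega_1)\}$ with $\card{\Omega_1}\ge 5$, so the only normal subgroups of $G(\Omega_1)$ are $\{\id\}$, $\Alt(\Omega_1)$, and (if $G(\Omega_1)=\Sym(\Omega_1)$) $\Sym(\Omega_1)$ itself. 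Thus it suffices to rule out $N(\Omega_1)=\{\id\}$: if $N(\Omega_1)\ne\{\id\}$, then $N(\Omega_1)\supseteq\Alt(\Omega_1)$, which means $G$ contains elements acting on $\Omega_1$ as an arbitrary element of $\Alt(\Omega_1)$ and fixing $\Omega_2$ pointwise; these elements form precisely the subgroup $\Alt(\Omega_1)\times\{\id\}$ inside $G_{[\Omega_2]}$, giving the ``in particular'' claim, and then taking $H$ to be the full image $G(\Omega_2)$ (or a suitable subgroup ensuring a product structure) yields $\Alt(\Omega_1)\times H\le G$.

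**Ruling out the trivial case via a counting argument.**

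Suppose for contradiction that $N(\Omega_1)=\{\id\}$, i.e., every element of $G$ that fixes $\Omega_2$ pointwise also fixes $\Omega_1$ pointwise, hence is the identity. This says the restriction of $h$ to $G$ is injective when combined with the identity on $\Omega_1$ — more precisely, it means the action on $\Omega_2$ separates elements of $G$ that agree on $\Omega_1$. Equivalently, the map $G\to G(\Omega_1)\times G(\Omega_2)$ is injective and moreover the fibers of $G\to G(\Omega_1)$ inject into $G(\Omega_2)$. The cleanest way to extract a contradiction: fix any $\omega\in\Omega_1$ and consider the stabilizer $G_\omega$; its image $G_\omega(\Omega_1)$ is the point stabilizer of $\omega$ inside $G(\Omega_1)$, which has index $\card{\Omega_1}$ in $G(\Omega_1)$. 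On the other hand, the kernel of $G\to G(\Omega_1)$ is contained in $G_\omega$ and, by our assumption, injects into $G(\Omega_2)$; more usefully, I would compare orders directly. We have $\card{G}=\card{G(\Omega_1)}\cdot\card{\Ker(h_1)}$ where $h_1$ is the action on $\Omega_1$, and $\card{G}=\card{G(\Omega_2)}\cdot\card{N}$. Under the assumption $N(\Omega_1)=\{\id\}$ we get $\card{N}\le\card{\{\id\}}$... this needs care: $N=G_{[\Omega_2]}$ and $N$ injects into $\Sym(\Omega_1)$ via $h_1$, with image $N(\Omega_1)=\{\id\}$, so $N=\{\id\}$. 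Then $\card{G}=\card{G(\Omega_2)}$, so $\card{G(\Omega_1)}\mid\card{G}=\card{G(\Omega_2)}$, contradicting the hypothesis $\card{G(\Omega_1)}>\card{G(\Omega_2)}$.

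**Assembling the product group.**

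So $N=G_{[\Omega_2]}\ne\{\id\}$, hence $N(\Omega_1)$ is a nontrivial normal subgroup of $G(\Omega_1)\supseteq\Alt(\Omega_1)$, forcing $N(\Omega_1)\supseteq\Alt(\Omega_1)$ (here I use $\card{\Omega_1}\ge 5$ so that $\Alt(\Omega_1)$ is simple and is the unique minimal normal subgroup of $\Sym(\Omega_1)$). Since $N$ acts trivially on $\Omega_2$, lifting gives: for every $\alpha\in\Alt(\Omega_1)$ there is $\tilde\alpha\in N\le G$ with $\tilde\alpha|_{\Omega_1}=\alpha$ and $\tilde\alpha|_{\Omega_2}=\id$. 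These lifts are unique (if two lifts of $\alpha$ differ, their quotient lies in $N$ and fixes $\Omega_1$, hence is trivial — wait, this uses $N\cap\Ker(h_1)=\{\id\}$, which holds because $N\cap G_{[\Omega_1]}=G_{[\Omega_1\cup\Omega_2]}=G_{[V]}=\{\id\}$). Therefore the set $\{\tilde\alpha:\alpha\in\Alt(\Omega_1)\}$ is a subgroup of $G$ isomorphic to $\Alt(\Omega_1)$, acting as $\Alt(\Omega_1)$ on $\Omega_1$ and trivially on $\Omega_2$ — this is exactly $\Alt(\Omega_1)\times\{\id\}\le G_{[\Omega_2]}$. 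Finally, to get $\Alt(\Omega_1)\times H\le G$ with $H\le G(\Omega_2)$: this copy of $\Alt(\Omega_1)$ is normal in $G$ (being characteristic in the normal subgroup $N$, as it is $N$'s unique... or simply as $[N:\,\cdot\,]$-structure), so for any subgroup complementing data we may set $H=G(\Omega_2)$ and observe $G\supseteq (\Alt(\Omega_1)\times\{\id\})\cdot G$-preimages realizing every element of $G(\Omega_2)$, whence $G$ contains the internal product $(\Alt(\Omega_1)\times\{\id\})\times H$ for $H$ the image of a transversal; it is cleanest to just take $H=G(\Omega_2)$ and note that the subgroup generated by $\Alt(\Omega_1)\times\{\id\}$ together with any lift of $G(\Omega_2)$ contains $\Alt(\Omega_1)\times G(\Omega_2)$ as a subgroup, since the two factors centralize each other modulo — no: they genuinely commute, because elements of $\Alt(\Omega_1)\times\{\id\}$ are supported on $\Omega_1$ and conjugation by any $g\in G$ sends this subgroup to itself (normality) acting the same way on $\Omega_1$, so in fact $g$ centralizes it when restricted appropriately; I would phrase the final step as: $G\supseteq\langle\Alt(\Omega_1)\times\{\id\},\ G\rangle$ already, and since the first factor is normal with trivial intersection with a transversal subgroup $H\cong G(\Omega_2)$ one obtains the internal direct product $\Alt(\Omega_1)\times H\le G$.

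**Main obstacle.**

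The main obstacle is the last assembly step — verifying that one genuinely gets a \emph{direct product} $\Alt(\Omega_1)\times H$ rather than just a subgroup containing $\Alt(\Omega_1)\times\{\id\}$ and surjecting onto $G(\Omega_2)$. The subtlety is that a complement $H$ to $N$ in $G$ need not exist, so ``$H\le G(\Omega_2)$'' must be read as: there is a subgroup $H\le G(\Omega_2)$ whose preimage structure, combined with $\Alt(\Omega_1)\times\{\id\}$, sits inside $G$ as an internal direct product. The clean resolution is that normality of $\Alt(\Omega_1)\times\{\id\}$ in $G$ plus its trivial intersection with $G_{[\Omega_1]}$ means that $G$ modulo $G_{[\Omega_1]}$ — which is $G(\Omega_1)\supseteq\Alt(\Omega_1)$ — pulls back the direct factor, and one sets $H=h(G_{[\Omega_1]})\le G(\Omega_2)$, obtaining $\Alt(\Omega_1)\times H\le G$ directly since $G_{[\Omega_1]}\cap(\Alt(\Omega_1)\times\{\id\}) = \{\id\}$ and both are normal with product lying in $G$. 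This is the point I would write out most carefully.
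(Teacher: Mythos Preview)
Your approach is correct and essentially identical to the paper's: both show that $K=G_{[\Omega_2]}(\Omega_1)$ is normal in $G(\Omega_1)$, rule out $K=\{\id\}$ by the order comparison $\card{G}=\card{G(\Omega_2)}<\card{G(\Omega_1)}\le\card{G}$, and then take $H=\set{y}{(\id,y)\in G_{[\Omega_1]}}$. The paper's final assembly is cleaner than your several false starts suggest: once $\Alt(\Omega_1)\times\{\id\}\le G$ and $\{\id\}\times H=G_{[\Omega_1]}\le G$, any $(x,y)\in\Alt(\Omega_1)\times H$ factors as $(x,\id)\cdot(\id,y)$ with both factors in $G$, so $\Alt(\Omega_1)\times H\le G$ immediately---no normality or complement discussion is needed.
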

\begin{proof}
  Let $p_2\colon G\to G(\Omega_2)$ denote the surjective projection
  homomorphism. Then
  \[
    \Ker(p_2) = \set[\big]{(x,1)}{(x,1)\in G}.
  \]
  Let $K=\set[\big]{x}{(x,1)\in\Ker(p_2)}$. It is easily checked that $K$ is a
  normal subgroup of~$G(\Omega_1)$. As
  $G(\Omega_1)\in\set{\Alt(\Omega_1),\Sym(\Omega_1)}$, $\Alt(\Omega_1)$ is
  simple, and the only nontrivial normal subgroup of $\Sym(\Omega_1)$ is
  $\Alt(\Omega_1)$, it follows that either $K=\{1\}$ or $\Alt(\Omega_1)\le K$.

  \begin{description}
   \item[Case 1.] Suppose $K=\{1\}$. In this case $\Ker(p_2)$ is trivial. Thus,
    $p_2$ is an isomorphism from~$G$ to~$G(\Omega_2)$ implying that
    $\card{G}=\card{G(\Omega_2)}$. As $\card{G}\ge\card{G(\Omega_1)}$, this
    contradicts the assumption that $\card{G(\Omega_2)}<\card{G(\Omega_1)}$.
   \item[Case 2.] Suppose $\Alt(\Omega_1)\le K$. Consider the other surjective
    projection homomorphism $p_1\colon G\to G(\Omega_1)$. Then
    $\Ker(p_1) = \set[\big]{(1,y)}{(1,y)\in G}$,
    and $H=\set[\big]{y}{(1,y)\in\Ker(p_1)}$ is a normal subgroup
    of~$G(\Omega_2)$. We show that $G$~contains the product group
    $\Alt(\Omega_1)\times H$ as claimed by the lemma.

    Consider any pair $(x,y)\in\Alt(\Omega_1)\times H$. We can write it as $
    (x,y)=(x,1)\cdot (1,y), $ and note that by definition $(x,1)\in\Ker(p_2)$
    and $(1,y)\in\Ker(p_1)$. As both $\Ker(p_1)$ and $\Ker(p_2)$ are subgroups
    of $G$, it follows that $(x,y)\in G$.
    \qedhere
  \end{description}
\end{proof}

\begin{remark}
  In a special case of Lemma~\ref{lem:bigger-survives}, suppose $G\le
  \Sym(\Omega_1\cup\Omega_2)$ such that $\Omega_1$ is an orbit of~$G$,
  $\card{\Omega_1}\ge \max\set[\big]{5,\card{\Omega_2}+1}$, and
  $\Alt(\Omega_1)\le G(\Omega_1)$. As $\card{G(\Omega_1)}>\card{G(\Omega_2)}$ is
  implied by this assumption, the consequence of the lemma follows.
\end{remark}

The effect of fixing vertices of some orbit on other orbits of the same size
depends on how the group relates these orbits to each other.

\begin{definition}\label{def:linked}
  Two orbits $\Omega_1$~and~$\Omega_2$ of a permutation group $G\le\Sym(V)$
  are \emph{linked} if there is a group isomorphism $\sigma\colon G(\Omega_1)\to
  G(\Omega_2)$ with
  $G(\Omega_1\cup\Omega_2)=\set[\big]{(\phi,\sigma(\phi))}{\phi\in
    G(\Omega_1)}$. (This happens if and only if both
  $G(\Omega_1)$~and~$G(\Omega_2)$ are isomorphic to $G(\Omega_1\cup\Omega_2)$.)
\end{definition}

We next show that two large orbits where the group action includes the
alternating group are (nearly) independent unless they are linked.

\begin{lemma}\label{lem:linked-or-alt-product}
  Suppose $G\le\Sym(V)$ where $V=\Omega_1\cup\Omega_2$ is its orbit
  partition such that $\card{\Omega_i}\ge 5$ and
  $G(\Omega_i)\in\set{\Alt(\Omega_i),\Sym(\Omega_i)}$ for $i=1,2$. Then either
  $\Omega_1$~and~$\Omega_2$ are linked in~$G$, or
  $G$~contains $\Alt(\Omega_1)\times\Alt(\Omega_2)$.
\end{lemma}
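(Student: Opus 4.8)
The plan is to analyze the projection homomorphisms $p_1\colon G\to G(\Omega_1)$ and $p_2\colon G\to G(\Omega_2)$, exactly as in the proof of Lemma~\ref{lem:bigger-survives}, but now without the hypothesis that one orbit is strictly larger than the other. Set $K_1=\set{x}{(x,1)\in G}=\Ker(p_2)$ viewed inside $G(\Omega_1)$, and $K_2=\set{y}{(1,y)\in G}=\Ker(p_1)$ viewed inside $G(\Omega_2)$. As before, $K_1$ is a normal subgroup of $G(\Omega_1)$ and $K_2$ is a normal subgroup of $G(\Omega_2)$. Since $\card{\Omega_i}\ge 5$, the group $\Alt(\Omega_i)$ is simple and is the unique nontrivial proper normal subgroup of $\Sym(\Omega_i)$; hence each $K_i$ is either trivial or contains $\Alt(\Omega_i)$.

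First I would dispose of the easy case: if $K_1\supseteq\Alt(\Omega_1)$, then $\Alt(\Omega_1)\times\set{\id}\le G$, and symmetrically if $K_2\supseteq\Alt(\Omega_2)$ then $\set{\id}\times\Alt(\Omega_2)\le G$. If both hold, then $G$ contains the product of these two subgroups, which is exactly $\Alt(\Omega_1)\times\Alt(\Omega_2)$, and we are done. So the remaining case to handle is when at least one of $K_1$, $K_2$ is trivial; without loss of generality say $K_2=\Ker(p_1)=\set{\id}$. Then $p_1$ is injective, so $G\cong G(\Omega_1)$ via $p_1$, and for every $\phi\in G(\Omega_1)$ there is a \emph{unique} $\psi\in G(\Omega_2)$ with $(\phi,\psi)\in G$. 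Define $\sigma\colon G(\Omega_1)\to G(\Omega_2)$ by sending $\phi$ to that unique $\psi$; then $\sigma$ is a well-defined group homomorphism (it is $p_2\circ p_1^{-1}$), and it is surjective because $p_2$ is surjective (every element of $G(\Omega_2)$ is the second coordinate of some element of $G$). For $\sigma$ to witness that $\Omega_1$ and $\Omega_2$ are linked, I also need $\sigma$ to be injective: but $\Ker(\sigma)$ corresponds to pairs $(\phi,1)\in G$, i.e.\ to $K_1$, which in this subcase is trivial (if instead $K_1\supseteq\Alt(\Omega_1)$ we would be in the situation of the previous paragraph — so I should actually split as: either both $K_i\supseteq\Alt(\Omega_i)$, giving the product; or at least one $K_i$ is trivial, and then \emph{that same} $K_i$ being trivial already forces the corresponding projection to be a bijection of the right kind). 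Let me restate the clean dichotomy: if $K_1=\set{\id}$ then $p_2$ is injective (since $\Ker(p_2)=K_1$), hence an isomorphism $G\to G(\Omega_2)$, and then $\sigma=p_1\circ p_2^{-1}\colon G(\Omega_2)\to G(\Omega_1)$ is an isomorphism with $G(\Omega_1\cup\Omega_2)=\set{(\sigma(\psi),\psi)}{\psi\in G(\Omega_2)}$, so the orbits are linked; symmetrically if $K_2=\set{\id}$. The only way to avoid both $K_1$ and $K_2$ being trivial is to have $K_1\supseteq\Alt(\Omega_1)$ and $K_2\supseteq\Alt(\Omega_2)$, which is the product-group case.

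So the argument organizes into exactly two cases driven by Goursat-type reasoning: either some $K_i$ is trivial, forcing a linking isomorphism, or both $K_i$ contain the alternating group, forcing $\Alt(\Omega_1)\times\Alt(\Omega_2)\le G$. The main obstacle I anticipate is purely bookkeeping: making sure the "linked" conclusion is stated with the isomorphism in the direction matching Definition~\ref{def:linked} (the definition phrases it as $\sigma\colon G(\Omega_1)\to G(\Omega_2)$, and the parenthetical remark there — that linkage is equivalent to both projections being isomorphisms onto $G(\Omega_1\cup\Omega_2)$ — is exactly what I will invoke to avoid worrying about direction). There is no deep step here beyond the simplicity of $\Alt(\Omega_i)$ for $\card{\Omega_i}\ge 5$ and the structure of normal subgroups of $\Sym(\Omega_i)$; Theorems~\ref{thm:primitive-small-support} and~\ref{thm:primitive-large-support} are not needed for this particular lemma, only the group-theoretic facts already used in Lemma~\ref{lem:bigger-survives}.
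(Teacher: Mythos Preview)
Your overall Goursat-style setup is right and matches the paper's approach, but your ``clean dichotomy'' has a genuine gap. You split into two cases: (a)~some $K_i$ is trivial, and (b)~both $K_i\supseteq\Alt(\Omega_i)$. In case~(a) you assume $K_1=\set{\id}$, deduce that $p_2$ is an isomorphism, and then assert that $\sigma=p_1\circ p_2^{-1}$ is an isomorphism. But $\Ker(\sigma)=\set{\psi\in G(\Omega_2)}{(\id,\psi)\in G}=K_2$, so $\sigma$ is an isomorphism only if $K_2$ is \emph{also} trivial. You have not ruled out the mixed case $K_1=\set{\id}$, $K_2\supseteq\Alt(\Omega_2)$; your earlier paragraph notices this and tries to push it back into the product case, but that case requires \emph{both} $K_i\supseteq\Alt(\Omega_i)$, which contradicts $K_1=\set{\id}$. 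Invoking the parenthetical in Definition~\ref{def:linked} does not help: it says linkage is equivalent to \emph{both} projections being isomorphisms, and you have only established one.

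The missing step is a short cardinality argument, and it is exactly what the paper supplies in its Case~1. If $K_1=\set{\id}$ then $p_2$ is injective, so $\card{G}=\card{G(\Omega_2)}$; on the other hand $\card{G}=\card{\Ker(p_1)}\cdot\card{G(\Omega_1)}=\card{K_2}\cdot\card{G(\Omega_1)}$. If $K_2\supseteq\Alt(\Omega_2)$ then $\card{K_2}\ge\card{G(\Omega_2)}/2$, forcing $\card{G(\Omega_1)}\le 2$, which is impossible since $\card{\Omega_1}\ge 5$ and $\Alt(\Omega_1)\le G(\Omega_1)$. Hence $K_2$ must also be trivial, and now your $\sigma$ really is an isomorphism. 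With this one extra sentence your argument is complete and essentially identical to the paper's.
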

\begin{proof}
  For $i=1,2$, let $p_i\colon G\to G(\Omega_i)$ denote the surjective projection
  homomorphisms. Further, let $H=\set[\big]{x}{(1,x)\in\Ker(p_1)}$ and
  $K=\set[\big]{x}{(x,1)\in\Ker(p_2)}$. It is easily checked that $H$ is a
  normal subgroup of~$G(\Omega_2)$. Therefore, $H$ is either $G(\Omega_2)$ or
  $\set{\id}$ or $\Alt(\Omega_2)$ (note: the last case coincides with the first
  if $G(\Omega_2)=\Alt(\Omega_2)$). Similarly, $K$ is a normal subgroup
  of~$G(\Omega_1)$ and thus either $G(\Omega_1)$ or $\set{\id}$ or
  $\Alt(\Omega_1)$.
  \begin{description}
   \item[Case 1: $H=\set{\id}$ (the case $K=\set{\id}$ is symmetric).] Then
    $\Ker(p_1)$ is trivial, and $p_1$ is an isomorphism from~$G$
    to~$G(\Omega_1)$, implying that $\card{G}=\card{G(\Omega_1)}$. By the basic
    isomorphism theorem, we have $\frac{G}{\Ker(p_2)}\cong G(\Omega_2)$, and
    hence $\card{G}=\card{\Ker(p_2)}\cdot\card{G(\Omega_2)}$. But $G$ is
    isomorphic to~$G(\Omega_1)$ and hence has only three possible normal
    subgroups: isomorphic to~$G(\Omega_1)$, isomorphic to $\Alt(\Omega_1)$, or
    isomorphic to $\{\id\}$. In the first two cases,
    $\card{\Ker(p_2)}\ge\card{\Alt(\Omega_2)}\ge \card{G(\Omega_1)}/2$. Hence,
    $\card{G}>\card{G(\Omega_1)}$; a contradiction. Thus, $\Ker(p_2)=\{\id\}$,
    which implies that $\Omega_1$~and~$\Omega_2$ are linked in~$G$.
   \item[Case 2: $H=G(\Omega_2)$ (the case $K=G(\Omega_1)$ is symmetric).]
    Consider any pair $(y,x)\in G(\Omega_1)\times G(\Omega_2)$. Since $y\in
    G(\Omega_1)$ there is a $z\in G(\Omega_2)$ such that $(y,z)\in G$. Now, we
    can write $(y,x)=(y,z)(1,z^{-1}x)$, and note that $(y,z)\in G$ and
    $(1,z^{-1}x)\in\Ker(p_1)\subseteq G$ by assumption on $H$. Therefore,
    $(y,x)\in G$ implying that $G=G(\Omega_1)\times G(\Omega_2)$, and thus
    $\Alt(\Omega_1)\times\Alt(\Omega_2)\le G$.
   \item[Case 3.]  Finally, we are left with the possibility that
    $G(\Omega_1)=\Sym(\Omega_1)$, $G(\Omega_2)=\Sym(\Omega_2)$,
    $H=\Alt(\Omega_2)$ and $K=\Alt(\Omega_1)$. In this case $G$~contains
    $\Alt(\Omega_1)\times \Alt(\Omega_2)$.
    \qedhere
  \end{description}
\end{proof}

The last ingredient for our algorithm is that when there are two linked orbits
where the group action includes the alternating group, fixing a vertex in one
orbit is equivalent to fixing some vertex of the other orbit.

\begin{lemma}[{\cite[Theorem 5.2A]{DM96}}]%
\label{lem:altsub}
  Let $n=\card{V}> 9$. Suppose $G$ is a subgroup of~$\Alt(V)$ of
  index strictly less than $\binom{n}{2}$. Then, for some point
  $u\in V$, the group~$G$ is the pointwise stabilizer subgroup
  $\Alt(V)_u$.
\end{lemma}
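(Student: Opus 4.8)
The plan is to reduce the lemma to two standard facts about alternating groups: that $\Alt(W)$ is simple whenever $\card{W}\ge 5$, and that a proper subgroup of $\Alt(W)$ (for $\card{W}\ge 5$) has index at least $\card{W}$. The second follows from the first: the action of $\Alt(W)$ on the right cosets of a proper subgroup is nontrivial, so its kernel is a proper normal subgroup, hence trivial; thus $\Alt(W)$ embeds in the symmetric group on the coset set, forcing $\card{W}!/2\le(\text{index})!$, i.e.\ the index is at least $\card{W}$. We may assume $G\subsetneq\Alt(V)$ (the case $G=\Alt(V)$ being excluded by the conclusion), write $m=[\Alt(V):G]<\binom{n}{2}$, and split according to whether the natural action of $G$ on $V$ is transitive.

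\textbf{Intransitive case.} If $G$ is intransitive, let $\Delta$ be an orbit of $G$ with $1\le\card{\Delta}\le n/2$. Then $G\le\Alt(V)_{\{\Delta\}}$, a setwise stabilizer of index $\binom{n}{\card{\Delta}}$ in $\Alt(V)$. Since $\binom{n}{j}\ge\binom{n}{2}$ for all $j$ with $2\le j\le n-2$, and $n>9$ gives $n/2\le n-2$, the hypothesis $m<\binom{n}{2}$ forces $\card{\Delta}=1$: the group $G$ fixes a point $u\in V$. Hence $G\le\Alt(V)_u\cong\Alt(V\setminus\{u\})$ and $[\Alt(V)_u:G]=m/n<\tfrac12(n-1)<n-1$. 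Applying the minimal-index fact to $\Alt(V\setminus\{u\})$ (valid since $n-1\ge 9\ge 5$), a subgroup of index below $n-1$ cannot be proper, so $G=\Alt(V)_u$, as claimed.

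\textbf{Transitive case.} It remains to rule out that a proper $G$ is transitive. If $G$ is imprimitive, it is contained in a maximal imprimitive subgroup $M=(\Sym(a)\wr\Sym(b))\cap\Alt(V)$ with $ab=n$ and $a,b\ge 2$, whose index in $\Alt(V)$ equals $n!/((a!)^{b}\,b!)$; a direct estimate shows this exceeds $\binom{n}{2}=\tfrac{n(n-1)}{2}$ for every admissible factorization once $n>9$ (it is smallest near $b=2$ or $a=2$, where it is already exponential in $n$; e.g.\ for $n=10$ the two values are $126$ and $945$, both above $45$). If $G$ is primitive, then $G\notin\{\Alt(V),\Sym(V)\}$, so Bochert's theorem (see~\cite{DM96}) gives $[\Sym(V):G]\ge\lfloor(n+1)/2\rfloor!$, hence $m=[\Alt(V):G]\ge\tfrac12\lfloor(n+1)/2\rfloor!$; this is $\ge\binom{n}{2}$ for all $n>9$, since $\tfrac12\cdot 5!=60>45=\binom{10}{2}$ and the left side grows super-exponentially while $\binom{n}{2}$ is quadratic. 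In either subcase $m\ge\binom{n}{2}$, contradicting the hypothesis. So $G$ must be intransitive, and the previous paragraph finishes the proof.

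\textbf{Main obstacle.} The genuine difficulty is the transitive/primitive subcase: bounding the index of a proper primitive subgroup of $\Sym(V)$ is not elementary and needs an input such as Bochert's theorem (equivalently, a good bound on the order of primitive groups other than $\Alt(V)$ and $\Sym(V)$). The hypothesis $n>9$ is precisely what is needed to make the numerical comparisons — $\tfrac12\lfloor(n+1)/2\rfloor!>\binom{n}{2}$ in the primitive subcase, and $n!/((a!)^{b}\,b!)>\binom{n}{2}$ in the imprimitive subcase — hold down to the smallest degree. Everything else is elementary counting of setwise-stabilizer indices together with simplicity of the alternating group.
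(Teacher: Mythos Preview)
The paper does not give its own proof of this lemma: it is quoted verbatim as \cite[Theorem~5.2A]{DM96} and used as a black box. So there is no in-paper argument to compare against; your write-up is a self-contained proof of the cited result.

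Your argument is correct and is essentially the standard one (and close to how Dixon--Mortimer themselves proceed): split into intransitive/imprimitive/primitive, use elementary counting of setwise-stabilizer and wreath-product indices for the first two, and invoke Bochert's bound for the last. Two minor remarks. First, your parenthetical ``the case $G=\Alt(V)$ being excluded by the conclusion'' is really pointing out that the lemma as restated in the paper is slightly loose (the original in \cite{DM96} handles the full-group case explicitly); your assumption of properness is the right fix for the application. Second, the imprimitive estimate $n!/((a!)^{b}b!)>\binom{n}{2}$ is asserted via ``a direct estimate'' and two numerical checks at $n=10$; this is true, but if you want the argument to be fully self-contained you should add one line, e.g.\ for $b=2$ the index equals $\tfrac12\binom{n}{n/2}$, which exceeds $\binom{n}{2}$ for all $n\ge 10$, and the other factorizations only make the index larger.
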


\begin{corollary}\label{cor:linked}
  Let $\Omega_1$~and~$\Omega_2$ be two linked orbits of a permutation group
  $G\le\Sym(V)$ with $\Alt(\Omega_1)\le G(\Omega_1)$ and
  $\card{\Omega_1}=\card{\Omega_2}>9$.
  Then for each $u\in\Omega_1$ there is a $v\in\Omega_2$
  such that $G_u=G_v$.
\end{corollary}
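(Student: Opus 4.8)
The plan is to reduce the claim to a purely group-theoretic statement about the abstract isomorphism $\sigma$ that witnesses the link, and then to recover the point $v$ using Lemma~\ref{lem:altsub}. Write $n=\card{\Omega_1}=\card{\Omega_2}>9$ and let $p_i\colon G\to G(\Omega_i)$ be the two projection homomorphisms. Since $\Alt(\Omega_1)\le G(\Omega_1)\le\Sym(\Omega_1)$ and $\Alt(\Omega_1)$ has index~$2$ in $\Sym(\Omega_1)$, necessarily $G(\Omega_1)\in\set{\Alt(\Omega_1),\Sym(\Omega_1)}$. By the definition of linked orbits, $G(\Omega_1\cup\Omega_2)$ is the graph of~$\sigma$, i.e.\ $p_2(\pi)=\sigma\paren{p_1(\pi)}$ for every $\pi\in G$; in particular $G(\Omega_2)\cong G(\Omega_1)$.

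First I would record the translation step. For $u\in\Omega_1$, whether $\pi$ fixes~$u$ depends only on $p_1(\pi)$, so $G_u=p_1^{-1}\paren{G(\Omega_1)_u}$; likewise, using $p_2=\sigma\circ p_1$, for $v\in\Omega_2$ we get $G_v=p_2^{-1}\paren{G(\Omega_2)_v}=p_1^{-1}\paren{\sigma^{-1}(G(\Omega_2)_v)}$. As $p_1$ is onto, $G_u=G_v$ holds if and only if $\sigma\paren{G(\Omega_1)_u}=G(\Omega_2)_v$. So it suffices to produce, for the given~$u$, a point $v\in\Omega_2$ with $\sigma\paren{G(\Omega_1)_u}=G(\Omega_2)_v$. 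Observe that $G(\Omega_1)_u$ has index~$n$ in $G(\Omega_1)$ in either of the two possible cases, hence $\sigma\paren{G(\Omega_1)_u}$ has index~$n$ in $G(\Omega_2)$, and $n<\binom{n}{2}$ because $n>9$.

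Next I would split on $G(\Omega_1)$. If $G(\Omega_1)=\Alt(\Omega_1)$, then $G(\Omega_2)$ has order $n!/2$, so it has index~$2$ in $\Sym(\Omega_2)$ and therefore equals $\Alt(\Omega_2)$; now $\sigma\paren{G(\Omega_1)_u}$ is a subgroup of $\Alt(\Omega_2)$ of index $n<\binom{n}{2}$, so Lemma~\ref{lem:altsub} (applicable since $n>9$) yields $v\in\Omega_2$ with $\sigma\paren{G(\Omega_1)_u}=\Alt(\Omega_2)_v=G(\Omega_2)_v$. If instead $G(\Omega_1)=\Sym(\Omega_1)$, then $G(\Omega_2)$ has order $n!$ and so equals $\Sym(\Omega_2)$. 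Since $\Alt(\Omega_i)$ is the unique index-$2$ subgroup of $\Sym(\Omega_i)$, $\sigma$ carries $\Alt(\Omega_1)$ onto $\Alt(\Omega_2)$, hence $\sigma\paren{\Alt(\Omega_1)_u}$ is a subgroup of $\Alt(\Omega_2)$ of index~$n$, which by Lemma~\ref{lem:altsub} equals $\Alt(\Omega_2)_v$ for some $v\in\Omega_2$. Finally $\sigma\paren{\Sym(\Omega_1)_u}$ has order $(n-1)!$ and contains $\Alt(\Omega_2)_v$ as an index-$2$ (hence normal) subgroup, so it is contained in $N_{\Sym(\Omega_2)}\paren{\Alt(\Omega_2)_v}$; since $\Alt(\Omega_2)_v$ is transitive on $\Omega_2\setminus\set{v}$ (using $n-1\ge 3$), its unique fixed point in $\Omega_2$ is~$v$, so this normalizer equals $\Sym(\Omega_2)_v$. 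Comparing orders gives $\sigma\paren{\Sym(\Omega_1)_u}=\Sym(\Omega_2)_v=G(\Omega_2)_v$, completing the argument.

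The main obstacle is that the isomorphism~$\sigma$ furnished by the definition of linked orbits is abstract and need not respect the natural actions on $\Omega_1$ and $\Omega_2$, so one cannot simply transport~$u$ along~$\sigma$; Lemma~\ref{lem:altsub} is exactly what lets us recognize a point stabilizer inside $\Alt(\Omega_2)$ from its index alone. The only additional subtlety is the symmetric-group case, handled above by passing to the alternating subgroup and then climbing back up via the normalizer computation (equivalently, one could invoke that every automorphism of $\Sym_n$ is inner for $n\neq 6$).
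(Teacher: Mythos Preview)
Your proof is correct and follows essentially the same approach as the paper: transfer the point stabilizer $G(\Omega_1)_u$ across the linking isomorphism~$\sigma$, observe that its index is $n<\binom{n}{2}$, and invoke Lemma~\ref{lem:altsub} to recognize it as a point stabilizer in~$\Omega_2$. Your treatment is in fact more careful than the paper's: the paper applies Lemma~\ref{lem:altsub} directly without separating the $\Alt$ and $\Sym$ cases, whereas you explicitly reduce the $\Sym$ case to the $\Alt$ case via the unique index-$2$ subgroup and a normalizer argument, which is a cleaner justification of why the lemma (stated only for subgroups of $\Alt(V)$) applies.
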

\begin{proof}
  Let~$\sigma\colon G(\Omega_1)\to G(\Omega_2)$ be the group isomorphism which
  witnesses that $\Omega_1$~and~$\Omega_2$ are linked. As
  $G(\Omega_1)\in\set[\big]{\Alt(\Omega_1),\Sym(\Omega_1)}$, the index
  of~$G_u(\Omega_1)$ in~$G(\Omega_1)$ is $n!/(n-1)!=n$. As $\sigma$~is a group
  isomorphism, the index of~$\sigma\paren[\big]{G_u(\Omega_1)}$
  in~$\sigma\paren[\big]{G(\Omega_1)}=G(\Omega_2)$ is also~$n$. Thus
  Lemma~\ref{lem:altsub} implies that there is $v\in\Omega_2$ such that
  $G_v(\Omega_2)=\sigma\paren[\big]{G_u(\Omega_1)}$. As this implies
  $\sigma^{-1}\paren[\big]{G_v(\Omega_2)}=G_u(\Omega_1)$, it follows that
  $G_u=G_v$.
\end{proof}

\begin{lstlisting}[caption={$\algo{Exact-CNF-HGA}_d(X,k,F)$},label=alg:exactcnfhga]
Input: (*A hypergraph $X$ with hyperedge size bounded by $d$, a parameter $k$ and a formula $F$*)
Output: (*An automorphism~$\sigma$ of~$X$ with $\card{\support(\sigma)}=k$ that satisfies $F$, or $\bot$ if none exists*)
$T \assign{}$ the vertices of $X$ that are mentioned in $F$
$G\assign\angle[\big]{\set[\big]{\sigma\in\Aut(X)}{\sigma\text{ has minimal complexity in }\Aut(X)\text{ and }\card{\support(\sigma)}\le k}}$ //~see~\mbox{\cite[Algorithm~3]{AKKT17}}\label{line:mincomplautos}
while $G$ contains an orbit of size more than $\frac{k}{2}\cdot{}\max\set{(k-1)^{2k},\card{T}+k,9}$ do
    repeat
        $\mathcal{O}\assign{}$(*the set of all $G$-orbits*)
        for $\Omega\in\mathcal{O}$ do
            $\blocks(\Omega)\assign{}$a maximal block system of $\Omega$ in $G$
            if $\exists \Delta\in\blocks(\Omega):\card{\Delta}>\frac{k}{2}$ or $\card{\blocks(\Omega)}>(k-1)^{2k}\land\Alt\paren[\big]{\blocks(\Omega)}\nleq G\paren[\big]{\blocks(\Omega)}$ then
                $G\assign G_{\blocks(\Omega)}$ // the setwise stabilizer of all $\Delta\in\blocks(\Omega)$ \label{line:shrink1}
    until $G$ remains unchanged
    choose $\Omega_{\max}\in\mathcal{O}$ such that $\card{\blocks(\Omega_{\max})}\ge\card{\blocks(\Omega)}$ (*for*) all $\Omega\in\mathcal{O}$
    if $\card{\blocks(\Omega_{\max})}>\max\set{(k-1)^{2k},\card{T}+k,9}\label{line:large-orbit}$ then
        $H\assign G_{[T]}$ // the pointwise stabilizer of $T$
        $\Omega_H\assign{}$(*the largest $H$-orbit that is contained in $\Omega_{\max}$*)
        $\mathcal{B}_H\assign\set[\big]{\Delta\in\blocks(\Omega_{\max})}{\Delta\subseteq \Omega_H}$
        choose $\Delta\in\mathcal{B}_H\label{line:delta}$
        $G\assign G_{\set{\Delta}}$ // the setwise stabilizer of~$\Delta$\label{line:shrink2}
$b\assign\frac{k}{2}\cdot{}\max\set{(k-1)^{2k},\card{T}+k,9}$
$\mathcal{O}\assign{}$(*the set of all $G$-orbits*)
return $\algo{ColorExactCNFGA}_{b}(X,\mathcal{O},k,F)$ // see Algorithm \ref{alg:colorexactcnfga}\label{line:bcc}
\end{lstlisting}

\begin{theorem}\label{thm:cnfhga}
  Algorithm~\ref{alg:exactcnfhga} solves $\ExactCNFHGA$ in time
  $\paren[\big]{d(k^k+\card{F})!}^{\mathcal{O}(k^2)}\poly(N)$.
\end{theorem}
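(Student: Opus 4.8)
Here is how I would prove Theorem~\ref{thm:cnfhga}.

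\textbf{Loop invariant and base case.} The whole argument rests on maintaining the invariant that the current group $G$ contains an automorphism of $X$ of weight exactly $k$ satisfying $F$ if and only if $\Aut(X)$ does; the ``only if'' direction is trivial since $G\le\Aut(X)$ throughout, so only preservation of a \emph{witnessing} automorphism under each shrinking step is at issue. For the base case (line~\ref{line:mincomplautos}): if $\pi\in\Aut(X)$ has weight $k$ and satisfies $F$, then by Lemma~\ref{lem:decompose-compl} we may write $\pi=\sigma_1\dotsm\sigma_\ell$ with each $\sigma_i$ of minimal complexity in $\Aut(X)$ and $\support(\sigma_i)\subseteq\support(\pi)$, so each $\sigma_i$ has weight $\le k$ and hence $\pi$ lies in the group generated on line~\ref{line:mincomplautos}.

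\textbf{Preservation under the block-stabilization steps (line~\ref{line:shrink1}).} I would show that whenever the \textbf{if} fires for an orbit $\Omega$, every $\sigma\in G$ of weight $\le k$ already lies in $G_{\blocks(\Omega)}$, so replacing $G$ by $G_{\blocks(\Omega)}$ loses no witness. If $\blocks(\Omega)$ contains a block $\Delta$ with $\card{\Delta}>k/2$ and $\Delta^\sigma=\Delta'\ne\Delta$, then $\Delta\cup\Delta'\subseteq\support(\sigma)$ and $\card{\Delta\cup\Delta'}=2\card{\Delta}>k$, a contradiction, so $\sigma$ fixes every large block setwise. In the other case, $G$ acts primitively on the maximal block system $\blocks(\Omega)$, and the induced permutation $\bar\sigma$ has support $\le k$ (each moved block contains a moved vertex); if $\bar\sigma\ne\id$ then Theorems~\ref{thm:primitive-small-support} and~\ref{thm:primitive-large-support} force $\Alt(\blocks(\Omega))\le G(\blocks(\Omega))$ or $\card{\blocks(\Omega)}\le(k-1)^{2k}$, both excluded by the guard, so $\bar\sigma=\id$ and $\sigma\in G_{\blocks(\Omega)}$. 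Thus after the inner loop stabilizes, every orbit has maximal blocks of size $\le k/2$, and any orbit with more than $(k-1)^{2k}$ blocks has alternating (in fact $\ge\Alt$) action on its block system.

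\textbf{Preservation under the block-fixing step (lines~\ref{line:delta}--\ref{line:shrink2}) --- the crux.} Here $\Omega_{\max}$ has $>\max\{(k-1)^{2k},\card T+k,9\}$ blocks of size $\le k/2$ and $\Alt(\blocks(\Omega_{\max}))\le G(\blocks(\Omega_{\max}))$. Passing to $H=G_{[T]}$ fixes setwise the at most $\card T$ blocks meeting $T$; since $\Alt$ of a set survives the deletion of a few points, $H$ acts on the remaining blocks with image still containing the corresponding alternating group, so $\Omega_H$ and therefore $\mathcal{B}_H$ comprise all but at most $\card T$ of the blocks of $\Omega_{\max}$, giving $\card{\mathcal{B}_H}>k$. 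Now take a witness $\sigma\in G$ of weight $k$ satisfying $F$; if $\Delta^\sigma=\Delta$ we are done, otherwise $\sigma$ touches at most $k<\card{\mathcal{B}_H}$ of the blocks of $\mathcal{B}_H$, so some block of $\mathcal{B}_H$ is fixed pointwise by $\sigma$ and disjoint from $T\cup\sigma(T)$. Using the high transitivity of $\Alt(\blocks(\Omega_{\max}))$, one builds an element of $G$ fixing $T\cup\sigma(T)$ pointwise that reroutes the cycle of $\bar\sigma$ through $\Delta$ via such a fresh block; composing it with $\sigma$ gives $\hat\sigma\in G_{\{\Delta\}}$ that agrees with $\sigma$ on $T$ (so still satisfies $F$) with $\support(\hat\sigma)\subseteq\support(\sigma)$, and a final correction supported on fresh blocks of $\mathcal{B}_H$ restores the weight to exactly $k$ without disturbing $T$ or $\Delta$. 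I expect this step to be the main obstacle: the algorithm commits to a \emph{fixed} block $\Delta$ before $\sigma$ is known, so one must show simultaneously that (i) the action on $T$ can be kept unchanged, hence $F$-satisfaction preserved, (ii) the support does not grow, and (iii) weight exactly $k$ is restored --- and the whole point of first going to $H=G_{[T]}$ and then to $\Omega_H$ is that this forces $\Delta$ to be ``far'' from $T$, which is what makes (i) achievable.

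\textbf{Termination, correctness of the reduction, and running time.} Each execution of line~\ref{line:shrink1} or~\ref{line:shrink2} replaces $G$ by a proper subgroup that splits a large orbit (a block-stabilization confines orbits to blocks of size $\le k/2$; a block-fixing peels off $\Delta$), so $\sum_{\Omega:\,\card{\Omega}>b}\card{\Omega}$ strictly decreases; hence all loops terminate after $\poly(N)$ iterations and on exit every $G$-orbit has size $\le b=\tfrac k2\max\{(k-1)^{2k},\card T+k,9\}$. Then $\mathcal{O}$ is a $b$-bounded coloring of $V$, every element of $G$ is color-preserving, and by the invariant $X$ has a weight-$k$ $F$-satisfying automorphism iff $G$ does iff $X$ has a \emph{color-preserving} one; so Theorem~\ref{thm:colorcnfga} shows that line~\ref{line:bcc} returns a correct answer. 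For the running time: line~\ref{line:mincomplautos} costs $(dk)^{\mathcal{O}(k^2)}\poly(N)$ by \cite[Algorithm~3]{AKKT17}; each of the $\poly(N)$ loop iterations performs only standard polynomial-time permutation-group operations (maximal block systems, kernels of induced actions, pointwise stabilizers of $T$, stabilizers of a single block); and substituting $b=\tfrac k2\max\{(k-1)^{2k},\card T+k,9\}$ together with $\card T\le 2\card F$ into the bound $(kb!)^{\mathcal{O}(k^2)}k^{\mathcal{O}(\card F)}\poly(N)$ of Theorem~\ref{thm:colorcnfga} and simplifying (absorbing the $d$ from line~\ref{line:mincomplautos}) yields the claimed $\bigl(d(k^k+\card F)!\bigr)^{\mathcal{O}(k^2)}\poly(N)$.
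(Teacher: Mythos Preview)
Your overall scaffolding matches the paper: the loop invariant, the base case via Lemma~\ref{lem:decompose-compl}, the two cases for line~\ref{line:shrink1}, and the termination/runtime analysis are all essentially as in the paper. The gap is entirely in your treatment of line~\ref{line:shrink2}, and it is twofold.

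First, your one-line justification that $H=G_{[T]}$ still acts with $\Alt$ on a large block set $\mathcal{B}_H\subseteq\blocks(\Omega_{\max})$ (``$\Alt$ of a set survives the deletion of a few points'') skips the actual work. The vertices of $T$ may lie in \emph{other} orbits of $G$, and stabilizing them there can, a priori, kill transitivity on $\blocks(\Omega_{\max})$; in particular, orbits \emph{linked} to $\Omega_{\max}$ (Definition~\ref{def:linked}) behave exactly like $\Omega_{\max}$ itself under pointwise stabilization. The paper spends two claims on this, first passing to the setwise block-stabilizer $R=G_{\mathcal{T}}$ and inductively using Lemma~\ref{lem:bigger-survives} (smaller orbits), Lemma~\ref{lem:linked-or-alt-product} (equal-size unlinked orbits), and Corollary~\ref{cor:linked} (linked orbits) to show $\Alt(\mathcal{B}_R)\le R(\mathcal{B}_R)$ with $\card{\mathcal{B}_R}>k$, and then descending from $R$ to $H$ via Lemma~\ref{lem:bigger-survives} again. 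Without these lemmas your bound $\card{\mathcal{B}_H}\ge\card{\blocks(\Omega_{\max})}-\card T$ is unjustified.

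Second, and more seriously, your construction of the new witness is the wrong mechanism. You try to \emph{compose} $\sigma$ with a rerouting element and then apply a ``final correction'' to restore the weight; but composition does not control support, and your simultaneous claims $\support(\hat\sigma)\subseteq\support(\sigma)$ and ``restore weight to exactly~$k$'' are in tension (if the support only shrank, a further correction on fresh blocks enlarges it again and you must argue anew that it is exactly~$k$). The paper instead uses \emph{conjugation}: pick $\Delta'\in\mathcal{B}_H$ with $\Delta'\cap\support(\pi)=\emptyset$ (possible since $\card{\mathcal{B}_H}>k$), take $\rho\in H$ with $\Delta^\rho=\Delta'$ (possible since $\Alt(\mathcal{B}_H)\le H(\mathcal{B}_H)$), and set $\pi'=\rho\pi\rho^{-1}$. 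Then $\pi'\in G_{\{\Delta\}}$ because $\Delta^{\pi'}=(\Delta')^{\pi\rho^{-1}}=\Delta'^{\rho^{-1}}=\Delta$; conjugation preserves cycle type and hence weight exactly; and since $\rho\in H=G_{[T]}$ and every variable $x_{u,v}$ in $F$ has both $u,v\in T$, one gets $u^{\pi'}=v\iff u^\pi=v$, so $F$-satisfaction is preserved. No correction step is needed, and you do not need $\rho$ to fix $\sigma(T)$. This conjugation idea is the key you are missing.
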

\begin{proof}
  Suppose there is some $\pi\in\Aut(X)$ of weight exactly~$k$ that
  satisfies~$F$.

  By Lemma~\ref{lem:decompose-compl}, the
  automorphism~$\pi$ can be decomposed as a product of
  minimal-complexity automorphisms of weight at most~$k$, which
  implies $\pi\in G$.

  We will show that whenever the algorithm shrinks~$G$, some weight~$k$
  automorphism of~$X$ that satisfies~$F$ survives. For the shrinking in
  line~\ref{line:shrink1} we need to consider two cases. If $\Omega$ is an orbit with
  $\card{\Delta}>k/2$ for some (and thus all) $\Delta\in\blocks(\Omega)$, then none
  of these blocks are moved by~$\pi$. Indeed, if $\pi$ would move one block, it
  would have to move at least one further block, contradicting
  $\card{\support(\pi)}=k$. On the other hand, if $\card{\blocks(\Omega)}>(k-1)^{2k}$
  and $G\paren[\big]{\blocks(\Omega)}$ does not contain the alternating group, then
  Theorems~\ref{thm:primitive-small-support}~and~\ref{thm:primitive-large-support} imply that the primitive
  group~$G\paren[\big]{\blocks(\Omega)}$ contains no nontrivial element that moves at most~$k$
  elements of~$\blocks(\Omega)$. In particular, $\pi$~setwise stabilizes all
  $\Delta\in\blocks(\Omega)$ and thus survives the shrinking.

  We now turn to the other shrinking of~$G$, which occurs in
  line~\ref{line:shrink2}. Note that this can only happen if
  $\card{\blocks(\Omega_{\max})}>(k-1)^{2k}$ because of the if-condition on
  line~\ref{line:large-orbit}. This implies, as the last execution of the
  repeat-loop resulted in no further shrinking of~$G$, that
  $\Alt\paren[\big]{\blocks(\Omega_{\max})}\le G\paren[\big]{\blocks(\Omega_{\max})}$. Let
  $\mathcal{T}=\bigcup_{\Omega\in\mathcal{O}}\set{\Delta\in\blocks(\Omega)}{\Delta\cap
    T\neq\emptyset}$ be the set of all blocks with vertices from~$T$ and let
  $R=G_{\mathcal{T}}$ be the setwise stabilizer of these blocks. Note that
  $H\le R\le G$. We next show that a sufficiently large part
  of~$\Alt\paren[\big]{\blocks(\Omega_{\max})}$ survives in~$R$.
  \begin{claim}
    Let $\Omega_{R}$ be the largest orbit of~$R$ that is contained in~$\Omega_{\max}$.
    Then the set
    $\mathcal{B}_{R}=\set[\big]{\Delta\in\blocks(\Omega_{\max})}{\Delta\subseteq
      \Omega_{R}}$~is a maximal block system for the orbit~$\Omega_{R}$ of~$R$.
    Moreover, $\card{\mathcal{B}_{R}}>k$ and $\Alt(\mathcal{B}_{R})\le
    R(\mathcal{B}_{R})$.
  \end{claim}
  Let $\mathcal{O}=\set{\Omega_1,\dotsc,\Omega_k,\dotsc,\Omega_\ell}$ be an enumeration of the
  orbits of~$G$ such that $\Omega_i$~is linked to~$\Omega_{\max}$ if and only if $i>k$.
  Consider the sequence of subgroups $R=R^{(\ell)}\le\dotsm\le R^{(0)}=G$, where
  $R^{(i)}$~is the subgroup of~$R^{(i-1)}$ that setwise stabilizes all
  $\Delta\in\mathcal{T}$ with $\Delta\subseteq \Omega_i$. As first step, we
  inductively show for $i\le k$ that all orbits linked to~$\Omega_{\max}$ in~$G$
  (including itself) remain orbits of~$R^{(i)}$, and that
  $\Alt\paren[\big]{\blocks(\Omega)}\le R^{(i)}\paren[\big]{\blocks(\Omega)}$ for all
  $\Omega\in\mathcal{O}$ with $\card{\blocks(\Omega)}=\card{\blocks(\Omega_{\max})}$ that are
  still an orbit of~$R^{(i)}$.
  \begin{description}
   \item[Case 1:] Suppose that $\card{\blocks(\Omega_{\max})}>\card{\blocks(\Omega_i)}$.
    Consider any~$\Omega\in\mathcal{O}$ with
    $\card{\blocks(\Omega)}=\card{\blocks(\Omega_{\max})}$ that is an orbit
    of~$R^{(i-1)}$; this includes all orbits linked to~$\Omega_{\max}$ in~$G$. By the
    induction hypothesis, we know $\Alt\paren[\big]{\blocks(\Omega)}\le
    R^{(i-1)}\paren[\big]{\blocks(\Omega)}$, which implies
    $\card[\big]{R^{(i-1)}\paren[\big]{\blocks(\Omega)}}>
    \card[\big]{R^{(i-1)}\paren[\big]{\blocks(\Omega_i)}}$. Thus we can apply
    Lemma~\ref{lem:bigger-survives} to $R^{(i-1)}\paren[\big]{\blocks(\Omega)\times\blocks(\Omega_i)}$.
    This gives us $\Alt\paren[\big]{\blocks(\Omega)}\times\set{\id}\le
    R^{(i-1)}\paren[\big]{\blocks(\Omega)\times\blocks(\Omega_i)}$. Thus fixing some
    blocks of~$\Omega_i$ in~$R^{(i-1)}$ preserves the alternating group
    $\Alt\paren[\big]{\blocks(\Omega)}$ in~$R^{(i)}\paren[\big]{\blocks(\Omega)}$. In
    particular, $\Omega$~is also an orbit of~$R^{(i)}$.
   \item[Case 2:] Suppose that $\card{\blocks(\Omega_{\max})}=\card{\blocks(\Omega_i)}$
    and that $\Omega_i$~is no longer an orbit of~$R^{(i-1)}$. This again implies
    $\card[\big]{R^{(i-1)}\paren[\big]{\blocks(\Omega)}}>
    \card[\big]{R^{(i-1)}\paren[\big]{\blocks(\Omega_i)}}$ for the
    orbits~$\Omega\in\mathcal{O}$ we need to consider, and we can proceed as in
    case~1.
   \item[Case 3:] Now suppose that
    $\card{\blocks(\Omega_{\max})}=\card{\blocks(\Omega_i)}$ and that $\Omega_i$~is still an
    orbit of~$R^{(i)}$. Consider any $\Omega\in\mathcal{O}$ with
    $\card{\blocks(\Omega)}=\card{\blocks(\Omega_{\max})}$ that is still an orbit
    of~$R^{(i-1)}$. If $\Omega$~is not linked to~$\Omega_i$ (this includes~$\Omega_{\max}$ and
    all orbits linked to the latter), then Lemma~\ref{lem:linked-or-alt-product} implies
    $\Alt\paren[\big]{\blocks(\Omega)}\times\Alt\paren[\big]{\blocks(\Omega_i)}\le
    G\paren[\big]{\blocks(\Omega)\times\blocks(\Omega_i)}$. Thus
    $\Alt\paren[\big]{\blocks(\Omega)}\le R^{(i)}\paren[\big]{\blocks(\Omega)}$. On the
    other hand, if $\Omega$~and~$\Omega_i$ are linked in~$G$ (and thus also
    in~$R^{(i-1)}$), then Corollary~\ref{cor:linked} implies that setwise
    stabilizing $\Delta\subsetneq \Omega_i$ is equivalent to stabilizing a block in
    the orbit~$\Omega$, which thus is no longer an orbit of~$R^{(i)}$.
  \end{description}
  Applying Corollary~\ref{cor:linked} repeatedly to
  $R^{(k)}\paren[\big]{\bigtimes_{i=k+1}^{\ell}\blocks(\Omega_i)}$, we can obtain a
  set $\mathcal{T}'\subseteq\blocks(\Omega_{\max})$ with $R=R^{(k)}_{\mathcal{T}'}$
  and $\card{\mathcal{T}'}\le\card{\mathcal{T}}$. Moreover,
  $\Alt\paren[\big]{\blocks(\Omega_{\max})}\le
  R^{(k)}\paren[\big]{\blocks(\Omega_{\max})}$ implies $\Alt(\mathcal{S})\le
  R(\mathcal{S})$ for $\mathcal{S}=\blocks(\Omega_{\max})\setminus\mathcal{T}'$.
  Note that $\card{\mathcal{S}}\ge\card{\blocks(\Omega_{\max})}-\card{T}>k$. Thus
  $\Omega_{R}=\bigcup\mathcal{S}$ is the largest orbit of~$R$ that is contained
  in~$\Omega_{\max}$ and $\mathcal{B}_{R}=\mathcal{S}$, proving the claim.

  \begin{claim}
    $\mathcal{B}_H$ is a maximal block system for the orbit~$\Omega_{H}$ in~$H$.
    Moreover, $\card{\mathcal{B}_{H}}>k$ and $\Alt(\mathcal{B}_H)\le
    H(\mathcal{B}_H)$.
  \end{claim}
  Let $\mathcal{T}=\set{\Delta_1,\dotsc,\Delta_m}$ be an enumeration of the
  blocks with vertices from~$T$. Consider the sequence of subgroups
  $H=H^{(m)}\le\dotsm\le H^{(0)}=R$, where $H^{(i)}=H^{(i-1)}_{[T\cap\Delta_i]}$.
  As $\card{\mathcal{B}_R}>k>\card{\Delta_i}$, Lemma~\ref{lem:bigger-survives} can be
  applied to $H^{(i-1)}\paren[\big]{\mathcal{B}_R\times\Delta_i}$. It follows
  that $\Alt(\mathcal{B}_R)\le H^{(i)}(\mathcal{B}_R)$. Thus we get $\Omega_H=\Omega_R$
  and $\mathcal{B}_H=\mathcal{B}_R$, and the claim is shown.

  The following claim concludes the correctness proof.
  \begin{claim}
    Let~$G$ and~$\Delta$ be as in the algorithm on line~\ref{line:delta}. Then
    for any $\pi\in G$ of weight~$k$ that satisfies~$F$, there is a $\pi'\in
    G_{\set{\Delta}}$ of weight~$k$ that satisfies~$F$.
  \end{claim}
  Choose $\Delta'\in\mathcal{B}_H$ with $\Delta'\cap\support(\pi)=\emptyset$;
  this is possible because $\card{\mathcal{B}_H}>k$ and $\card{\support(\pi)}\le
  k$. As $\Alt(\mathcal{B}_H)\le H(\mathcal{B}_H)$, there is a $\rho\in H$ with
  $\rho(\Delta)=\Delta'$. Thus $\pi'=\rho\pi\rho^{-1}$ is in~$G_{\set{\Delta}}$.
  Clearly, conjugation preserves weight. Further, as $\rho\in H$ implies for all
  $v\in T$ that $\rho(v)=v$ and thus $\pi'(v)=\pi(v)$, we get that
  $\pi'$~satisfies~$F$. This proves the claim.

  Computing~$G$ on line~\ref{line:mincomplautos} takes
  $(dk)^{\mathcal{O}(k^2)}\poly(N)$ time by~\cite[Theorem~3.9]{AKKT17}. On
  line~\ref{line:bcc}, the call to~\algo{ColorExactCNFGA} takes
  $(kb!)^{\mathcal{O}(k^2)}k^{\mathcal{O}(\card{F})}\poly(N)$ time by
  Theorem~\ref{thm:colorcnfga}. Each iteration of the while loop increases the
  number of orbits. The same is true for all except the last iteration of the
  repeat loop, and we can attribute the time of its last iteration to the
  containing while loop. Thus the number of iterations is bounded by
  $n=\card{V}$. As all operations in the loops can be implemented in
  $\poly(n)$~time, this shows the claimed time bound of
  $\paren[\big]{d(k^k+\card{F})!}{}^{\mathcal{O}(k^2)}\poly(N)$.
\end{proof}

Now we are ready to turn to $\ExactCNFHGI$. Our algorithm uses the following
transformation on formulas.
Given a formula~$F$ over~$\Var(V)$ and $\psi\in\Sym(V)$, let~$\psi(F)$ denote
the formula obtained from~$F$ by replacing each variable~$x_{uv}$
by~$x_{u\psi(v)}$.

\begin{lemma}\label{lem:permute-formula}
  A product $\sigma=\phi\pi\in\Sym(V)$ satisfies a formula~$F$ over~$\Var(V)$ if
  and only if $\phi$ satisfies~$\pi^{-1}(F)$.
\end{lemma}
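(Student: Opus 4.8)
The statement to prove is Lemma~\ref{lem:permute-formula}: for $\sigma=\phi\pi\in\Sym(V)$ and a formula $F$ over $\Var(V)$, $\sigma$ satisfies $F$ if and only if $\phi$ satisfies $\pi^{-1}(F)$. The natural approach is to track what the satisfying assignment of a permutation looks like and to verify that the substitution $x_{uv}\mapsto x_{u\pi(v)}$ exactly implements the change of variable induced by multiplying by $\pi$ on the right. Recall from the definition in the introduction that a permutation $\tau$ satisfies $F$ iff the assignment $\alpha_\tau$ with $\alpha_\tau(x_{uv})=1 \iff u^\tau=v$ satisfies $F$. So the whole claim reduces to a statement about these assignments: $\alpha_\sigma$ satisfies $F$ iff $\alpha_\phi$ satisfies $\pi^{-1}(F)$.

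\medskip
First I would unwind the definition of $\psi(F)$ at the level of assignments. For any assignment $\beta$ to the variables $\Var(V)$ and any $\psi\in\Sym(V)$, the formula $\psi(F)$ is obtained from $F$ by the syntactic substitution $x_{uv}\mapsto x_{u\psi(v)}$; hence $\beta$ satisfies $\psi(F)$ if and only if the ``pulled-back'' assignment $\beta'$ defined by $\beta'(x_{uv})=\beta(x_{u\psi(v)})$ satisfies $F$. Applying this with $\psi=\pi^{-1}$ and $\beta=\alpha_\phi$: $\alpha_\phi$ satisfies $\pi^{-1}(F)$ iff the assignment $\beta'$ with $\beta'(x_{uv})=\alpha_\phi(x_{u\pi^{-1}(v)})$ satisfies $F$. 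So it suffices to show $\beta'=\alpha_\sigma$, i.e.\ that for all $u,v\in V$ we have $\alpha_\phi(x_{u\pi^{-1}(v)})=\alpha_\sigma(x_{uv})$.

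\medskip
Now I would just compute both sides. By definition, $\alpha_\sigma(x_{uv})=1$ iff $u^\sigma=v$, i.e.\ iff $u^{\phi\pi}=v$, i.e.\ iff $(u^\phi)^\pi=v$, i.e.\ iff $u^\phi=v^{\pi^{-1}}=\pi^{-1}(v)$. On the other hand, $\alpha_\phi(x_{u,\pi^{-1}(v)})=1$ iff $u^\phi=\pi^{-1}(v)$. These two conditions are literally the same, so $\alpha_\sigma(x_{uv})=\alpha_\phi(x_{u,\pi^{-1}(v)})$ for all $u,v$, which gives $\beta'=\alpha_\sigma$ and completes the argument. (One minor care point is to be consistent about whether $\psi$ acts on the right via $v\mapsto v^\psi$ or on the left via $v\mapsto\psi(v)$; since the paper writes $\pi^{-1}(F)$ with function-application notation, I would keep $\pi^{-1}(v)=v^{\pi^{-1}}$ throughout, and note that $(u^\phi)^\pi=v \iff u^\phi=v^{\pi^{-1}}$ uses only that $\pi$ is a bijection.)

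\medskip
I do not expect any real obstacle here — the lemma is essentially a bookkeeping identity, and the only thing that could go wrong is a composition-order or left/right-action slip, so the main ``difficulty'' is simply being scrupulous about the convention that $x_{uv}$ is the variable asserting $u^\pi=v$ and that $\sigma=\phi\pi$ means ``first $\phi$, then $\pi$'' under the right-action convention $u^{\phi\pi}=(u^\phi)^\pi$. Everything else is a direct substitution check.
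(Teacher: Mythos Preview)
Your proof is correct and follows the same approach as the paper: both reduce to checking the atomic case $x_{uv}$ and verifying that $u^{\phi\pi}=v$ is equivalent to $u^\phi=\pi^{-1}(v)$. The paper's version is simply terser, stating this single equivalence and leaving the propagation to arbitrary formulas implicit, whereas you spell out the intermediate assignment-level observation explicitly.
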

\begin{proof}
  By definition, $\sigma$~satisfies~$x_{uv}$ if and only if $\sigma(u)=v$, which
  is equivalent to $\phi(u)=\pi^{-1}(v)$, i.e., to
  $\phi$~satisfying~$\pi^{-1}(x_{uv})$.
\end{proof}

\begin{lstlisting}[caption={$\algo{Exact-CNF-HGI}_d(X_1,X_2,k,F)$},label=alg:exactcnfhgi]
Input: Two hypergraphs $X_1$ (*and*) $X_2$ on vertex set $V$ with hyperedge size bounded by $d$, a parameter $k\in\Nset$ (*and*) a CNF formula $F$ over $\Var(V)$
Output: An isomorphism $\sigma$ from $X_1$ (*to*) $X_2$ with $\card{\support(\sigma)}=k$ that satisfies $F$, (*or~$\bot$~if~none*) exists
$\pi\assign{}$some isomorphism from $X_1$ (*to*) $X_2$ with $\card{\support(\pi)}\le k$ // see \cite[Theorem~3.8]{AKKT17}\label{line:representative}
for $U\subseteq\support(\pi)$ do // we will force $u\notin\support(\phi\pi)$ for $u\in U$
    for $M\subseteq\support(\pi)\setminus U$ do // we will force $u\in\support(\phi)\cap\support(\phi\pi)$ for $u\in M$
        $I\assign\support(\pi)\setminus(U\cup M)$ // we will force $u\notin\support(\phi)$ for $u\in I$
        $F'\assign \pi^{-1}(F)\land \bigwedge_{u\in U}x_{u\pi^{-1}(u)}\land\bigwedge_{u\in M}(\lnot x_{u\pi^{-1}(u)}\land\lnot x_{u,u})\land\bigwedge_{u\in I}x_{uu}$
        $k'\assign k-\card{I}+\card{U}$
        $\phi\assign\algo{Exact-CNF-HGA}_d(X_1,k',F')$ // see Algorithm \ref{alg:exactcnfhga}
        if $\phi\neq\bot$ then return $\sigma=\phi\pi$
return $\bot$
\end{lstlisting}

\begin{theorem}\label{thm:cnfhgi}
  Algorithm~\ref{alg:exactcnfhgi} solves $\ExactCNFHGI$ in time
  $\paren[\big]{d(k^k+\card{F})!}^{\mathcal{O}(k^2)}\poly(N)$.
\end{theorem}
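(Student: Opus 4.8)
The plan is to reduce $\ExactCNFHGI$ to $\ExactCNFHGA$ and then invoke Theorem~\ref{thm:cnfhga}. The algorithm first computes, via \cite[Theorem~3.8]{AKKT17}, some isomorphism $\pi$ from $X_1$ to $X_2$ with $\card{\support(\pi)}\le k$ (line~\ref{line:representative}); if there is no such $\pi$, there is certainly no weight-$k$ isomorphism and we output $\bot$. Since $\Iso(X_1,X_2)=\Aut(X_1)\pi$, every candidate isomorphism has the form $\sigma=\phi\pi$ with $\phi=\sigma\pi^{-1}\in\Aut(X_1)$, so it suffices to search for a suitable~$\phi$. By Lemma~\ref{lem:permute-formula}, $\sigma=\phi\pi$ satisfies~$F$ if and only if $\phi$ satisfies $\pi^{-1}(F)$, so the Lubiw-style constraints transfer for free. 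The one genuine difficulty is that prescribing $\card{\support(\sigma)}=k$ does not prescribe $\card{\support(\phi)}$: the two permutations can differ only on $\support(\pi)$, so $\support(\phi)\subseteq\support(\sigma)\cup\support(\pi)$, but the precise relation depends on which vertices of $\support(\pi)$ are moved by $\sigma$ and which by $\phi$. The algorithm resolves this by guessing this ``interaction pattern'': it enumerates all $U\subseteq\support(\pi)$ (meant to be fixed by $\sigma$) and all $M\subseteq\support(\pi)\setminus U$ (meant to be moved by both $\phi$ and $\sigma$), sets $I=\support(\pi)\setminus(U\cup M)$ (meant to be fixed by $\phi$), records this pattern in the extra clauses of $F'$, and sets the target weight for $\phi$ to $k'=k-\card{I}+\card{U}$.

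To show the algorithm is sound, I would take a returned $\sigma=\phi\pi$, where $\phi\in\Aut(X_1)$ satisfies~$F'$ and has weight~$k'$, and note $\sigma\in\Iso(X_1,X_2)$. Since $\pi^{-1}(F)$ is a subconjunction of $F'$, $\phi$ satisfies $\pi^{-1}(F)$, so by Lemma~\ref{lem:permute-formula} the product $\sigma=\phi\pi$ satisfies~$F$. For the weight, the remaining clauses of $F'$ give: for $u\in U$, $u^\phi=\pi^{-1}(u)$, hence $u^\sigma=u$ and (since $u\in\support(\pi)$) $u\in\support(\phi)$; for $u\in M$, $u\in\support(\phi)\cap\support(\sigma)$; for $u\in I$, $u^\phi=u$, hence $u\notin\support(\phi)$ and (since $u^\pi\ne u$) $u\in\support(\sigma)$; and for $u\notin\support(\pi)$, $u^\pi=u$, so $u\in\support(\sigma)\iff u\in\support(\phi)$. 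Hence $\support(\phi)$ is the disjoint union of $U$, $M$, and $\support(\sigma)\setminus\support(\pi)$, while $\support(\sigma)$ is the disjoint union of $M$, $I$, and $\support(\sigma)\setminus\support(\pi)$; subtracting cardinalities, $\card{\support(\sigma)}=\card{\support(\phi)}-\card{U}+\card{I}=k'-\card{U}+\card{I}=k$, so $\sigma$ is a valid answer.

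For completeness, I would start from an isomorphism $\sigma$ from $X_1$ to $X_2$ of weight exactly~$k$ satisfying~$F$, put $\phi=\sigma\pi^{-1}\in\Aut(X_1)$ (so $\phi\pi=\sigma$), and form the partition $U=\support(\pi)\setminus\support(\sigma)$, $M=\support(\pi)\cap\support(\phi)\cap\support(\sigma)$, $I=\support(\pi)\setminus(U\cup M)$; this $(U,M)$ is examined by the loops because $M\subseteq\support(\pi)\setminus U$. I would then check clause by clause that $\phi$ satisfies the corresponding $F'$: the $\pi^{-1}(F)$ part holds by Lemma~\ref{lem:permute-formula}, and the $U$-, $M$-, and $I$-clauses hold directly from the definition of the partition (using $u\ne\pi^{-1}(u)$ for $u\in\support(\pi)$ to see $U\subseteq\support(\phi)$). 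Finally, $\support(\phi)\cap\support(\pi)=U\cup M$ and $\support(\phi)\setminus\support(\pi)=\support(\sigma)\setminus\support(\pi)$, so $\card{\support(\phi)}=\card{U}+\card{M}+(k-\card{M}-\card{I})=k-\card{I}+\card{U}=k'$. Hence, in this iteration, $\algo{Exact-CNF-HGA}_d(X_1,k',F')$ has the witness~$\phi$ and returns some automorphism; by soundness the resulting $\sigma=\phi\pi$ is correct, so the algorithm does not output~$\bot$. Together with soundness this proves correctness.

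For the running time, the number of iterations of the two outer loops is $\sum_{U\subseteq\support(\pi)}2^{\card{\support(\pi)\setminus U}}=3^{\card{\support(\pi)}}\le 3^k$, and each iteration does polynomial work besides building $F'$ (which has $\card{F}+3\card{\support(\pi)}\le\card{F}+\mathcal{O}(k)$ variables) and making one call with parameter $k'\le k+\card{\support(\pi)}\le 2k$, where the last inequality uses $\support(\phi)\subseteq\support(\sigma)\cup\support(\pi)$. Each such call runs in the time guaranteed by Theorem~\ref{thm:cnfhga}, and computing~$\pi$ on line~\ref{line:representative} fits within the same bound by \cite[Theorem~3.8]{AKKT17}; multiplying by the $3^k$ iterations and simplifying yields the claimed $\paren[\big]{d(k^k+\card{F})!}^{\mathcal{O}(k^2)}\poly(N)$. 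I expect the weight bookkeeping — establishing $k'=k-\card{I}+\card{U}$ together with the support relations $\support(\phi)\subseteq\support(\sigma)\cup\support(\pi)$ and $\support(\phi)\cap\support(\pi)=U\cup M$ — to be the only delicate point, as the constraint formula is handled immediately by Lemma~\ref{lem:permute-formula}.
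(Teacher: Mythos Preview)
Your proof is correct and follows essentially the same approach as the paper: reduce to $\ExactCNFHGA$ via a fixed representative~$\pi$, transfer the formula by Lemma~\ref{lem:permute-formula}, and guess the partition $(U,M,I)$ of $\support(\pi)$ to pin down the target weight~$k'$. Your definitions of $U$ and $M$ in the completeness argument are phrased differently from the paper's (which takes $U=\{u\in\support(\pi)\cap\support(\phi):u^{\phi\pi}=u\}$ and $M=(\support(\pi)\cap\support(\phi))\setminus U$) but are equivalent, and your support bookkeeping is simply a more explicit version of the paper's one-line claim that $\support(\sigma)=(\support(\phi)\setminus U)\cup I$.
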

\begin{proof}
  Suppose Algorithm~\ref{alg:exactcnfhgi} returns a permutation $\sigma=\phi\pi$.
  Then $\pi$~is an isomorphism from~$X_1$ to~$X_2$ and $\phi$~is an automorphism
  of~$X_1$ that satisfies~$F'$ and has weight~$k'$. As
  $\phi$~satisfies~$\pi^{-1}(F)$, Lemma~\ref{lem:permute-formula} implies that
  $\sigma$~satisfies~$F$. The additional literals in~$F'$ ensure
  $\support(\sigma)=(\support(\phi)\setminus U)\cup I$ and thus
  $\card[\big]{\support(\sigma)}=k'-\card{U}+\card{I}=k$.

  Now suppose there is an isomorphism~$\sigma$ from~$X_1$ to~$X_2$ that
  satisfies~$F$ and has weight~$k$; we need to show that the algorithm does not return~$\bot$ in this case.
  Let~$\pi$ be the isomorphism computed on
  line~\ref{line:representative}. Then $\phi=\sigma\pi^{-1}$ is an automorphism
  of~$X_1$; it satisfies~$\pi^{-1}(F)$ by Lemma~\ref{lem:permute-formula}. In
  the iteration of the loops where
  $U=\set[\big]{u\in\support(\pi)\cap\support(\phi)}{u^{\phi\pi}=u}$ and
  $M=\paren[\big]{\support(\pi)\cap\support(\phi)}\setminus U$, it holds that
  $\phi$~has weight~$k'$ and satisfies~$F'$. Thus
  $\algo{Exact-CNF-HGA}_d(X_1,k',F')$ does not return~$\bot$
  in this iteration, and
  $\algo{Exact-CNF-HGI}$ does not return~$\bot$ either.

  The isomorphism~$\pi$ can be found in $(dk)^{\mathcal{O}(k^2)}\poly(N)$
  time~\cite[Theorem~3.8]{AKKT17}. The loops have at most $3^k$~iterations, and
  $\algo{Exact-CNF-HGA}_d$ takes
  $\paren[\big]{d(k^k+\card{F})!}{}^{\mathcal{O}(k^2)}\poly(N)$ time.
  The latter term thus bounds the overall runtime.
\end{proof}

\section{Constrained isomorphisms with arbitrary weight}
\label{sec:fptgi}

In this section, we show that finding graph isomorphisms with constraints and
without weight restrictions is in $\FPT^{\prob{GI}}$.

\begin{problemdef}{\CNFHGI}
  Given two hypergraphs $X_1=(V,E_1)$ and $X_2=(V,E_2)$, and a CNF~formula~$F$
  over $\Var(V)$, decide whether there is an isomorphism from~$X_1$ to~$X_2$
  that satisfies~$F$. The parameter is $\card{F}$.
\end{problemdef}

Let~$T\subseteq\Var(V)$ be the variables that occur in the given formula~$F$.
Our approach is to enumerate satisfying assignments to~$T$. We are only
interested in assignments~$\alpha\colon T\to\set{0,1}$ that are the restriction
of the assignment given by some $\sigma\in\Sym(V)$, i.e., for all $u\in V$ it
holds that $\sum_{v\in V:x_{u,v}\in T}\alpha(x_{u,v})\le 1$ and this sum is~$1$ if
$\set{x_{u,v}}{v\in T}\subseteq T$. We call an assignment to~$T$ that satisfies
these conditions a \emph{partial permutation assignment.}

When a partial permutation assignment~$\alpha$ has $\alpha(x_{u,v})=1$, this can
be easily encoded into the graph isomorphism instance using additional colors;
we call the resulting graphs $X'$~and~$Y'$. The challenge is to enforce that a
permutation complies with $\alpha(x_{u,v})=0$. In the following algorithm, we
use the inclusion-exclusion principle to count isomorphisms that avoid the set
$P=\set{(u_1,v_1),\dotsc,(u_k,v_k)}$ of forbidden pairs given by~$\alpha$. For
$S\subseteq[k]$, we compute the size~$n_S$ of the set
\[I_S=\set[\big]{\sigma\in\Iso(X',Y')}{\forall i\in S: u_i^\sigma=v_i}.\]
To do so, we encode the additional forced mappings using additional colors and
use the \prob{GI}~oracle to decide whether the resulting graphs $X_S$~and~$Y_S$ are
isomorphic (otherwise we have $n_S=0$), and if so, to compute a generating set for the
automorphism group of~$X_S$, whose size then gives~$n_S$. Note that there is
an isomorphism compatible with~$\alpha$ if and only if
\begin{equation}
  I_S\supsetneq\bigcup_{i\in[k]} I_{\set{i}}.\tag{*}\label{eq:inclexcl}
\end{equation}

By the inclusion-exclusion principle, the size of this union can be computed as
\[\card[\Big]{\bigcup_{i\in[k]} I_{\set{i}}}
  =\sum_{\emptyset\ne S\subseteq[k]}(-1)^{\card{S}+1}\cdot
  \card[\Big]{\bigcap_{i\in S}I_{\set{i}}}
  =\sum_{\emptyset\ne S\subseteq[k]}(-1)^{\card{S}+1}\cdot\card[\big]{I_S}.\]
This solves the decision version of \CNFHGI. To solve the search version, we
check if a tentative mapping $u\mapsto v$ leads to a solution by intersecting
both sides of~\eqref{eq:inclexcl} with $\set{\sigma\in\Sym(V)}{u^\sigma=v}$;
this restriction can again be encoded using additional colors in the oracle
queries. The resulting condition can be decided using the inclusion-exclusion
principle once again.

\begin{lstlisting}[caption={$\algo{CNF-HGI}(X,Y,F)$},label=alg:cnf-hgi]
Input: Two hypergraphs $X$ (*and*) $Y$ on vertex set $V$ (*and*) a CNF formula $F$ over $\Var(V)$
Output: (*An isomorphism~$\sigma$ from~$X$ to~$Y$ that satisfies $F$, or $\bot$ if none exists*)
$T\assign{}$the variables in $F$
for each partial permutation assignment $\alpha\colon T\to\set{0,1}$ do // at most $2^{\card{T}}\le2^{\card{F}}$ iterations
    $X'\assign X; Y'\assign X; P\assign\emptyset$
    for $x_{u,v}\in T$ do
        if $\alpha(x_{u,v})=1$ then
            (*give color $u$ to $u$ in $X'$ and to $v$ in $Y'$*)
        else if $\nexists v'\in V: x_{u,v'}\in T\land \alpha(x_{u,v'})=1$ then
            $P\assign P\cup\set{(u,v)}$ // forbidden pair
    let $\set{(u_1,v_1),\dotsc,(u_k,v_k)}=P$
    for $S\subseteq[k]$ do // $2^{\card{P}}\le2^{\card{F}}$ iterations
        $n_S\assign\card{\set{\sigma\in\Iso(X',Y')}{\forall i\in S: u_i^\sigma=v_i}}$ // use the \prob{GI} oracle and additional colors
    if $n_\emptyset>\sum_{\emptyset\ne S\subseteq[k]}(-1)^{\card{S}+1}\cdot n_S$ then // there is a solution that is compatible with $\alpha$
        while there is a vertex $u$ in $X'$ whose color is not unique do
            for each vertex $v$ in $Y'$ that has the same color as $u$ in $X'$ do
                for $S\subseteq[k]$ do
                    $n_{u,v,S}\assign\card{\set{\sigma\in\Iso(X',Y')}{u^\sigma=v\land\forall i\in S: u_i^\sigma=v_i}}$
                if $n_{u,v,\emptyset}>\sum_{\emptyset\ne S\subseteq[k]}(-1)^{\card{S}+1}\cdot n_{u,v,S}$ then
                    (*give color $u$ to $u$ in $X'$ and to $v$ in $Y'$*)
                    (*continue with the next iteration of the while-loop*)
        return (*the unique color-preserving isomorphism from $X'$ to $Y'$*)
return $\bot$
\end{lstlisting}

\begin{theorem}
  Algorithm~\ref{alg:cnf-hgi} solves \CNFHGI in $2^{2\card{F}}\poly(N)$ time when
  given access to a $\prob{GI}$~oracle; the oracle queries have size $\poly(N)$.
\end{theorem}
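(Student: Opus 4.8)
The plan is to verify the two claims of the theorem separately: correctness of the output, and the claimed time and query-size bounds. For \textbf{correctness}, I would first argue that every isomorphism $\sigma$ from $X$ to $Y$ satisfying $F$ induces a partial permutation assignment $\alpha$ on $T$ that is enumerated by the outer loop, and conversely that each enumerated $\alpha$ that passes the inclusion–exclusion test on line with $n_\emptyset$ corresponds to a nonempty set of isomorphisms compatible with $\alpha$; this is exactly condition~\eqref{eq:inclexcl} together with the inclusion–exclusion identity already displayed in the text. Then I would check the search phase: the invariant to maintain is that after each round of the \textbf{while}-loop, the set of color-preserving isomorphisms from the current $X'$ to $Y'$ that avoid $P$ is still nonempty (this holds initially because $\alpha$ passed the test, and is preserved because we only commit $u\mapsto v$ when $n_{u,v,\emptyset}$ strictly exceeds the inclusion–exclusion sum, i.e. when some avoiding isomorphism maps $u$ to $v$). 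When the loop terminates, every color class is a singleton, so there is at most one color-preserving isomorphism, and by the invariant there is exactly one, and it avoids $P$ and respects all forced pairs, hence satisfies $\alpha$ and therefore $F$. If no $\alpha$ passes, no satisfying isomorphism exists, so returning $\bot$ is correct.

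For the \textbf{time bound}, I would count: the outer loop over partial permutation assignments has at most $2^{\card{T}}\le 2^{\card{F}}$ iterations, since each variable of $T$ is set to $0$ or $1$ and $\card{T}\le\card{F}$. Inside, $\card{P}\le\card{T}\le\card{F}$, so the loop over $S\subseteq[k]$ has at most $2^{\card{F}}$ iterations; each iteration makes one \prob{GI}-oracle query on graphs $X_S,Y_S$ of size $\poly(N)$ (the color-encoding of forced pairs only adds $\poly(N)$ gadget vertices), and from a generating set of $\Aut(X_S)$ one computes $\card{\Aut(X_S)} = n_S$ in $\poly(N)$ time by standard permutation-group algorithms (orbit–stabilizer / base-and-strong-generating-set). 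The inclusion–exclusion sum over $2^{\card{F}}$ terms is likewise $\poly(N)$ arithmetic. The search phase runs the \textbf{while}-loop at most $n$ times (each round fixes one more color class), each round tries at most $n$ candidates $v$, and for each candidate loops over $2^{\card{F}}$ subsets $S$; so the search phase costs $2^{\card{F}}\poly(N)$. Multiplying the outer $2^{\card{F}}$ by the inner $2^{\card{F}}$ gives the stated $2^{2\card{F}}\poly(N)$ bound; the query size is $\poly(N)$ because all color gadgets are polynomial in $N$.

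The \textbf{main obstacle} is the correctness of the search phase — specifically, making precise that intersecting both sides of~\eqref{eq:inclexcl} with $\set{\sigma\in\Sym(V)}{u^\sigma=v}$ yields, via a second application of inclusion–exclusion, a correct test for ``some $F$-satisfying, $\alpha$-compatible isomorphism sends $u$ to $v$'', and that committing such a mapping keeps the encoded instance consistent (the new color on $u,v$ does not conflict with earlier commitments, because we only commit mappings that are realized by an actual surviving isomorphism). Everything else is routine bookkeeping: the inclusion–exclusion identity is already displayed, the oracle-to-automorphism-count step is standard, and the loop-iteration counts are immediate from $\card{T}\le\card{F}$ and $\card{P}\le\card{F}$.
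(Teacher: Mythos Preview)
Your proposal is correct and follows essentially the same approach as the paper's own proof, which is extremely terse (``The correctness follows from the observations above'' plus two sentences on the time bound). Your write-up usefully expands the correctness argument for the search phase---the invariant that the set of $P$-avoiding color-preserving isomorphisms stays nonempty after each commitment---which the paper leaves entirely to the preceding discussion of condition~\eqref{eq:inclexcl}; the time-bound accounting (outer $2^{\card{F}}$ times inner $2^{\card{F}}$ over subsets~$S$, with the while/for loops absorbed into $\poly(N)$) matches the paper's.
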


\begin{proof}
  The correctness follows from the observations above. Regarding the time bound,
  the outer for-loop incurs a cost of~$2^{\card{T}}\le2^{\card{F}}$. The
  for-loops over $S\subseteq[k]$ and the sums are not nested and contribute
  factor of $2^{\card{P}}\le2^{\card{F}}$. The remaining loops and operations
  are polynomial in the input size when given a \prob{GI}~oracle.
\end{proof}

\section{Exact complexity}\label{sec:exact-complexity}

The complexity of a permutation $\pi\in\Sym(V)$ can be bounded by functions of
its weight: $\card[\big]{\support(\pi)}-1\le\compl(\pi)\le
2\cdot\card[\big]{\support(\pi)}$. However, there is no direct functional
dependence between these two parameters. And while the algorithms of
Sections~\ref{Sec:bcc}~and~\ref{sec:exactgi} can be modified to find
isomorphisms of exactly prescribed complexity, we give an independent and more
efficient algorithm in this section.

The main ingredient is an analysis of decompositions
$\sigma=\sigma_1\dotsm\sigma_\ell$ of $\sigma\in\Sym(V)$ into
$\sigma_i\in\Sym(V)\setminus\{\id\}$ (for $1\le i\le\ell$) with
$\compl(\sigma)=\sum_{i=1}^\ell\compl(\sigma_i)$; we call such decompositions
\emph{complexity-additive}. For example, the decomposition into
complexity-minimal permutations provided by Lemma~\ref{lem:decompose-compl} is
complexity-additive.

For a sequence of permutations $\sigma_1,\dotsc,\sigma_\ell\in\Sym(V)$, its
\emph{cycle graph} $\CG(\sigma_1,\dotsc,\sigma_\ell)$ is the incidence graph
between $\bigcup_{i=1}^\ell\support(\sigma_i)$ and the $\sigma_i$-orbits of size
at least~$2$, i.e., the cycles of~$\sigma_i$, for $1\le i\le\ell$. We call the
former \emph{primal vertices} and the latter \emph{cycle-vertices}.

\begin{figure}
  \centering
  \begin{tikzpicture}[scale=.9,rotate=30,label position=15,
    every label/.style={rectangle,draw=none,fill=none,font=\scriptsize,inner sep=1pt}]
    \path[every node/.style={draw, circle, inner sep=2pt}]
    (0,1.5)     node[label=0] (0) {0}
    (.866,0)    node[label=1] (1) {1}
    (1.732,1.5) node[label=2] (2) {2}
    (1.732,-.5) node[label=3] (3) {3}
    (3.464,.5)  node[label=4] (4) {4}
    (2.598,-1)  node[label=5] (5) {5}
    (4.330,-1)  node[label=6] (6) {6}
    (5.196,-.5) node[label=7] (7) {7}
    (4.330,-2)  node[label=8] (8) {8}
    (6.062,-2)  node[label=9] (9) {9};
    \path[every node/.style={fill, circle, inner sep=2pt},
    every edge/.style={draw,shorten <=1pt,shorten >=1pt}]
    (.866,1)    node[label=1] (012) {} edge (0) edge (1) edge (2)
    (1.732,.5)  node[label=2] (23)  {} edge (2) edge (3)
    (2.598,1)   node[label=3] (24)  {} edge (2) edge (4)
    (3.464,-.5) node[label=1] (456) {} edge (4) edge (5) edge (6)
    (5.169,-1.5)node[label=3] (789) {} edge (7) edge (8) edge (9);
  \end{tikzpicture}
  \caption{The colored cycle graph
    $\CG_{\id}\paren[\big]{(0,1,2)(4,5,6),(2,3),(2,4)(7,8,9)}$; the colors are
    depicted next to the vertices.}\label{fig:cycle-graph}
\end{figure}
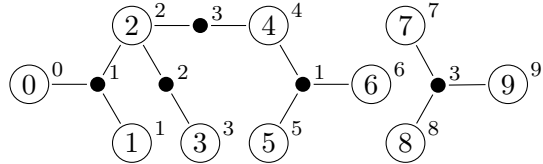

\begin{lemma}\label{lem:cycle-forest}
  Let $\sigma\in\Sym(V)$ and let $\sigma=\sigma_1\dotsm\sigma_\ell$ be a
  complexity-additive decomposition. Then $\CG(\sigma_1,\dotsc,\sigma_\ell)$ is
  a forest.
\end{lemma}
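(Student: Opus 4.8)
The plan is to show that $G := \CG(\sigma_1,\dots,\sigma_\ell)$ is acyclic by an Euler-type count. Let $G$ have $p$ primal vertices, $q$ cycle-vertices, $e$ edges and $k$ connected components, and write $c(\pi)$ for the number of cycles of $\pi$ of length $\ge 2$, so that $\compl(\pi)=\card{\support(\pi)}-c(\pi)$. From the definition of $\CG$ one reads off $p=\card{P}$ with $P:=\bigcup_{i=1}^\ell\support(\sigma_i)$, $q=\sum_{i=1}^\ell c(\sigma_i)$, and $e=\sum_{i=1}^\ell\card{\support(\sigma_i)}$ (a cycle-vertex coming from a cycle of length $m$ has degree $m$). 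Hence $e-q=\sum_i\bigl(\card{\support(\sigma_i)}-c(\sigma_i)\bigr)=\sum_i\compl(\sigma_i)$, which equals $\compl(\sigma)$ because the decomposition is complexity-additive.

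Next I would establish the bound $\compl(\sigma)\le p-k$. First, $\support(\sigma)\subseteq P$, since a point fixed by every $\sigma_i$ is fixed by $\sigma$. The crucial point is that if $\sigma$ moves $u$ then $u$ and $u^\sigma$ lie in the same component of $G$: tracing $u,\ \sigma_\ell(u),\ \sigma_{\ell-1}\sigma_\ell(u),\ \dots,\ \sigma(u)$, any two successive entries that differ lie in a common cycle of some $\sigma_i$ and are therefore joined in $G$ through its cycle-vertex; consequently every cycle of $\sigma$ lies inside a single component. Now partition $P$ into the primal-vertex sets $P_1,\dots,P_k$ of the components of $G$ (each of size $\ge 2$, as every primal vertex lies in a length-$\ge 2$ cycle of some $\sigma_i$). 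Inside $P_j$, the $\sigma$-cycles contribute $\card{P_j\cap\support(\sigma)}-(\text{their number})$ to $\compl(\sigma)$, and this is at most $\card{P_j}-1$ (there is at least one such cycle when $P_j\cap\support(\sigma)\ne\emptyset$, and the contribution is $0$ otherwise). Summing over $j$ yields $\compl(\sigma)=\card{\support(\sigma)}-c(\sigma)\le p-k$.

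Finally, since any finite graph has at least (vertices) $-$ (components) edges, $e\ge (p+q)-k$, i.e.\ $\sum_i\compl(\sigma_i)=e-q\ge p-k$. Combining this with the previous bound and complexity-additivity gives $\sum_i\compl(\sigma_i)\ge p-k\ge\compl(\sigma)=\sum_i\compl(\sigma_i)$, so every inequality is tight; in particular $e=(p+q)-k$, which for a finite graph is precisely the statement that it is a forest.

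The step I expect to require the most care is the middle one: arguing that a cycle of $\sigma$ cannot straddle several components of $\CG$, and then carrying out the per-component bookkeeping correctly — in particular components that meet $\support(\sigma)$ only partially, and components disjoint from it. The remaining steps are routine degree counting together with the triangle inequality $\compl(\sigma)\le\sum_i\compl(\sigma_i)$, which here simply drops out of the chain of inequalities rather than being invoked. An alternative route would be to refine the $\sigma_i$ into a minimal transposition factorization of $\sigma$ and invoke the classical fact that the transposition graph of a minimal factorization is a forest, then relate that graph to $\CG$; but the direct count above seems cleaner and fully self-contained.
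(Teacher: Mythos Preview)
Your argument is correct and takes a genuinely different route from the paper. The paper argues by contradiction: it assumes each $\sigma_i$ is a single cycle, lets $\ell'$ be minimal with $\CG(\sigma_1,\dots,\sigma_{\ell'})$ cyclic, handles $\ell'=2$ by the arithmetic $\compl(\sigma_1)+\compl(\sigma_2)=\card{\Omega_1}+\card{\Omega_2}-2$ versus $\compl(\sigma_1\sigma_2)=\card{\Omega_1}+\card{\Omega_2}-j-c$ with $j\ge 2$, and for $\ell'>2$ uses a conjugation rearrangement to reduce to the two-cycle case. Your proof instead performs a single global Euler count: from $e-q=\sum_i\compl(\sigma_i)=\compl(\sigma)$ and the component-wise bound $\compl(\sigma)\le p-k$ (via the observation that every $\sigma$-cycle is confined to one component of the cycle graph), you force $e=(p+q)-k$.

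What each buys: the paper's proof is local and reveals exactly where additivity breaks (two cycles sharing two points), but needs the somewhat delicate rewriting $\sigma=\phi_1\sigma_{\ell'}\phi'_2\cdots\phi'_j\sigma_{\ell'+1}\cdots\sigma_\ell$ to collapse the general case. Your approach is more uniform and conceptual, with no case split on $\ell'$; the work is front-loaded into the per-component inequality, and once that is in hand the conclusion is immediate. One small remark: your trace $u,\sigma_\ell(u),\sigma_{\ell-1}\sigma_\ell(u),\dots$ uses the left-action convention whereas the paper uses right-action, so in the paper's notation the trace should read $u,u^{\sigma_1},u^{\sigma_1\sigma_2},\dots,u^\sigma$; the argument is of course identical either way.
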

\begin{proof}
  If any of the permutations~$\sigma_i$ has more than one cycle, we further
  decompose it into its cycles. Note that this does not change the cycle graph.
  For the rest of this proof, we assume that each permutation~$\sigma_i$ has a
  single nontrivial orbit~$\Omega_i$.

  For the sake of contradiction, we assume that the graph
  $\CG(\sigma_1,\dotsc,\sigma_\ell)$ is no forest. Let~$\ell'$ be the smallest
  index such that $\CG(\sigma_1,\dotsc,\sigma_{\ell'})$ is no forest. We first
  consider the case $\ell'=2$, i.e., that $\Omega_1$~and~$\Omega_2$ have $j\ge2$
  points in common. Note that $\compl(\sigma)=\sum_{i=1}^\ell\compl(\sigma_i)$
  implies $\compl(\sigma_1\sigma_2)=\compl(\sigma_1)+\compl(\sigma_2)$. By the
  definition of complexity, we get
  \begin{align*}
    \compl(\sigma_1)+\compl(\sigma_2)&=\card{\Omega_1}-1+\card{\Omega_2}-1\text{ and}\\
    \compl(\sigma_1\sigma_2)&=\card{\Omega_1}+\card{\Omega_2}-j-c,
  \end{align*}
  where $c$~is the number of cycles of~$\sigma_1\sigma_2$. This yields a
  contradiction, as $j\ge2$ and $c\ge 1$.

  If $\ell'>2$, we know that $X=\CG(\sigma_1,\dotsc,\sigma_{\ell'-1})$ contains
  a path between two points in~$\Omega_{\ell'}$; let us call them $u$~and~$v$.
  As $X$~is a forest, $u$~and~$v$ are in the same cycle
  of~$\phi=\sigma_1\dotsm\sigma_{\ell'-1}$. (This follows by induction on the
  size of the connected component, as the product of two cycles that share a
  single point is again a cycle.) Let $\phi=\phi_1\dotsm\phi_j$ be a
  decomposition of~$\phi$ into its cycles such that $u,v\in\support(\phi_1)$,
  and let $\phi'_i=\sigma_{\ell'}^{-1}\phi_i\sigma_{\ell'}$. Then we get the
  decomposition
  \begin{equation}
    \tag{*}\label{eq:cycle-forest-decomp}
    \sigma=\phi_1\sigma_{\ell'}\phi'_2\dotsm\phi'_j\sigma_{\ell'+1}\dotsm\sigma_\ell.
  \end{equation}

  Note that each factor of this decomposition is a single cycle, because
  conjugation preserves cycle structure. As
  $\sum_{i=1}^{j}\compl(\phi_i)=\compl(\sigma_1\dotsm\sigma_{\ell'-1})$ and
  $\compl(\phi'_i)=\compl(\phi_i)$, the
  decomposition~\eqref{eq:cycle-forest-decomp} is complexity-additive. Moreover,
  the graph $\CG(\phi_1,\sigma_{\ell'})$ is no forest, as
  $u,v\in\support(\phi_1)\cap\support(\sigma_{\ell'})$. Thus we get the same
  contradiction as in the case $\ell'=2$.
\end{proof}

Given a complexity-additive decomposition $\sigma=\sigma_1\dotsm\sigma_\ell$ of
a permutation $\sigma\in\Sym(V)$ and a coloring $c\colon V\to[k]$, the
\emph{colored cycle graph} $\CG_c(\sigma_1,\dotsc,\sigma_\ell)$ is obtained from
the cycle graph $\CG(\sigma_1,\dotsc,\sigma_\ell)$ by coloring each primal
vertex $v\in V$ by~$c(v)$, and coloring each cycle-vertex that corresponds to a
cycle of~$\sigma_i$ by~$i$. (Note that a vertex of this graph is a cycle-vertex
if and only if it has odd distance to some leaf.) See
Figure~\ref{fig:cycle-graph} for an example.

A \emph{cycle pattern}~$P$ is a colored cycle graph
$\CG_c(\sigma_1,\dotsc,\sigma_\ell)$ where all primal vertices have different
colors. A complexity-additive decomposition
$\sigma'=\sigma'_1\dotsm\sigma'_\ell$ of a permutation~$\sigma'\in\Sym(V)$
\emph{matches} the cycle pattern~$P$ if there is a color-preserving
isomorphism~$\phi$ from $\CG_{c'}(\sigma'_1,\dotsc,\sigma'_\ell)$ to~$P$ for
some coloring $c'\colon V\to[k]$. Similarly, this decomposition
\emph{weakly matches}~$P$ if there is a coloring $c'\colon V\to[k]$ and a
surjective color-preserving homomorphism~$\phi$ from
$\CG_{c'}(\sigma'_1,\dotsc,\sigma'_\ell)$ to~$P$ where $\phi(u)=\phi(v)$ for
$u\neq v$ implies that $u$~and~$v$ both belong to~$V$ and are in different
$\sigma'$-orbits.

\begin{lemma}\label{lem:cycle-patterns}
  Let $P=(U,E)$ be a forest with vertex coloring~$c\colon U\to[k]$ such that
  \begin{enumerate}[nosep,label=(\arabic*)]
   \item $P$ contains no isolated vertices,
   \item the set~$V$ of vertices that have even distance to some leaf and the
    set~$C$ of vertices that have odd distance to some leaf are disjoint,
   \item the restriction~$c'$ of the coloring~$c$ to~$V$ is injective,
   \item $\set{c(u)}{u\in C}=[\ell]$ for some $\ell\in\Nset$, and
   \item any two vertices $u,v\in C$ with $c(u)=c(v)$ have distance more
    than~$2$.
  \end{enumerate}
  Then $P$~is a cycle pattern. Moreover, there is $\sigma_P\in\Sym(V)$ and a
  decomposition $\sigma_P=\sigma_1\dotsm\sigma_\ell$ that matches~$P$.
\end{lemma}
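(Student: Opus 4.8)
The plan is to build $\sigma_P$ and its decomposition directly from the forest $P$, turning each cycle-vertex into a cycle of one of the factors. First I would record the shape of $P$. Since $P$ is a forest without isolated vertices, every vertex lies at finite distance from a leaf, so condition~(2) says that $U$ is the disjoint union of $V$ and $C$. A leaf has distance $0$ from itself, so it lies in $V$; hence every vertex of $C$ has degree at least~$2$. Moreover, traversing an edge flips the parity of the distance to a leaf, so every edge of $P$ joins a vertex of $V$ to a vertex of $C$, i.e.\ $P$ is bipartite with sides $V$ and $C$.

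For each $w\in C$ let $v_1,\dotsc,v_m$ (with $m\ge 2$) be its neighbors in $P$, all lying in $V$, and fix the cycle $\gamma_w=(v_1\,v_2\cdots v_m)\in\Sym(V)$; the chosen cyclic order is immaterial, as the cycle graph records only supports. For $i\in[\ell]$ set $\sigma_i=\prod_{w\in C:\,c(w)=i}\gamma_w$ and $\sigma_P=\sigma_1\cdots\sigma_\ell$. I would check that $\sigma_i$ is a well-defined permutation: if two distinct $w,w'\in C$ with $c(w)=c(w')$ had a common neighbor $v$, then $w$ and $w'$ would be joined by a path of length~$2$ through $v$, contradicting~(5), so the cycles making up $\sigma_i$ are pairwise disjoint. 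By~(4) every color in $[\ell]$ occurs on some cycle-vertex, so each $\sigma_i\ne\id$, and we obtain $\ell$ nontrivial factors as required.

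To see that this decomposition matches $P$, take $c'=c|_V$, which is injective by~(3), and consider the obvious bijection from $\CG_{c'}(\sigma_1,\dotsc,\sigma_\ell)$ to $P$: it is the identity on primal vertices --- these are exactly $V$, since by~(1) and~(2) each $v\in V$ has a neighbor, necessarily in $C$, and hence lies in the support of some $\sigma_i$ --- and it maps the cycle-vertex of $\CG_{c'}$ that corresponds to the cycle $\gamma_w$ of $\sigma_{c(w)}$ to $w$. By construction this map preserves incidences and colors, so it is a color-preserving isomorphism. In particular $P$ is itself a colored cycle graph whose primal vertices have pairwise distinct colors, i.e.\ a cycle pattern.

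The only real work is to verify that the decomposition $\sigma_P=\sigma_1\cdots\sigma_\ell$ is complexity-additive, which is part of the definition of ``matches''. Refining each $\sigma_i$ into its disjoint cycles $\gamma_w$, one gets $\sum_{i=1}^{\ell}\compl(\sigma_i)=\sum_{w\in C}\bigl(\card{\support(\gamma_w)}-1\bigr)=\sum_{w\in C}(\deg_P(w)-1)=\card{E}-\card{C}=\card{V}-\kappa$, where $\kappa$ is the number of connected components of $P$ and the last equality uses that $P$ is a forest on $\card{V}+\card{C}$ vertices. So it suffices to show $\compl(\sigma_P)=\card{V}-\kappa$, and for this I would prove that within each connected component of $P$ (equivalently, of $\CG_{c'}(\sigma_1,\dotsc,\sigma_\ell)$) the product of the cycles attached to its cycle-vertices, taken in the order inherited from $\sigma_1\cdots\sigma_\ell$, is a single cycle whose support is exactly the set of primal vertices of that component. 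Since distinct components have disjoint primal vertex sets that together cover $V$, this yields that $\sigma_P$ has exactly $\kappa$ nontrivial cycles partitioning $V$, hence $\compl(\sigma_P)=\card{V}-\kappa$. The per-component claim I would establish by induction on the number of cycle-vertices: delete the cycle-vertex $w^\ast$ that comes last in the order, split the component into the subtrees hanging off the neighbors of $w^\ast$, apply the inductive hypothesis to the subtrees containing a cycle-vertex, and then invoke the standard fact --- essentially the converse of the computation in the proof of Lemma~\ref{lem:cycle-forest} --- that multiplying a cycle by a family of pairwise disjoint cycles, each meeting it in exactly one point, produces a single cycle on the union of the supports. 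This last step, together with the bookkeeping that the cycles coming from different subtrees commute (so their interleaving in the global order does no harm) and that neighbors of $w^\ast$ which are leaves of $P$ contribute only fixed points to the subproducts, is the fiddly part, and is where I expect to spend the most care; conceptually, though, it is routine.
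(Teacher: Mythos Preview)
Your construction of the $\sigma_i$ and the color-preserving isomorphism from $\CG_{c'}(\sigma_1,\dotsc,\sigma_\ell)$ to $P$ is exactly what the paper does: for each $w\in C$ it takes a cycle $\pi_w$ on the neighbors of $w$, sets $\sigma_i=\prod_{c(w)=i}\pi_w$ (disjoint by condition~(5)), and observes that the obvious bijection is a color-preserving isomorphism. So the core of your proof coincides with the paper's.

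Where you diverge is that you additionally prove the decomposition $\sigma_P=\sigma_1\cdots\sigma_\ell$ is complexity-additive. The paper's proof simply stops after exhibiting the isomorphism and does not address this point, even though both the definition of ``colored cycle graph'' and the definition of ``matches'' are phrased for complexity-additive decompositions. So you are filling a gap the paper leaves open. Your counting argument ($\sum_i\compl(\sigma_i)=\lvert E\rvert-\lvert C\rvert=\lvert V\rvert-\kappa$) is correct, and your inductive scheme for showing that the product within each connected component is a single cycle on its primal vertices is sound: removing a cycle-vertex $w^\ast$ of maximal color, using that factors from distinct subtrees commute (disjoint supports) so the relative color order within each subtree is preserved, applying the induction hypothesis, and then invoking the standard fact that a cycle multiplied by pairwise disjoint cycles each meeting it in a single point yields one cycle --- all of this goes through, including the handling of leaf neighbors of $w^\ast$. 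In short, your proof is correct and strictly more complete than the paper's on this lemma.
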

Note that any cycle pattern satisfies the properties (1)~to~(5).
\begin{proof}
  For $u\in C$, let~$\pi_u$ be a cycle on the neighbors of~$u$ in~$P$. For
  $i\in[\ell]$, let~$\sigma_i$ be a product of the cycles $\set{\pi_u}{u\in C,
  c(u)=i}$. Note that the order of the multiplication does not matter, as these
  cycles are disjoint because of property~(5). Then the colored cycle graph
  $\CG_{c'}(\sigma_1,\dotsc,\sigma_\ell)$ is isomorphic to~$P$ via the
  isomorphism~$\phi$ that maps each vertex from~$V$ to itself and the vertex of each
  cycle~$\pi_u$~of some~$\sigma_i$ to~$u$.
\end{proof}

\begin{lemma}\label{lem:pattern-complexity}
  Let $P$ be a cycle pattern. Then for any $\sigma\in\Sym(V)$ that has a
  complexity-additive decomposition $\sigma=\sigma_1\dotsm\sigma_\ell$ that
  weakly matches~$P$, it holds that $\compl(\sigma)=\compl(\sigma_P)$, where
  $\sigma_P$ is the permutation given by Lemma~\ref{lem:cycle-patterns}.
\end{lemma}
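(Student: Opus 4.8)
The plan is to reduce the statement to a comparison of cycle graphs. Both the given decomposition $\sigma=\sigma_1\dotsm\sigma_\ell$ and the decomposition $\sigma_P=\sigma^P_1\dotsm\sigma^P_\ell$ supplied by Lemma~\ref{lem:cycle-patterns} are complexity-additive (for the latter, the colored cycle graph appearing in the statement of Lemma~\ref{lem:cycle-patterns} is only defined for complexity-additive decompositions), so $\compl(\sigma)=\sum_i\compl(\sigma_i)$ and $\compl(\sigma_P)=\sum_i\compl(\sigma^P_i)$; by Lemma~\ref{lem:cycle-forest} both $\Gamma:=\CG_{c'}(\sigma_1,\dotsc,\sigma_\ell)$ and $P$ are forests. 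Writing each factor as a product of its disjoint cycles, using that the length of a cycle is the degree of the corresponding cycle-vertex, and that every edge of a cycle graph joins a primal vertex to a cycle-vertex, one gets $\compl(\sigma)=\card{E(\Gamma)}-q_\Gamma$ and $\compl(\sigma_P)=\card{E(P)}-q_P$, where $q_\Gamma,q_P$ count cycle-vertices. So it suffices to show that the weak-matching homomorphism $\phi\colon\Gamma\to P$ restricts to a bijection from the cycle-vertices of $\Gamma$ onto those of $P$ preserving the degree of every cycle-vertex.

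I would first record an auxiliary fact about complexity-additive decompositions: the primal vertices of every connected component $K$ of $\CG(\sigma_1,\dotsc,\sigma_\ell)$ form a single cycle of $\sigma$. Here $K$ is a tree, its primal vertex set $S$ is $\sigma$-invariant (the only factors moving a point of $S$ are the cycles lying in $K$, whose supports are contained in $S$), and $\sigma|_S$ is the product of those cycles; counting in the tree $K$ gives $\sum(\text{cycle lengths in }K)=\card{S}+q_K-1$, hence $\sum\compl(\text{cycles in }K)=\card{S}-1$, which by complexity-additivity equals $\compl(\sigma|_S)$; since $\compl(\tau)=\card{\support(\tau)}-(\#\text{cycles of }\tau)$, this forces $\sigma|_S$ to be a single cycle supported on all of $S$. (The elementary observation underneath, that a product of cycles whose cycle graph is a connected tree is a single cycle, is exactly the one used in the proof of Lemma~\ref{lem:cycle-forest}.)

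It remains to analyse $\phi$. By the definition of weak matching, whenever $\phi$ identifies two distinct vertices they are primal vertices in distinct $\sigma$-orbits; in particular $\phi$ never identifies a cycle-vertex with anything, so $\phi$ is injective on cycle-vertices and, being surjective and respecting the primal/cycle bipartition, restricts to a bijection between the cycle-vertices of $\Gamma$ and those of $P$. Fix a cycle-vertex $w$ of $\Gamma$ and put $\bar w=\phi(w)$. The neighbours of $w$ are primal vertices of one connected component, so by the auxiliary fact they lie in a single $\sigma$-orbit; hence $\phi$ is injective on $N_\Gamma(w)$, giving $\deg_\Gamma(w)\le\deg_P(\bar w)$. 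For the reverse inequality I would use that $\phi$ is surjective also on edges, i.e.\ $P$ is the quotient of $\Gamma$ by the ``equal $\phi$-image'' relation: each edge of $P$ at $\bar w$ then has a preimage edge, which must be incident to the unique cycle-vertex preimage $w$ of $\bar w$, so $\phi$ maps $N_\Gamma(w)$ onto $N_P(\bar w)$. Together, $\phi$ restricts to a bijection $N_\Gamma(w)\to N_P(\bar w)$, so $\deg_\Gamma(w)=\deg_P(\bar w)$ for every $w$; summing over cycle-vertices yields $q_\Gamma=q_P$ and $\card{E(\Gamma)}=\sum_w\deg_\Gamma(w)=\sum_w\deg_P(\phi(w))=\card{E(P)}$, hence $\compl(\sigma)=\compl(\sigma_P)$.

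The step requiring the most care is obtaining $\deg_\Gamma(w)\ge\deg_P(\bar w)$: it hinges on reading ``surjective homomorphism'' in the definition of weak matching as asserting that $P$ is a quotient of $\Gamma$ (surjective on edges, not merely on vertices); with only vertex-surjectivity one obtains just $\compl(\sigma)\le\compl(\sigma_P)$. I would make this reading explicit (it is the natural one, the identification being by an equivalence relation on the vertex set) and, if needed, also verify the subsidiary claim that $\phi$ respects the primal/cycle bipartition, which follows since both cycle graphs are bipartite, $\phi$ is injective on each component of $\Gamma$ by the auxiliary fact, and the identifications made by $\phi$ involve only primal vertices.
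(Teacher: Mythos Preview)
Your proof is correct and follows essentially the same route as the paper's: both establish that $\phi$ restricts to a degree-preserving bijection on cycle-vertices (the paper phrases this as $\lvert\Omega\rvert=\lvert\phi\psi(\Omega)\rvert$ for each cycle~$\Omega$, where $\psi$ is the isomorphism from~$P$ to the cycle graph of~$\sigma_P$), invoking the same auxiliary fact that the primal vertices in one component of the cycle graph lie in a single $\sigma$-orbit---the paper extracts this from the parenthetical induction inside the proof of Lemma~\ref{lem:cycle-forest}, just as you note.

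Your caution about edge-surjectivity is warranted and not merely stylistic: the lemma is actually false under the vertex-only reading. Take $P$ with a degree-$3$ cycle-vertex~$\bar c_1$ adjacent to $v_1,v_2,v_3$ and a degree-$2$ cycle-vertex~$\bar c_2$ adjacent to $v_3,v_4$, so $\compl(\sigma_P)=3$; then the cycle graph of $\sigma=(1\,2)(3\,4)$ with factors $\sigma_1=(1\,2),\sigma_2=(3\,4)$ admits an \emph{injective} color-preserving homomorphism onto~$P$ (sending $i\mapsto v_i$) that is vertex-surjective and trivially satisfies the identification constraint, yet $\compl(\sigma)=2$. The paper's proof tacitly uses the stronger reading when it asserts degree equality ``as $\phi$ is surjective,'' and the one place weak matching is \emph{established} in the paper (the proof of Theorem~\ref{thm:exact-complexity}) does construct an edge-surjective~$\phi$, so your explicit reading is the intended one.
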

\begin{proof}
  Let $\sigma_P=\sigma'_1\dotsm\sigma'_{\ell'}$ be the decomposition
  of~$\sigma_P$ given above; as it matches~$P$, there is a color-preserving
  isomorphism~$\psi$ from~$P$ to $\CG_{c'}(\sigma'_1,\dotsc,\sigma'_\ell)$ for
  some coloring~$c'\colon V\to[k']$. As $\sigma=\sigma_1\dotsm\sigma_\ell$
  weakly matches~$P$, there is a coloring $c\colon V\to[k]$ and a surjective
  color-preserving homomorphism~$\phi$ from
  $\CG_{c}(\sigma_1,\dotsc,\sigma_{\ell'})$ to~$P$ such that $\phi(u)=\phi(v)$
  for $u\neq v$ implies that $u$~and~$v$ both belong to~$V$ and are in different
  $\sigma$-orbits. Restricting~$\phi\psi$ to the cycle-vertices yields a
  bijection from the cycles of $\sigma_1,\dotsc,\sigma_\ell$ to the cycles of
  $\sigma'_1,\dotsc,\sigma'_{\ell'}$. Because of the coloring it follows that
  $\ell'=\ell$ and that, for $i\in[\ell]$, the restriction of~$\phi\psi$ to the
  cycles of~$\sigma_i$ is a bijection to the cycles of~$\sigma'_i$. Let~$\Omega$
  be a cycle of~$\sigma_i$. As Lemma~\ref{lem:cycle-forest} implies that all
  elements of~$\Omega$ are in the same $\sigma$-orbit, and as $\phi$ is
  surjective and only allowed to identify vertices from different
  $\sigma$-orbits, the degree of~$\Omega$ in
  $\CG_c(\sigma_1,\dotsc,\sigma_\ell)$ equals the degree of~$\phi(\Omega)$
  in~$P$ and thus also the degree of~$\phi\psi(\Omega)$
  in~$\CG_{c'}(\sigma'_1,\dotsc,\sigma'_\ell)$. This implies
  $\card{\phi\psi(\Omega)}=\card{\Omega}$ and thus
  $\compl(\sigma_i)=\compl(\sigma'_i)$, which in turn implies
  $\compl(\sigma)=\compl(\sigma_P)$.
\end{proof}

\begin{lemma}\label{lem:small-pattern}
  Let $\sigma\in\Sym(V)$. Then any complexity-additive decomposition
  $\sigma=\sigma_1\dotsm\sigma_\ell$ (weakly) matches some cycle pattern that
  has at most $3\cdot\compl(\sigma)$~vertices.
\end{lemma}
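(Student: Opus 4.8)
The plan is to show something slightly stronger than asked: the given decomposition \emph{matches} (and hence weakly matches) the cycle pattern obtained directly from its own cycle graph by making all primal colors distinct, and to observe that this cycle graph already has at most $3\compl(\sigma)$ vertices. So the whole lemma reduces to a vertex count on $\CG(\sigma_1,\dotsc,\sigma_\ell)$.

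Concretely, first I would let $G=\CG(\sigma_1,\dotsc,\sigma_\ell)$ (which is a forest by Lemma~\ref{lem:cycle-forest}) and pick any coloring $c'\colon V\to[\lvert V\rvert]$ that is injective on $\bigcup_i\support(\sigma_i)$. Then $P:=\CG_{c'}(\sigma_1,\dotsc,\sigma_\ell)$ is by definition a cycle pattern, since all its primal vertices receive pairwise distinct colors; and the decomposition $\sigma=\sigma_1\dotsm\sigma_\ell$ matches $P$ via the identity map, which is a surjective color-preserving isomorphism and in particular witnesses a weak match. It then remains only to bound $\lvert V(P)\rvert=\lvert V(G)\rvert$.

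For the count, let $p$ be the number of primal vertices, $m$ the number of cycle-vertices (one per cycle of each $\sigma_i$, with multiplicity over $i$), and $e$ the number of edges of $G$. Since $G$ is the incidence graph between primal vertices and cycles, a cycle of size $s$ contributes $s$ edges, so $e=\sum_i\lvert\support(\sigma_i)\rvert$. As a permutation with $r$ cycles (each of size $\ge2$) has complexity $\lvert\support\rvert-r$, summing over $i$ and using complexity-additivity gives $\compl(\sigma)=\sum_i\compl(\sigma_i)=\sum_i\lvert\support(\sigma_i)\rvert-m=e-m$, i.e.\ $e=\compl(\sigma)+m$. The same identity with $\lvert\support(\sigma_i)\rvert\ge 2r_i$ yields $\compl(\sigma_i)\ge r_i$, hence $m\le\sum_i\compl(\sigma_i)=\compl(\sigma)$. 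Finally, every primal vertex lies in at least one cycle, so it has degree $\ge1$ in the bipartite graph $G$, and since every edge has exactly one primal endpoint we get $p\le\sum_{\text{primal }v}\deg(v)=e$. Combining the three estimates, $\lvert V(P)\rvert=p+m\le e+m=\compl(\sigma)+2m\le 3\compl(\sigma)$.

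I do not expect a real obstacle here; the only point requiring a little care is the bookkeeping that a colored cycle graph with pairwise distinct primal colors genuinely qualifies as a ``cycle pattern'' in the paper's sense (e.g.\ that it satisfies properties~(1)--(5) of Lemma~\ref{lem:cycle-patterns}), but this is exactly the content of the note following that lemma, so no additional argument is needed. The whole proof is then just the three elementary inequalities $e=\compl(\sigma)+m$, $m\le\compl(\sigma)$, and $p\le e$.
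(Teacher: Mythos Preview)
Your proof is correct and follows essentially the same route as the paper: take $P=\CG_{c'}(\sigma_1,\dotsc,\sigma_\ell)$ with an injective primal coloring and match via the identity. The only difference is in the vertex count --- the paper bounds the primal vertices by $\lvert\support(\sigma)\rvert\le 2t$ and the cycle-vertices by~$t$ (both by appealing to Lemma~\ref{lem:cycle-forest}), whereas your chain $p\le e=t+m$ and $m\le t$ reaches the same $3t$ bound without actually using the forest structure.
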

\begin{proof}
  Let $t=\compl(\sigma)$. We have $k=\card{\support(\sigma)}\le2t$. Let $c\colon
  V\to[k]$ be a coloring whose restriction to~$\support(\sigma)$ is injective.
  Then $P=\CG_c(\sigma_1,\dotsc,\sigma_\ell)$ is a pattern, as the
  vertices in $V\setminus\support(\sigma)$ are isolated in
  $\CG(\sigma_1,\dotsc,\sigma_\ell)$ by Lemma~\ref{lem:cycle-forest} and thus
  are not contained in~$P$. The same lemma also implies that
  $\sigma_1,\dotsc,\sigma_\ell$ together have at most~$t$ cycles. Thus $P$~has
  at most $3t$~vertices. It remains to observe that $\phi=\id$ is an isomorphism
  from $\CG_c(\sigma_1,\dotsc,\sigma_\ell)$ to~$P$.
\end{proof}

As there are less than~$\mathcal{O}\paren[\big]{(3t)^{3t}}$ forests on $3t$~vertices and
$3t^{2t}$~ways to color them using $2t$~colors, Lemmas~\ref{lem:cycle-patterns},
\ref{lem:pattern-complexity}~and~\ref{lem:small-pattern} imply the following.
\begin{corollary}\label{cor:pattern-sets}
  For any $t\in\Nset$, there is a set~$\mathcal{P}_t$ of
  $t^{\mathcal{O}(t)}$~cycle patterns such that a permutation~$\sigma\in\Sym(V)$
  has complexity~$t$ if and only if it has a complexity-additive decomposition
  $\sigma=\sigma_1\dotsm\sigma_\ell$ that weakly matches a pattern
  in~$\mathcal{P}_t$. Moreover, $\mathcal{P}_t$ can be computed in
  $t^{\mathcal{O}(t)}$~time.
\end{corollary}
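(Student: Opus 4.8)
The plan is to take $\mathcal{P}_t$ to be, essentially, the collection of all colored forests of bounded size that are cycle patterns and whose associated permutation has complexity exactly~$t$, and then to check that this set has the right cardinality, is computable in the claimed time, and characterizes complexity-$t$ permutations through weak matching.

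Concretely, I would first enumerate all forests on at most $3t$ vertices; by a crude counting bound (or Cayley's formula in the tree case) there are $t^{\mathcal{O}(t)}$ of these. For each such forest I would read off the forced bipartition into primal vertices (even distance to a leaf) and cycle-vertices (odd distance to a leaf) and test conditions~(1)--(5) of Lemma~\ref{lem:cycle-patterns}, discarding the forest unless these can be met. The primal vertices carry an injective coloring, which I may fix canonically; the cycle-vertices carry a coloring by~$[\ell]$ for some $\ell\le t$, and there are at most $t^t=t^{\mathcal{O}(t)}$ such colorings to try (subject to condition~(5)). This produces a list of $t^{\mathcal{O}(t)}$ cycle patterns. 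For each cycle pattern~$P$ on the list I would invoke Lemma~\ref{lem:cycle-patterns} to build the permutation~$\sigma_P$ together with a matching decomposition, compute $\compl(\sigma_P)$ in polynomial time, and keep~$P$ in~$\mathcal{P}_t$ precisely when $\compl(\sigma_P)=t$. All of this runs in $t^{\mathcal{O}(t)}$ time, which yields both the computability statement and the size bound.

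For correctness, the backward implication is immediate: if $\sigma\in\Sym(V)$ has a complexity-additive decomposition that weakly matches some $P\in\mathcal{P}_t$, then Lemma~\ref{lem:pattern-complexity} gives $\compl(\sigma)=\compl(\sigma_P)$, and $\compl(\sigma_P)=t$ since $P$ was retained in~$\mathcal{P}_t$. For the forward implication, suppose $\compl(\sigma)=t$ and take any complexity-additive decomposition of~$\sigma$ (for $\sigma\ne\id$ the trivial one with $\ell=1$ already works, or one may use Lemma~\ref{lem:decompose-compl}). By Lemma~\ref{lem:small-pattern} this decomposition matches some cycle pattern~$P$ on at most $3t$ vertices; its primal vertices lie in~$\support(\sigma)$, so they use at most $2t$ colors, and its cycle-vertices use at most~$t$ colors, so a colored forest isomorphic to~$P$ occurs in the enumeration above. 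Applying Lemma~\ref{lem:pattern-complexity} to the matching decomposition of~$\sigma_P$ itself shows $\compl(\sigma_P)=\compl(\sigma)=t$, so~$P$ survives the filtering step and belongs to~$\mathcal{P}_t$; since matching implies weak matching, $\sigma$ weakly matches a pattern in~$\mathcal{P}_t$.

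The step requiring the most care is the bookkeeping in the forward direction: one must confirm that the pattern produced by Lemma~\ref{lem:small-pattern} is, up to color-preserving isomorphism, one of the patterns actually enumerated (correct vertex count and number of colors of each type) and that it is not thrown away by the $\compl(\sigma_P)=t$ filter, the latter being exactly what Lemma~\ref{lem:pattern-complexity} delivers. A secondary point is to phrase the definition of~$\mathcal{P}_t$ so that it does not depend on~$V$: weak matching refers only to colorings of~$V$ and surjective homomorphisms onto~$P$, so the patterns can be stored abstractly and one fixed set~$\mathcal{P}_t$ serves every ground set.
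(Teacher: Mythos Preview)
Your proposal is correct and follows essentially the same approach as the paper: enumerate all colored forests on at most $3t$ vertices that satisfy the cycle-pattern conditions, retain those whose associated permutation~$\sigma_P$ has complexity exactly~$t$, and invoke Lemmas~\ref{lem:cycle-patterns}, \ref{lem:pattern-complexity} and~\ref{lem:small-pattern} for the two directions of the equivalence. The paper compresses all of this into the single sentence preceding the corollary; in particular you make explicit the filtering by $\compl(\sigma_P)=t$, which the paper leaves implicit but clearly needs (cf.\ the proof of Theorem~\ref{thm:exact-complexity}).
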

For a pattern~$P$, let~$P_i$ denote the subgraph of~$P$ induced by the
cycle-vertices of color~$i$ and their neighbors. A
permutation~$\sigma\in\Sym(V)$ and a coloring $c\colon V\to[k]$ \emph{realize
  color~$i$ of~$P$} if there is an isomorphism~$\phi$ from~$\CG_c(\sigma)$
to~$P_i$ that preserves colors of primal vertices.

\begin{lstlisting}[caption={$\algo{ExactComplexityIso}_d(X,Y,t)$},label=alg:exact-complexity]
Input: Two hypergraphs $X$ (*and*) $Y$ on vertex set $V$ with hyperedge size bounded by $d$, (*and*) $t\in\Nset$
Output: (*An isomorphism~$\sigma$ from~$X$ to~$Y$ with $\compl(\sigma)=t$, or $\bot$ if none exists*)
$A\assign\set[\big]{\sigma\in\Aut(X)}{\sigma\text{ has minimal complexity in }\Aut(X)\text{ and }\card{\support(\sigma)}\le 2t}$ //~\mbox{see~\cite[Algorithm~3]{AKKT17}}
if $X=Y$ then $I\assign A$ else
    $I\assign\set[\big]{\sigma\in\Iso(X,Y)}{\sigma\text{ has minimal complexity in }\Iso(X,Y)\text{ and }\card{\support(\sigma)}\le 2t}$ //~\mbox{see~\cite[Algorithm~2]{AKKT17}}
for $P\in\mathcal{P}_t$ do // see Corollary~\ref{cor:pattern-sets}
    $k\assign{}$the number of primal vertices in $P$
    $\ell\assign{}$the number of colors of (*cycle-vertices*) in $P$
    for $h\in\mathcal{H}_{V,k}$ do // $\mathcal{H}_{V,k}$ is the perfect family of hash functions $h\colon V\to [k]$ from~\cite{FKS84}
        if there are $\sigma_1,\dotsc,\sigma_{\ell-1}\in A$ (*and*) $\sigma_\ell\in I$ s.t. $(\sigma_i,h)$ realize color $i$ of $P$ then
            return $\sigma=\sigma_1\dotsm\sigma_\ell$
return $\bot$
\end{lstlisting}

\begin{theorem}\label{thm:exact-complexity}
  Given two hypergraphs~$X$~and~$Y$ of hyperedge size at most~$d$ and
  $t\in\Nset$, the algorithm $\algo{ExactComplexityIso}_d(X,Y,t)$ finds
  $\sigma\in\Iso(X,Y)$ with $\compl(\sigma)=t$ (or determines that there is
  none) in $\mathcal{O}\paren[\big]{(dt)^{\mathcal{O}(t^2)}\poly(N)}$ time.
\end{theorem}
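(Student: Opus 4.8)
The plan is to establish correctness of $\algo{ExactComplexityIso}_d$ (soundness and completeness) and then to bound its running time. For \emph{soundness}, suppose the algorithm returns $\sigma=\sigma_1\dotsm\sigma_\ell$ for a pattern $P\in\mathcal{P}_t$ and a hash function~$h$. Since $\sigma_1,\dots,\sigma_{\ell-1}\in A\subseteq\Aut(X)$ and $\sigma_\ell\in I\subseteq\Iso(X,Y)$, the product $\sigma$ lies in $\Aut(X)\cdot\Iso(X,Y)=\Iso(X,Y)$. To pin down its complexity, I would glue the $\ell$ colour-realizing isomorphisms $\CG_h(\sigma_i)\cong P_i$ along their shared primal vertices --- this is well defined because distinct primal vertices of~$P$ carry distinct colours --- into a surjective colour-preserving homomorphism from $\CG_h(\sigma_1,\dots,\sigma_\ell)$ onto~$P$ that identifies only primal vertices. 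Using that~$P$ is a forest and satisfies conditions (1)--(5) of Lemma~\ref{lem:cycle-patterns}, I would argue that any two identified primal vertices lie in different $\sigma$-orbits, so that the refinement into cycles of the decomposition $\sigma=\sigma_1\dotsm\sigma_\ell$ weakly matches~$P$; then Lemma~\ref{lem:pattern-complexity} gives $\compl(\sigma)=\compl(\sigma_P)=t$, the last equality holding because $P\in\mathcal{P}_t$.

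For \emph{completeness}, given $\sigma\in\Iso(X,Y)$ with $\compl(\sigma)=t$ (the case $t=0$, where we just need $X=Y$, being immediate), I would apply Lemma~\ref{lem:decompose-compl} to the coset $\Iso(X,Y)=\Aut(X)\pi$ to obtain a complexity-additive decomposition $\sigma=\tau_1\dotsm\tau_\ell$ with $\tau_1,\dots,\tau_{\ell-1}$ of minimal complexity in $\Aut(X)$, $\tau_\ell$ of minimal complexity in $\Iso(X,Y)$, and $\support(\tau_i)\subseteq\support(\sigma)$. Since $\compl(\tau_i)\le\compl(\sigma)=t$, we get $\card{\support(\tau_i)}\le\compl(\tau_i)+1\le 2t$, so $\tau_i\in A$ for $i<\ell$ and $\tau_\ell\in I$. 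By Lemma~\ref{lem:small-pattern} this decomposition matches a cycle pattern on at most $3t$ vertices, which by Lemma~\ref{lem:pattern-complexity} has $\compl(\sigma_P)=t$ and so, after relabelling colours, is a pattern $P\in\mathcal{P}_t$ whose cycle-vertices use exactly $\ell$ colours. Choosing $h\in\mathcal{H}_{V,k}$ injective on $\support(\sigma)$ --- possible since the number~$k$ of primal vertices of~$P$ equals $\card{\support(\sigma)}$ --- each pair $(\tau_i,h)$ then realizes colour~$i$ of~$P$, so the test in the inner loop succeeds in this iteration and the algorithm returns an isomorphism (if it has not done so earlier).

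For the \emph{running time}, the sets $A$ and~$I$ of minimal-complexity automorphisms, resp.\ isomorphisms, of support at most~$2t$, together with bounds on $\card A$ and $\card I$, are computed in $(dt)^{\mathcal{O}(t^2)}\poly(N)$ time by the algorithms of~\cite{AKKT17}; this is the only place the hyperedge bound~$d$ enters. By Corollary~\ref{cor:pattern-sets}, $\mathcal{P}_t$ has $t^{\mathcal{O}(t)}$ patterns and is computable in $t^{\mathcal{O}(t)}$ time, and each pattern has $k,\ell\le 3t$, so $\mathcal{H}_{V,k}$ consists of $2^{\mathcal{O}(t)}\log^2 n$ functions. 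For fixed $P$ and~$h$ the $\ell$ colour constraints are independent, so it suffices to check, for each colour~$i$, whether some $\sigma_i$ in $A$ (or in~$I$) satisfies the bounded-size graph isomorphism $\CG_h(\sigma_i)\cong P_i$; this costs $\card A\cdot\poly(N)=(dt)^{\mathcal{O}(t^2)}\poly(N)$. Multiplying the factors yields the claimed $\mathcal{O}\paren[\big]{(dt)^{\mathcal{O}(t^2)}\poly(N)}$ bound.

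The step I expect to be the main obstacle is the soundness argument: verifying that gluing the colour-realizations produces a genuine \emph{weak} match --- equivalently, that whenever~$h$ collides on the union of the supports, the two identified primal vertices are forced into different $\sigma$-orbits --- so that Lemma~\ref{lem:pattern-complexity} yields complexity \emph{exactly}~$t$ rather than merely an upper bound; this is precisely where the forest structure of the pattern and conditions (1)--(5) are needed. A lesser nuisance is the colour-relabelling bookkeeping in the completeness direction, aligning a function from the perfect hash family with the fixed colouring carried by the pattern in $\mathcal{P}_t$.
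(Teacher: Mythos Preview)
Your proposal follows essentially the same route as the paper's proof: decompose via Lemma~\ref{lem:decompose-compl} for completeness, glue the colour-realizing isomorphisms into a weak match and invoke Lemma~\ref{lem:pattern-complexity} for soundness, and bound the loops for the running time. Two small slips to fix: in completeness, the inequality $\card{\support(\tau_i)}\le\compl(\tau_i)+1$ is backwards (a product of two disjoint transpositions has support~$4$ and complexity~$2$); use instead $\support(\tau_i)\subseteq\support(\sigma)$ from Lemma~\ref{lem:decompose-compl} together with $\card{\support(\sigma)}\le 2\compl(\sigma)=2t$. In soundness, before appealing to Lemma~\ref{lem:pattern-complexity} you must check that $\sigma=\sigma_1\dotsm\sigma_\ell$ is complexity-additive; the paper does this by observing that the glued map is injective on cycle-vertices and that $P$ is a forest, which forces $\CG(\sigma_1,\dotsc,\sigma_\ell)$ to be a forest as well---this is exactly the step you correctly flag as the main obstacle.
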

\begin{proof}
  Suppose there is some $\sigma\in\Iso(X,Y)$ with $\compl(\sigma)=t$.
  Lemma~\ref{lem:decompose-compl} gives the complexity-additive decomposition
  $\sigma=\sigma_1\dotsm\sigma_\ell$ into minimal-complexity permutations
  $\sigma_1,\dotsc,\sigma_{\ell-1}\in\Aut(X)$ and $\sigma_\ell\in\Iso(X,Y)$; all
  of them have complexity at most~$t$.
  By the correctness of the algorithms from~\cite{AKKT17}, we have
  $\sigma_1,\dotsc,\sigma_{\ell-1}\in A$ and $\sigma_\ell\in I$. As
  $\mathcal{H}_{V,k}$ is a perfect hash family, it contains some function~$h$
  whose restriction to~$\support(\sigma)$ is injective. Then
  $\CG_h(\sigma_1,\dotsc,\sigma_\ell)$ is isomorphic to some $P\in\mathcal{P}_t$
  by Corollary~\ref{cor:pattern-sets}. Thus $(\sigma_i,h)$ realize color~$i$
  of~$P$, for $1\le i\le\ell$, so the algorithm does not return~$\bot$.

  Now suppose that the algorithm returns $\sigma=\sigma_1\dotsm\sigma_\ell$ with
  $\sigma_1,\dotsc,\sigma_{\ell-1}\in A\subseteq\Aut(X)$ and $\sigma_\ell\in
  I\subseteq\Iso(X,Y)$. This clearly implies $\sigma\in\Iso(X,Y)$. To show
  $\compl(\sigma)=t$, we observe that the algorithm only returns~$\sigma$ if
  there is a pattern $P\in\mathcal{P}_t$ whose cycle-vertices have
  $\ell$~colors and that contains $k$~primal vertices, and a hash function
  $h\in\mathcal{H}_{V,k}$ such that $(\sigma_i,h)$ realize color~$i$ of~$P$, for
  $1\le i\le\ell$. In particular, there is an isomorphism~$\phi_i$
  from~$\CG_h(\sigma_i)$ to~$P_i$ that preserves colors of primal vertices. As
  the primal vertices of~$P$ all have different colors and as $P$~is a forest by
  Lemma~\ref{lem:cycle-forest}, it follows that the decomposition
  $\sigma=\sigma_1\dotsm\sigma_\ell$ is complexity-additive. Now consider the
  function $\phi=\bigcup_{i=1}^\ell\phi_i$; it is well-defined, as
  $v\in\support(\phi_i)\cap\support(\phi_j)$ implies $\phi_i(v)=\phi_j(v)$
  because $P$~contains only one primal vertex of color~$h(v)$. It is surjective,
  as every vertex of~$P$ occurs in at least one~$P_i$. It is a homomorphism from
  $P_\sigma=\CG_h(\sigma_1,\dotsc,\sigma_\ell)$ to~$P$, as every edge occurs in
  the support of one of the isomorphisms~$\phi_i$. Also, $\phi(u)=\phi(v)$ for
  $u\ne v$ implies that $u$~and~$v$ are in different connected components
  of~$P_\sigma$, as $P$~is an forest; consequently $u$~and~$v$ are in different
  orbits of~$\sigma$. Thus $\sigma=\sigma_1\dotsm\sigma_\ell$ weakly
  matches~$P$. By Lemma~\ref{lem:pattern-complexity} it follows that
  $\compl(\sigma)=t$.

  It remains to analyze the runtime. The algorithms used to compute~$A$~and~$I$ each take
  $\mathcal{O}\paren[\big]{(dt)^{\mathcal{O}(t^2)}\poly(N)}$ time~\cite{AKKT17}.
  The pattern set~$\mathcal{P}_t$ can be computed in $t^{\mathcal{O}(t)}$~time
  by Corollary~\ref{cor:pattern-sets}.
  As $k\le2t$, the perfect hash
  family~$\mathcal{H}_{V,K}$ has size $2^{\mathcal{O}(t)}\log^2n$. As $\ell\le
  t$, this gives a total runtime of
  $\mathcal{O}\paren[\big]{(dt)^{\mathcal{O}(t^2)}\poly(N)}$.
\end{proof}

\section{Colored Graph Automorphism}\label{sec:red-blue}

In~\cite{AKKT17} we showed that the following parameterized version of Graph
Automorphism is \Wone-hard. It was first defined in~\cite{DoFe} and is a
generalization of the problem studied by Schweitzer~\cite{Schw}.

\begin{problemdef}{$\colGA$}
  Given a graph~$X$ with its vertex set partitioned as $\RED\cup\BLUE$, and a
  parameter~$k$, decide if there is a partition-preserving automorphism that
  moves exactly~$k$ \BLUE vertices.
\end{problemdef}

For an automorphism $\pi\in\Aut(X)$, we will refer to the number of \BLUE
vertices moved by~$\pi$ as the \BLUE~weight of~$\pi$.
The graphs used in the \Wone-hardness reduction in~\cite{AKKT17} are designed to
simulate the Circuit Value Problem for Boolean inputs of Hamming weight~$k$.
\BLUE vertices are used at the input level and are partitioned into color
classes of size~$2$ (the pair of nodes in each color class can flip or not to
simulate a Boolean value). Vertices in the graph gadgets used for simulating the
circuit gates are the \RED vertices. It turns out that in the \RED part the
color classes are of size at most~4.
In this section, we show that $\colGA$ is in~\FPT when restricted to colored
graphs where the \RED color classes have size at most~$3$.

Given an input instance $X=(V,E)$ with vertex partition $V=\RED\cup\BLUE$ such
that \RED is refined into color classes of size at most~$3$ each, our
algorithm proceeds as follows.

\begin{description}

 \item[Step 1: color-refinement.]
  
  $X$ already comes with a color classification of vertices (\RED and \BLUE, and
  within \RED color classes of size at most 3 each; within \BLUE there may be
  color classes of arbitrary size). The color refinement procedure keeps
  refining the coloring in steps until no further refinement of the vertex color
  classes is possible. In a refinement step, if two vertices have identical
  colors but differently colored neighborhoods (with the multiplicities of
  colors counted), then these vertices get new different colors.

  At the end of this refinement, each color class induces a regular graph, and
  each pair of color classes induce a semiregular bipartite graph.

 \item[Step 2: local complementation.] 

  We complement the graph induced by a color class if this reduces the number of
  its edges; this does not change the automorphism group of~$X$. Similarly, we
  complement the induced bipartite graph between two color classes if this
  reduces the number of its edges.

  Now each \RED color class induces the empty graph. Similarly, for
  $b\in\{2,3\}$, the bipartite graph between any two color classes of size~$b$
  is empty or a perfect matching. (Note that this does not necessarily hold for
  $b\ge4$.) Color refinement for graphs of color class size at most~$3$ has been
  used in earlier work~\cite{IL,JKMT}.

  Let $C\subset\RED$ and $D\subset\BLUE$ be color classes after
  Step~1. Because of the complementations we have applied, $\card{C}=1$ implies
  that $X[C,D]$ is empty, and if $\card{C}\in\set{2,3}$ then $X[C,D]$ is either
  empty or the degree of each $D$-vertex in $X[C,D]$ is~$1$.

 \item[Step 3: fix vertices that cannot move.]

  For any red color class $C\subset\RED$ whose elements have more than~$k$ \BLUE
  neighbors, give different new colors to each vertex in~$C$ (because of Step~2,
  each non-isolated \RED vertex is in a color class with more than one
  vertex). Afterwards, rerun Steps~$1$~and~$2$ so we again have a stable
  coloring.

  Fixing the vertices in~$C$ does not lose any automorphism of~$X$ that has
  \BLUE-weight at most~$k$. Indeed, as every \BLUE vertex has at most one
  neighbor in~$C$, any automorphism that moves some~$v\in C$ has to move all
  (more than~$k$) \BLUE neighbors of~$v$.

 \item[Step 4: remove edges in the red part.]

  We already observed that each \RED color class induces the empty graph.
  Let~$\mathcal{X}$ be the graph whose vertices are the \RED color classes,
  where two of them are adjacent iff there is a perfect matching between them
  in~$X$. For each $b\in\{1,2,3\}$, the \RED color classes of size~$b$ get
  partitioned into components of~$\mathcal{X}$.
  
  We consider each connected component~$\mathcal{C}$ of~$\mathcal{X}$ that
  consists of more than one color class. Let~$X'$ be the subgraph of~$X$ induced
  by vertices in~$\bigcup\mathcal{C}$ and their neighbors in~\BLUE. Because of
  Step~3, the graph~$X'$ has color class size at most~$3k$, so we can compute
  its automorphism group $H=\Aut(X')$ in $2^{\mathcal{O}(k^2)}\poly(N)$
  time~\cite{FHL80}. We distinguish several cases based on the action of~$H$ on
  an arbitrary color class $C\in\mathcal{C}$:
  \begin{description}
   \item[Case 1:] If $H(C)$ is not transitive, we split the color class~$C$ into
    the orbits of~$H(C)$ and start over with Step~1.
   \item[Case 2:] If $H(C)=\Sym(C)$, we drop all vertices
    in~$\paren[\big]{\bigcup\mathcal{C}}\setminus C$ from~$X$. And for each
    \BLUE color class~$D$ that has neighbors in at least one $C'\in\mathcal{C}$,
    we replace the edges between a vertex~$u\in D$ and~$\bigcup\mathcal{C}$ by
    the single edge $(u,v)$, where $v$~is the vertex in~$C$ that is reachable
    via the matching edges from the neighbor of~$u$ in~$C'$.
   \item[Case 3:] If $H(C)$ is generated by a $3$-cycle $(v_1v_2v_3)$, we first
    proceed as in Case~2. Additionally, we add directed edges within each \BLUE
    color class~$D$ that now has neighbors in~$C$. Let $D_i\subset D$ be the
    neighbors of~$v_i$. We add \emph{directed edges} from all vertices in~$D_i$
    to all vertices in~$D_{(i+1)\bmod 3}$ and color these directed edges
    by~$C$.
  \end{description}

  After this step, there are no edges induced on the \RED part of~$X$. Moreover,
  we have not changed the automorphisms on the induced subgraph, so the modified
  graph~$X$ still has the same automorphism group as before.

 \item[Step 5: turn red vertices into hyperedges.]

  We encode~$X$ as a hypergraph $X'=(\BLUE\cup\NEW,E')$ in which each vertex in
  \RED is encoded as a hyperedge on the vertex set $\BLUE\cup\NEW$. Let
  $\NEW=\set{v_C}{C\subset\RED\text{ is a color class}}$. Let $v\in
  C\subset\RED$ be any red vertex. We encode~$v$ as the hyperedge
  $e_v=\set[\big]{v_C}\cup\set[\big]{u\in\BLUE}{(v,u)\in E(X)}$.

  In the hypergraph $X'$ we give distinct colors to each vertex in \NEW in order
  to ensure that each color class
  $\{v_{\mathcal{C},1},v_{\mathcal{C},2},v_{\mathcal{C},3}\}$ in \RED is
  preserved by the automorphisms of~$X'$.

  Clearly, there is a 1-1 correspondence between the color-reserving
  automorphisms of~$X$ and those of~$X'$. Note that the hyperedges of~$X'$ have
  size bounded by $k+1$, as each \RED vertex in~$X$ has at most~$k$ \BLUE
  neighbors after Step~3.
  
 \item[Step 6: bounded hyperedge size automorphism.]

  We seek a weight~$k$ automorphism of~$X'$ using the algorithm
  of~\cite[Corollary~6.4]{AKKT17};\footnote{There is a caveat that in addition
    to hyperedges in the graph $X'[\BLUE]$ we also have colored directed edges.
    However, the algorithm of~\cite[Corollary~6.4]{AKKT17} needs only minor
    changes to handle this.} this is possible in
  $d^{\mathcal{O}(k)}2^{\mathcal{O}(k^2)}\poly(N)$ time.
\end{description}

\noindent
This algorithm gives us the following.

\begin{theorem}\label{thm:red-blue-gi-color-classes}
  The above algorithm solves $\colGA$ when the \RED part of the input graph is
  refined in color classes of size at most~3. It runs in
  $d^{\mathcal{O}(k)}2^{\mathcal{O}(k^2)}\poly(N)$ time.
\end{theorem}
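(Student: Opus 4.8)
The plan is to verify that each of the six steps of the algorithm (i)~preserves the property of having a \BLUE-weight-exactly-$k$ partition-preserving automorphism, (ii)~keeps all color classes in \RED of size at most $3$ (or eventually eliminates edges in \RED altogether), and (iii)~runs in the stated time; then invoking the bounded-hyperedge automorphism algorithm of~\cite[Corollary~6.4]{AKKT17} in Step~6 closes the argument. Since the text already gives the justification inline for each step, the proof is really a matter of assembling these observations and bounding the iteration count.

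\medskip

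\textbf{Correctness.} I would argue step by step. Steps~1 and~2 (color refinement and local complementation) do not change $\Aut(X)$ at all, so the set of valid automorphisms is untouched; after Step~2 each \RED color class induces the empty graph and, for $b\in\{2,3\}$, every edge set $X[C,D]$ between a size-$b$ \RED class $C$ and a \BLUE class $D$ is empty or a matching in which each $D$-vertex has degree~$1$. Step~3 fixes vertices of a \RED class $C$ whose elements have more than $k$ \BLUE neighbors; as argued in the excerpt, moving any $v\in C$ forces moving all of its $>k$ \BLUE neighbors (degree-$1$ property), so no automorphism of \BLUE-weight $\le k$ is lost — in particular none of \BLUE-weight exactly~$k$. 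Step~4 computes $H=\Aut(X')$ on the small-color-class subgraph $X'$ (color class size $\le 3k$ by Step~3) and then, depending on the action $H(C)$ on a color class $C$ of a multi-class component $\mathcal{C}$ of $\mathcal{X}$: Case~1 merely refines colors (so we loop back to Step~1, keeping $\Aut(X)$); Cases~2 and~3 delete the other classes of $\mathcal{C}$ and re-route \BLUE--\RED edges (Case~2) or additionally insert colored directed edges inside the relevant \BLUE classes (Case~3). The key claim to verify here is that these rewrites induce an isomorphism of automorphism groups: any automorphism of the modified graph extends uniquely to one of the original (since within $\mathcal{C}$ the action is pinned down by $H(C)\in\{\Sym(C),\langle(v_1v_2v_3)\rangle\}$, and the matchings/directed-cycle gadget faithfully encode the permitted $3$-cycle action), and vice versa; crucially this extension fixes \BLUE pointwise, so \BLUE-weight is preserved. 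After Step~4 the \RED part carries no edges. Step~5 recodes each \RED vertex $v\in C$ as a hyperedge $e_v=\{v_C\}\cup\{u\in\BLUE:(v,u)\in E\}$ with \NEW vertices individually colored; this is a standard incidence-type encoding giving a bijection between color-preserving automorphisms of $X$ and of the hypergraph $X'$, and by Step~3 every such hyperedge has size $\le k+1$. Step~6 then applies the weight-$k$ hypergraph automorphism algorithm; the caveat (colored directed edges from Case~3 alongside the hyperedges) is handled by the noted minor modification of~\cite[Corollary~6.4]{AKKT17}. Chaining these equivalences, $X$ has a partition-preserving automorphism of \BLUE-weight exactly~$k$ iff the final hypergraph instance has a weight-$k$ automorphism, which Step~6 decides.

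\medskip

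\textbf{Running time.} Color refinement and local complementation are polynomial. Each invocation of $\Aut(X')$ in Step~4 runs in $2^{\mathcal{O}(k^2)}\poly(N)$ by~\cite{FHL80}, since color classes there have size $\le 3k$. The only subtlety is bounding how often we restart at Step~1: Step~3 can only add (never remove) color distinctions, and Case~1 of Step~4 strictly refines a color class; both can happen at most $\mathcal{O}(n)$ times, so the restart loop terminates after polynomially many rounds. Step~6 runs in $d^{\mathcal{O}(k)}2^{\mathcal{O}(k^2)}\poly(N)$ with $d=k+1$, which dominates; multiplying by the polynomial restart count absorbs into $\poly(N)$, giving the claimed $d^{\mathcal{O}(k)}2^{\mathcal{O}(k^2)}\poly(N) = 2^{\mathcal{O}(k^2)}\poly(N)$ bound.

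\medskip

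\textbf{Main obstacle.} The delicate part is Step~4, Case~3: showing that replacing the multi-class component $\mathcal{C}$ by a single class $C$ plus \emph{colored directed $3$-cycle edges} inside the incident \BLUE classes exactly captures the automorphisms whose action on $C$ is a power of $(v_1v_2v_3)$ — neither losing automorphisms nor creating spurious ones — and that this local surgery commutes with the (possibly many) further refinement rounds without ever pushing a \RED color class above size~$3$. I would treat this by a careful case analysis on $|C|\in\{2,3\}$ and on the structure of $\mathcal{X}$ restricted to $\mathcal{C}$, using the fact that after Steps~1--2 the only connections within $\mathcal{C}$ are perfect matchings, so $H$ acts on $\mathcal{C}$ through a single "diagonal" copy of $H(C)$; once $H(C)$ is transitive (Case~1 having been excluded) it is either $\Sym(C)$ or, for $|C|=3$, the cyclic group of order~$3$, and in each subcase the rewrite is seen to be automorphism-preserving and color-class-size-preserving.
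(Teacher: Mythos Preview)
Your proposal is correct and takes essentially the same approach as the paper: the paper gives no separate proof of this theorem at all, merely stating ``This algorithm gives us the following'' after the step-by-step description, so the intended proof is precisely the collection of inline justifications you have assembled. Your identification of Step~4, Case~3 as the point requiring the most care, and your argument via the diagonal action of $H(C)$ along the perfect matchings in~$\mathcal{C}$, matches (and in fact spells out more fully) what the paper leaves as the one-line claim that ``we have not changed the automorphisms on the induced subgraph.''
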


\begin{acknowledgments}
  We thank the anonymous
  IPEC
  referees for their valuable comments.
  
  This work was supported by the Alexander von Humboldt Foundation in its
  research group linkage program. The second and third authors are supported by
  DFG grant~KO~1053/7-2. The fourth author is supported by DFG grant~TO~200/3-2.
\end{acknowledgments}

\end{document}